\newtheorem{lemma}{Lemma}
\newtheorem{theorem}{Theorem}
\newtheorem{remark}{Remark}
\newtheorem{corollary}{Corollary}
\newcommand\rk{\normalfont{\mbox{rk}}}
\newcommand\aaseq{\stackrel{\mbox{\tiny a.a.s.}}{=}}
\newcommand\aasgeq{\stackrel{\mbox{\tiny a.a.s.}}{\geq}}
\newcommand\aasg{\stackrel{\mbox{\tiny a.a.s.}}{>}}
\newcommand\aasleq{\stackrel{\mbox{\tiny a.a.s.}}{\leq}}
\newcommand\aasl{\stackrel{\mbox{\tiny a.a.s.}}{<}}
\newcommand{\FqV}{{\bf v}}
\newcommand{\FpV}{{\bf V}}
\newcommand{\Fqv}{v}
\newcommand{\Fpv}{V}
\newcommand{\Fqr}{r}
\newcommand{\Fpr}{R}
\newcommand{\Fqs}{\theta}
\newcommand{\Fqt}{t} 
\newcommand{\Fpt}{T}
\newcommand{\FpT}{{\bf T}}
\newcommand{\Fqh}{h} 
\newcommand{\Fph}{H} 
\newcommand{\FqH}{{\bf h}} 
\newcommand{\FpH}{{\bf H}}
\newcommand{\Fqf}{f} 
\newcommand{\Fpf}{F} 
\newcommand{\FqX}{{\bf x}} 
\newcommand{\FpX}{{\bf X}} 
\newcommand{\Fqx}{x} 
\newcommand{\Fpx}{X} 
\newcommand{\FqXL}{{\bf x}}
\newcommand{\FqWL}{{\bf w}}
\newcommand{\FqU}{{\bf u}}
\newcommand{\Insc}{\Upsilon}
\newcommand{\Un}{\Pi}
\newcommand{\Sp}{\Gamma}
\title{On the Generic Capacity of $K$-User Symmetric Linear Computation Broadcast}
\author{Yuhang Yao, Syed A. Jafar\\
{\small Center for Pervasive Communications and Computing (CPCC)}\\
{\small University of California Irvine, Irvine, CA 92697}\\
{\small \it Email: \{yuhangy5, syed\}@uci.edu}
}
\date{}     
\begin{document}
\maketitle

\begin{abstract}
Linear computation broadcast (LCBC) refers to a setting with $d$ dimensional data stored at a central server, where $K$ users, each with some prior  linear side-information, wish to compute various linear combinations of the data. For each computation instance, the data is represented as a $d$-dimensional vector with elements  in a finite field $\mathbb{F}_{p^n}$ where $p^n$ is a power of a prime. The computation is to be performed many times, and the goal is to determine the minimum amount of information per computation instance that  must be broadcast  to satisfy all the users. The reciprocal of the optimal broadcast cost per computation instance is the capacity of LCBC. The capacity is known for up to $K=3$ users. Since LCBC includes index coding as a special case, large $K$ settings of LCBC are at least as hard as the index coding problem. As such the general LCBC problem is beyond our reach and we do not pursue it. Instead of the general setting (\emph{all} cases), by focusing on the \emph{generic} setting  (\emph{almost all} cases) this work shows that the generic capacity of the symmetric LCBC (where every user has $m'$ dimensions of side-information and $m$ dimensions of demand) for large number of users ($K \geq d$ suffices) is $C_g=1/\Delta_g$, where  $\Delta_g=\min\left\{\max\{0,d-m'\}, \frac{dm}{m+m'}\right\}$, is the broadcast cost that is both  achievable and unbeatable asymptotically almost surely for large $n$, among all LCBC instances with the given parameters $p,K,d,m,m'$. Relative to baseline schemes of random coding or separate transmissions, $C_g$ shows an extremal gain by a factor of $K$ as a function of number of users, and by a factor of $\approx d/4$ as a function of data dimensions, when optimized over remaining parameters.  For arbitrary number of users, the generic capacity of the symmetric LCBC is characterized within a factor of $2$.\end{abstract}

\allowdisplaybreaks
\section{Introduction}
Recent observations of `megatrends' in the communication industry indicate that the number of devices connected to the internet is expected to cross 500 billion, approaching 60 times the estimated human population over the next decade \cite{Samsung_6G}. With machines set to become the dominant users of future communication networks, along with accompanying developments in artificial intelligence and virtual/augmented/mixed reality applications, a major paradigm shift is on the horizon where communication networks increasingly take on a new role, as \emph{computation networks}. The changing paradigm brings with it numerous challenges and opportunities. 

One of the distinguishing features of computation networks is their  algorithmic nature, which creates \emph{predictable} dependencies and side-information structures. To what extent can such structures be exploited for gains in communication efficiency? Answering this question requires an understanding of the capacity of computation networks.

The study of the capacity of computation networks has a rich history in information theory, spanning a  variety of ideas and directions that include zero error capacity and confusability graphs \cite{Shannon56}, graph entropy \cite{Korner_entropy, Witsenhausen}, conditional graph entropy \cite{Orlitsky_Roche}, multiterminal source coding \cite{Slepian_Wolf}, encoding of correlated sources \cite{Doshi_Shah_Medard_Effros, Feizi_Medard, Choi_Pradhan, Han_Costa}, sum-networks \cite{Korner_Marton_sum,Rai_Dey,Tripathy_Ramamoorthy},  computation over acyclic directed networks\cite{Huang_Tan_Yang_Guang,Guang_Yeung_Yang_Li}, compute-and-forward \cite{Nazer_Gastpar_Compute}, federated learning \cite{Federated_Learning}, private computation \cite{Sun_Jafar_PC,Heidarzadeh_Sprintson_Computation}, coded computing \cite{Cadambe_Grover_Tutorial,Yu_Lagrange,CodedComputing_Survey,CompCommTradeoff}, and distributed matrix multiplication \cite{Yu_Maddah-Ali_Avestimehr,Dutta_Fahim_Haddadpour, GPolyDot,  Jia_Jafar_SDMM}, to name a few. However, due to the enormous scope, hardness, and inherent combinatorial complexity of such problems, a \emph{cohesive} foundation is yet to emerge. 

Following the ground-up approach of classical network information theory which focuses on elemental scenarios, and taking cues from systems theory that builds on an elegant foundation of \emph{linear} systems, it is conceivable that a cohesive foundation could emerge from the study of the building blocks of \emph{linear computation networks}. Linear computation networks are characterized by the presence of side-information and demands that are linear functions of the data. Linear side information and dependencies are quite valuable as theoretical abstractions because in principle they allow the study of a complex linear computation network by breaking it down into tractable components, while retaining some of the critical relationships between the components in the form of side-information. For example, multi-round/multi-hop linear computation networks may be optimized one-round/hop at a time, by accounting for the information from other rounds/hops as side-information. 

\begin{figure}[!h]
\center
\def\colh{white}
\def\colw{white}
\begin{tikzpicture}[yscale=0.9, xscale=0.9]
\def\h{1.1}
\node at (-1.6,-1.5)  [draw=black, rounded corners, fill=\colw] (S)  { ${\bf S}$};
\node[draw=black, fill=\colh,  rounded corners,minimum width=3cm, rotate=90, above left=1.5cm and 0.3cm of S] (T)  [align=center]  {\footnotesize Server has data   $\FqXL\in\mathbb{F}_q^{d\times L}$};
\node[draw=black, fill=\colh, rounded corners] (U1) at (1,0*\h) [align=center] {\footnotesize User $1$ has $\FqWL_1'$\\[-0.1ex] \footnotesize $=\FqXL^T\FqV_1'\in\mathbb{F}_q^{L\times m_1'}$};
\node[draw=black, fill=\colh, rounded corners] (U2) at (1,-1.2*\h)  [align=center] {\footnotesize User $2$ has $\FqWL_2'$\\[-0.1ex] \footnotesize $=\FqXL^T\FqV_2'\in\mathbb{F}_q^{L\times m_2'}$};
\node[draw=black, fill=\colh, rounded corners] (U3) at (1,-3*\h)  [align=center] {\footnotesize User $K$ has $\FqWL_K'$\\[-0.1ex] \footnotesize $=\FqXL^T\FqV_K'\in\mathbb{F}_q^{L\times m_K'}$};
\node[draw=black, right=0.4cm of U1, align=center, minimum width=2.8cm,  rounded corners, fill=\colw](W1)  {\footnotesize Wants $\FqWL_1$\\[-0.1ex] \footnotesize $=\FqXL^T\FqV_1\in\mathbb{F}_q^{L\times m_1}$};
\node[draw=black, right=0.4cm of U2, align=center,  minimum width=2.8cm, rounded corners, fill=\colw](W2) {\footnotesize Wants $\FqWL_2$\\[-0.1ex] \footnotesize $=\FqXL^T\FqV_2\in\mathbb{F}_q^{L\times m_2}$};
\node[draw=black, right=0.4cm of U3, align=center, minimum width=2.8cm, rounded corners, fill=\colw](W3)  {\footnotesize Wants $\FqWL_K$\\[-0.1ex] \footnotesize $=\FqXL^T\FqV_K\in\mathbb{F}_q^{L\times m_K}$};
\node[below=-0.2cm of W2] {\large $\vdots$};
\node[below=-0.2cm of U2] {\large $\vdots$};
\draw[thick, ->] (S.west-|T.south) -- (S.west);
\draw[thick, ->] (U1) -- (W1);
\draw[thick, ->] (U2) -- (W2);
\draw[thick, ->] (U3) -- (W3);
\draw[thick, ->] (S.east) -- (U1.west);
\draw[thick, ->] (S.east) -- (U2.west);
\draw[thick, ->] (S.east) -- (U3.west);
\end{tikzpicture}

\caption{\small LCBC$\big({\Bbb F}_q, \FqV_{[K]}, \FqV'_{[K]}  \big)$ with batch-size $L$. $q=p^n$ is a power of a prime. The coefficient matrices $\FqV_k\in\mathbb{F}_q^{d\times m_k}$, $\FqV_k'\in\mathbb{F}_q^{d\times m_k'}$ for all $k\in[K]$ specify the desired computations and side-informations, respectively. }\label{fig:LCBC}
\end{figure}
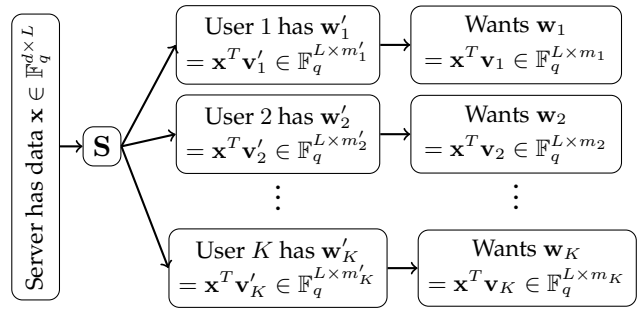

As a fundamental building block, the linear computation broadcast (LCBC) problem is introduced in \cite{Sun_Jafar_CBC}. LCBC refers to the setting illustrated in Figure \ref{fig:LCBC}, where $K$ users, each with some prior side-information $(\FqWL_k'=\FqX^T\FqV_k')$ comprised of various linear combinations of  $d$-dimensional data ($\FqX$) over a finite field ($\mathbb{F}_q = \mathbb{F}_{p^n}$), wish to retrieve other individually specialized linear combinations $(\FqWL_k=\FqX^T\FqV_k)$ of the data, with the help of a central server that has all the data. The goal is to determine the minimum amount of information that the central server must broadcast in order to satisfy all the users' computational demands. In addition to its significance as an elemental building block of computational networks, the LCBC setting is remarkably powerful by itself, e.g., it includes index coding \cite{Birk_Kol_SI,Maleki_Cadambe_Jafar,Young_Han_FnT} as a special case, and generalizes linear coded caching  \cite{Yu_Maddah_Avestimehr_exact, Maddah_Ali_Niesen, Wan_2020} to allow arbitrary cached information and demands. The one-to-many topology represented by LCBC arises naturally in any context where distributed nodes coordinate with each other \cite{Cuff_coordination,Fei_Chen_Wang_Jafar} with the help of a master node. Such scenarios may be pervasive in the future as interactive networked VR environments \cite{Networked_VR} become commonplace.

The capacity of LCBC is characterized for $K=2$ users in \cite{Sun_Jafar_CBC}. More recently, in \cite{Yao_Jafar_3LCBC} the capacity is fully characterized for the $K=3$ user LCBC. In addition to such efforts that are aimed at small number of users, it is also important to develop  insights into the fundamental limits of larger LCBC networks. However, any such attempt runs into immediate obstacles. In addition to the combinatorial complexity of large networks, the LCBC --- because it includes index coding as a special case --- is at least as hard as the index coding problem in general. The difficulty of the index coding problem is well recognized \cite{Effros_Rouayheb_Langberg, Lubetzky_Nonlinear, Young_Han_FnT, Rouayheb_Sprintson_Georghiades, Sun_Jafar_nonshannon}. How to overcome this obstacle, is the central question that motivates our work in this paper. 

A key idea that makes this work possible is the distinction between the \emph{general} LCBC problem --- which includes \emph{all} instances, and the \emph{generic} LCBC problem --- which includes \emph{almost all} instances. We focus on the latter. While the general LCBC problem is necessarily at least as hard as the general index coding problem, the generic LCBC problem may still be tractable. Such observations are common in many fields, for example computational complexity theory posits that for many computation problems, the difficult cases are rare and most (generic) instances are much easier, thereby motivating the sub-field of \emph{generic-case complexity} \cite{Generic1,Generic2}. Drawing parallels to the degrees of freedom (DoF) studies of wireless interference networks, there also the general problem remains open --- for arbitrary channel realizations the DoF are not known for even the $3$-user interference channel. However, the generic problem is settled for the $K$-user (any $K$) interference channel; we know the DoF for \emph{almost all} realizations \cite{Cadambe_Jafar_int,  Motahari_Gharan_Khandani, Gul_Stotz_Jafar_Bolcskei_Shamai}. For general MIMO interference channels, even maximizing linearly achievable DoF is shown to be NP-hard \cite{Razaviyayn_Sanjabi_Luo}, yet  it is tractable in the generic sense \cite{Wang_Gou_Jafar_subspace,Genie_chains}. Similarly, while the index coding problem is hard, index coding instances represent a negligible fraction of all possible instances of LCBC. Thus, there remains hope that a foundation for a cohesive theory of linear computation networks may yet be built by studying the \emph{generic capacity} of its building blocks.

With some oversimplification for the sake of intuition, consider the following toy example. We have a $K=4$ user setting, say over $\mathbb{F}_7$, with $d=4$ dimensional data  represented by ${\bf x}=({\bf A},{\bf B},{\bf C}, {\bf D})^T$. The users each have $1$-dimensional side-information  and demands,
{\small
\begin{align}
{\bf w}_1'&={\bf A}+{\bf B}+{\bf C}+{\bf D}, && {\bf w}_1={\bf A}+2{\bf B}+3{\bf C}+4{\bf D}, \\
{\bf w}_2'&={\bf A}+3{\bf B}+2{\bf C}+5{\bf D},&& {\bf w}_2=2{\bf A}+{\bf B}+4{\bf C}+6{\bf D}, \\
{\bf w}_3'&=5{\bf A}+4{\bf B}+{\bf C}+3{\bf D},&& {\bf w}_3=6{\bf A}+3{\bf B}+4{\bf C}+{\bf D},\\
{\bf w}_4'&=4{\bf A}+{\bf B}+5{\bf C}+6{\bf D},&& {\bf w}_4=5{\bf A}+2{\bf B}+6{\bf C}+3{\bf D}.
\end{align}
}If we had only the first $2$ users to consider, the broadcast cost of $2$ would be trivially achieved, e.g., by broadcasting ${\bf w}_1, {\bf w}_2$ which satisfies both users. If we had only the first $3$ users, the solution is less trivial, but we still find (see Section \ref{sec:onedim}) that broadcasting $({\bf S}_1, {\bf S}_2)=(2{\bf A}+6{\bf B}+3{\bf C}, ~4{\bf A}+4{\bf B}+{\bf C}+{\bf D})$ incurs a cost of $2$, while satisfying all $3$ users' demands -- it is easy to verify that User $1$ recovers ${\bf w}_1=2{\bf S}_1-3{\bf w}_1'$, User $2$ recovers ${\bf w}_2=5{\bf S}_2-4{\bf w}_2'$, and User $3$ recovers ${\bf w}_3={\bf S}_1+{\bf S}_2$, all operations in $\mathbb{F}_7$, represented as integers modulo $7$. However, as the number of users increases, the problem becomes much more challenging. It is far from obvious that a broadcast cost of $2$ could still suffice to satisfy all $4$ users listed above, and highly counter-intuitive that the optimal broadcast cost may still be only $2$ for large number of users, e.g., $K=100$ users. This surprising conclusion follows from the results found in this work, with the important caveat that the results are shown to be true only  asymptotically almost surely for large $n$. In other words, for this example, suppose we have the $4$ dimensional data $({\bf A}, {\bf B}, {\bf C}, {\bf D})$ over $\mathbb{F}_{q}$, $q=p^n$ for any arbitrary prime $p$, and a large number (say $K=100$) of users, and the coefficients of the users' $1$-dimensional demands and side-informations are chosen uniformly randomly from $\mathbb{F}_{p^n}$, each choice representing a particular instance of this LCBC. Then we prove that as $n\rightarrow\infty$, almost all instances have optimal broadcast cost $2$ (in $q$-ary units). The larger the number of users $K$, the larger $n$ may need to be for the convergence to take effect, but the optimal  broadcast cost must ultimately converge in probability to $2$ $q$-ary symbols. 

The main result of this work is the characterization of the \emph{generic} capacity, $C_g(p, K, d, m, m')=1/\Delta_g$, where  $\Delta_g=\min\left\{\max\{0,d-m'\}, \frac{dm}{m+m'}\right\}$, for a $K$ user  LCBC with $d$ dimensional data over $\mathbb{F}_{p^n}$, in the symmetric setting where every user has $m$ dimensional demands and $m'$ dimensional side-information, for large enough number of users ($K \geq d$ suffices) and large $n$. Informally, $\Delta_g$ represents a broadcast cost that is both achievable, and  unbeatable, asymptotically almost surely for large $n$, among the class of all  LCBC problems with  fixed parameters $K, p, d, m, m'$. Setting aside the trivial regimes $d\leq m+m'$ where random coding is optimal $(\Delta_g=\max\{0,d-m'\})$, and $d\geq K(m+m')$ where separate transmissions for each user are optimal $(\Delta_g=Km)$, in the remaining non-trivial regime where $m+m'<d<K(m+m')$, we have $\Delta_g=dm/(m+m')$. Note that this depends only on $d, m,m'$, i.e., the dimensions of the data, demands, and side-information. In particular, our generic capacity results do not depend on the characteristic $p$ of the finite field, and in the non-trivial regime with sufficiently large number of users, the generic capacity also does not depend on the number of users $K$. The converse proofs are information theoretic and make use of functional submodularity \cite{Tao_FS, Kontoyiannis_Madiman, Yao_Jafar_3LCBC}. The achievability arguments build upon the idea of asymptotic interference alignment \cite{Cadambe_Jafar_int}, both by adapting it from the $K$ user wireless interference channel to the $K$ user LCBC context, and by a non-trivial strengthening of the original scheme involving an additional symbol-extension overhead that is needed to harness sufficient \emph{diversity} in the finite field setting.
 The generic capacity characterization reveals that the capacity can be significantly higher than what is achievable with the  baseline schemes of random coding and/or separate transmissions. For example, the extremal gain \cite{Chan_extremal} of generic capacity over baseline schemes, as a function of the number of users (maximized over the remaining parameters) is $K$, and the extremal gain as a function of the data dimension $d$ is $\approx d/4$ (Observation \ref{obs:exgain} in Section \ref{sec:obs}). As an immediate corollary of the main result, the generic capacity of the symmetric LCBC is characterized  within a factor of $2$ for any number of users $K$ (Observation \ref{obs:obs1}). Notably, an exact characterization is found (Theorem \ref{thm:one_dim})  for any number of users if the side-information and demands are one-dimensional. Some extensions to asymmetric settings are obtained as well.

\section{Notation}\label{sec:notation}
\subsection{Miscellaneous}
 The notation $[a:b]$ represents the set of integers $\{a,a+1,\hdots, b\}$ if $b\geq a$ and $\emptyset$ otherwise. The compact notation $[K]$ is equivalent to $[1:K]$. For a set of indices $S$, the  notation $A_{S}$ represents $\{A_s, s\in S\}$, e.g., $A_{[K]}=\{A_1,A_2\cdots, A_K\}$. $|\mathcal{S}|$ denotes the cardinality of a set $\mathcal{S}$. $\mathbb{F}_q=\mathbb{F}_{p^n}$ is a finite field with $q=p^n$ a power of a prime. The elements of the prime field $\mathbb{F}_p$ are represented as $\mathbb{Z}/p\mathbb{Z}$, i.e.,  integers modulo $p$. The notation $\mathbb{F}_q^{n_1\times n_2}$ represents the set of $n_1\times n_2$ matrices with elements in $\mathbb{F}_q$. ${\Bbb F}_{q}$ is a sub-field of ${\Bbb F}_{q^z}$, and ${\Bbb F}_{q^z}$ is an extension field of ${\Bbb F}_q$ for $z>1$. $\mathbb{N}=\{1,2,\cdots\}$ is the set of natural numbers.  The greatest common divisor of $a,b$ is denoted $\mbox{gcd}(a,b)$. $(x)^+\triangleq \max\{0,x\}$. $\mathsf{Pr}(E)$ stands for the probability of the event $E$. Given an event $E_n$ that depends on an integer parameter $n$, we say that Event $E_n$ holds asymptotically almost surely (a.a.s.) if $\lim_{n\rightarrow\infty}\mathsf{Pr}(E_n)=1$. Throughout this work when we  use a.a.s., the quantity approaching infinity will be denoted by $n$.  For variables $a,b$ that depend on an integer $n$, we use the notation $a\aaseq b$ to represent the statement, $\lim_{n\rightarrow\infty}\mathsf{Pr}(a=b)=1$. Similarly, $a\aasgeq b$ represents $\lim_{n\rightarrow\infty}\mathsf{Pr}(a\geq b)=1$; $a\aasleq b$ represents $\lim_{n\rightarrow\infty}\mathsf{Pr}(a\leq b)=1$; $a\aasl b$ represents $\lim_{n\rightarrow\infty}\mathsf{Pr}(a< b)=1$, and $a\aasg b$ represents $\lim_{n\rightarrow\infty}\mathsf{Pr}(a> b)=1$.

\subsection{Matrix operations}
By default we will consider matrices in a finite field $\mathbb{F}_q$. 
For two matrices $M_1,M_2$ with the same number of rows,  $[M_1, M_2]$ represents a concatenated matrix which can be partitioned column-wise into $M_1$ and $M_2$. $M_{[i]}$ denotes the $i$-th column of $M$. The notation $M_{[a:b]}$ stands for the sub-matrix $[M_{[a]},M_{[a+1]},...,M_{[b]}]$ if $b\geq a$, and $[~]$ otherwise. The rank of $M\in\mathbb{F}_q^{m\times n}$ is denoted by $\rk(M)$, and we say that $M$ has \textit{full rank} if and only if $\rk(M) = \min\{m,n\}$. $\langle M\rangle_q$ denotes the $\mathbb{F}_q$-linear vector space spanned by the columns of $M$. The subscript $q$ will typically be suppressed as it is clear from the context. If  $M$ has full column rank, then we say that $M$ forms a basis for $\langle M\rangle$. The notation $M_1\cap M_2$ represents a matrix whose columns form a basis of $\langle M_1\rangle\cap\langle M_2\rangle$. In addition, ${\bf 0}^{a\times b}$ represents the $a\times b$ zero matrix. ${\bf I}^{a\times a}$ represents the $a\times a$ identity matrix.

\subsection{Conditional matrix notation: $(M_1|M_2)$}\label{sec:condmat}
Say $M_1 \in {\Bbb F}_q^{d\times \mu_1}$ and $M_2 \in {\Bbb F}_q^{d\times \mu_2}$. By Steinitz Exchange Lemma, there exists a sub-matrix of $M_1$ with full column rank, denoted by $(M_1|M_2)$, that is comprised of $\rk(M_1)-\rk(M_1\cap M_2)$ columns of $M_1$ such that $[M_1\cap M_2, (M_1 | M_2)]$ forms a basis for $\langle M_1\rangle$. We have,
\begin{align}
	\rk(M_1|M_2) &= \rk(M_1) - \rk(M_1\cap M_2) = \rk([M_1,M_2]) - \rk(M_2)
\end{align}
where we made the use of the fact that $\rk([M_1,M_2]) = \rk(M_1)+\rk(M_2)-\rk(M_1\cap M_2)$. 

\section{Problem Formulation: Linear Computation Broadcast} \label{sec: Problem Formulation}\label{sec:probstat}
\subsection{LCBC$\big({\Bbb F}_q, \FqV_{[K]}, \FqV_{[K]}' \big)$}
An LCBC problem is specified by its parameters as LCBC$\big({\Bbb F}_q,  \FqV_{[K]}, \FqV_{[K]}' \big)$, where $\mathbb{F}_q$ is a finite field with $q=p^n$ a power of a prime,  and $\FqV_k\in\mathbb{F}_q^{d\times m_k}$, $\FqV_k'\in\mathbb{F}_q^{d\times m_k'}$, for all $k\in[K]$, are matrices with the same number of rows, $d$. The value $K$ represents the number of users, $d$ represents the data dimension, and $m_k, m_k'$ quantify the amounts of desired computations and side-information corresponding to User $k$. The context is as follows. A central server stores multiple instances of $d$ dimensional data over a finite field $\mathbb{F}_q$. The $\ell^{th}$ instance of the data vector is denoted as $\FqX(\ell)=[\Fqx_1(\ell),\hdots,\Fqx_d(\ell)]^T\in \mathbb{F}_q^{d\times 1}$, and $\FqXL = [\FqX(1),...,\FqX(L)] \in {\Bbb F}_q^{d\times L}$ collects $L\in {\Bbb N}$ data instances.\footnote{The parameter $L$ is referred to as the batch size and may be chosen freely by a coding scheme.} A broadcast link connects $K$ distributed users to the central server.  The \emph{coefficient} matrices $\FqV_k'$ and $\FqV_k$ specify the side-information and desired computations for the $k^{th}$ user. Specifically, for all $k\in[K]$, User $k$ has side information $\FqWL_k'= \FqXL^T\FqV_k'\in\mathbb{F}_q^{L\times m_k'}$, and wishes to compute  $\FqWL_k=\FqXL^T\FqV_k\in\mathbb{F}_q^{L\times m_k}$. For compact notation in the sequel it is useful to define, 
\begin{align} \label{eq:def_uk}
\FqU_k \triangleq [\FqV_k', \FqV_k].
\end{align}

A coding scheme for an LCBC problem is denoted by a tuple $(L, \Phi, \Psi_{[K]},\mathcal{S})$, which specifies a batch size $L$, an encoding function $\Phi: {\Bbb F}_q^{L\times d}\rightarrow\mathcal{S}$ that maps the data to the broadcast information ${\bf S}$ over some alphabet $\mathcal{S}$, i.e., 
\begin{align}
	\Phi(\FqXL)={\bf S}
\end{align}
and decoders, $\Psi_{k}: \mathcal{S}\times \mathbb{F}_q^{L\times m_k'}\rightarrow\mathbb{F}_q^{L\times m_k}$,  
that allow the $k^{th}$ user to retrieve $\FqWL_k$ from the broadcast information ${\bf S}$ and the side-information $\FqWL_k'$ for all $k\in[K]$, i.e.,
\begin{align} \label{eq:decoding_constraint}
	\FqWL_k = \Psi_k({\bf S}, \FqWL_k') = \Psi_k(\Phi(\FqXL), \FqWL_k') , ~~~~~~\forall k\in[K].
\end{align}
A coding scheme that allows successful decoding for all data realizations, i.e., satisfies \eqref{eq:decoding_constraint} for all $\FqXL\in\mathbb{F}_q^{d\times L}$, is called an \emph{achievable}  scheme. Let us define $\mathcal{A}\Big({\Bbb F}_q, \FqV_{[K]}, \FqV_{[K]}' \Big)$ as the set of all achievable  schemes for LCBC$\Big({\Bbb F}_q, \FqV_{[K]}, \FqV_{[K]}' \Big)$.

The broadcast cost (normalized by $L$ and measured in $q$-ary units) for an achievable scheme  is defined as $\Delta= \log_q|\mathcal{S}|/L$. The optimal broadcast cost $\Delta^*\big({\Bbb F}_q, \FqV_{[K]}, \FqV_{[K]}' \big)$ for an LCBC problem is defined as, 
\begin{align} \label{eq:def_communication_cost}
	\Delta^*\big({\Bbb F}_q, \FqV_{[K]}, \FqV_{[K]}' \big) = \inf_{(L, \Phi, \Psi_{[K]},\mathcal{S})\in\mathcal{A}\big({\Bbb F}_q, \FqV_{[K]}, \FqV_{[K]}' \big)} \Delta.
\end{align}
The capacity, $C^*$, of an LCBC problem is the reciprocal of its optimal broadcast cost,
\begin{align}
C^*\big({\Bbb F}_q, \FqV_{[K]}, \FqV_{[K]}' \big) = 1/\Delta^*\big({\Bbb F}_q, \FqV_{[K]}, \FqV_{[K]}' \big).
\end{align}
Note that although the side information and demands are linear functions of the data, the achievable schemes, i.e., the encoding and decoding operations are not restricted to be linear.

\subsection{Generic Capacity}
Define
\begin{align}
&\mathcal{L}_n\big(p, K, d, m_{[K]}, m'_{[K]}\big) \notag \\
&=\left\{\mbox{LCBC}\big({\Bbb F}_q,  \FqV_{[K]}, \FqV_{[K]}' \big)\left| \begin{array}{rl}
q&=~~p^n\\
\FqV_k&\in~~\mathbb{F}_q^{d\times m_k}\\
\FqV_k'&\in~~\mathbb{F}_q^{d\times m_k'}\\
\forall k& \in~~[K]
\end{array}\right.\right\},
\end{align}
or $\mathcal{L}_n$ in short, as the set of all LCBC instances with the  `dimensional' parameters $p$, $n$, $K$, $m_{[K]}$, $m_{[K]}'$. Let $\Lambda_n$ be a uniformly randomly chosen instance from $\mathcal{L}_n$, and $\Delta^*(\Lambda_n)$  be the optimal download cost of $\Lambda_n$. In order to define generic capacity, let us fix the parameters $(p,K,d,m_{[K]},m'_{[K]})$ and allow $n$ to approach infinity.

The generic optimal broadcast cost $\Delta_g$, if exists, is defined as the value that $\Delta^*(\Lambda_n)$ converges to in probability, i.e.,
\begin{align}
	\lim_{n\to \infty} {\sf Pr}\Big( \big|\Delta^*(\Lambda_n) - \Delta_g \big| < \varepsilon \Big) = 1, ~~~~~~ \forall \varepsilon >0.
\end{align}
The generic capacity is then defined as the reciprocal, i.e.,
\begin{align}
	C_g = 1/\Delta_g.
\end{align}
Since $\Delta_g, C_g$ may not always exist, we further define the following upper and lower extremal metrics, which always exist and help in the analysis of generic capacity.
We say that $\Delta$ is achievable asymptotically almost surely (a.a.s)  (cf. Definition 1.1.2(v) \cite{tao2012topics}) if,
\begin{align}
\lim_{n\rightarrow\infty}\mathsf{Pr}\Big(\Delta^*(\Lambda_n)\leq \Delta\Big)= 1,\label{eq:limaasach}
\end{align}
which is expressed  compactly as $\Delta^*(\Lambda_n)\aasleq\Delta$.
Define the smallest such $\Delta$ as $\Delta_u^*\left(p,K,d,m_{[K]},m'_{[K]}\right)$, or $\Delta_u^*$ in short, i.e., 
\begin{align}
\Delta_u^*&\triangleq \inf\left\{\Delta: \Delta^*(\Lambda_n)\aasleq \Delta\right\}.\label{eq:infdeltau}
\end{align}
Similarly, define $\Delta_l^*$ as,
\begin{align}
\Delta_l^*&\triangleq \sup\left\{\Delta: \Delta^*(\Lambda_n)\aasg \Delta\right\}.\label{eq:defdeltal}
\end{align}
Thus, $\Delta_u^*$ is the infimum of broadcast costs that are achievable a.a.s.  (tightest upper bound),  and $\Delta_l^*$ is the supremum of broadcast costs that are \emph{not} achievable a.a.s.  (tightest  lower bound). By definition, $\Delta_u^*\geq \Delta_l^*$.  If\footnote{Note that the definition does not automatically preclude strict inequality, e.g., as a thought experiment suppose one half of all instances have $\Delta^*=1$ and the other half have $\Delta^*=2$, then $\Delta^*_u = 2$ and $\Delta^*_l=1$.}    $\Delta_u^*=\Delta_l^*$, then they are equal to the generic optimal broadcast cost $\Delta_g$, i.e., 
\begin{align}
\Delta_g=\Delta_u^*=\Delta_l^*
\end{align}
and $C_g=1/\Delta_g$ exists. If $\Delta_u^*\neq \Delta_l^*$, then $\Delta_g, C_g$ do not exist.

\begin{remark}
It is worth noting that the definition of `generic capacity' connects to the notion of  generic subsets in the literature on generic case complexity \cite{Generic}. To briefly point out this connection, following along the lines of Definition 3.1 and Lemma 3.2 of \cite{Generic}, we may define a generic subset as follows: Let $I$ be a set of inputs with size\footnote{A size function for a set $I$ is a map $\sigma: I\rightarrow\mathbb{N}$, the nonnegative integers, such that the preimage of each integer is finite (Definition 2.4 of \cite{Generic}).} function $\sigma$. Define $I_r$, the sphere of radius $r$, by $I_r=\{w\mid w\in I, \sigma(w)=r\}$, the set of inputs of size $r$. A subset $R\subset I$ is said to have asymptotic density $\alpha$, written $\rho'(R)=\alpha$, if $\lim_{r\rightarrow\infty}|R\cap I_r|/|I_r|=\alpha$ where $|X|$ denotes the size of a set $X$. If $R$ has asymptotic density $1$, it is called generic; and if it has asymptotic density $0$, it is negligible. Now, for our problem, the set  $I=\cup_{n\in\mathbb{N}}\mathcal{L}_n$ is the set of all LCBC instances for fixed $p,K,m_{[K]},m'_{[K]}$. The size function $\sigma=n$, and  the sphere $I_n=\mathcal{L}_n$. Let ${R}_u(\Delta)=\{\mathcal{L}\in I\mid \Delta\geq \Delta^*(\mathcal{L})\}$ be the subset of LCBC instances for which the broadcast cost $\Delta$ is achievable. Then $\Delta_u^*$ is the infimum of the values of $\Delta$ for which ${R}_u(\Delta)$ is generic. Similarly,  $\Delta_l^*$ is the supremum of the values of $\Delta$ for which ${R}_l(\Delta)=\{\mathcal{L}\in I\mid \Delta < \Delta^*(\mathcal{L})\}$ is generic. In plain words, $\Delta_u^*$ is the infimum of broadcast costs that are generically achievable, while $\Delta_l^*$ is the supremum of broadcast costs that are generically not achievable. When they match,  we have the generic optimal broadcast cost, and as its reciprocal notion, the generic capacity.
\end{remark}

\section{Results: Generic Capacity}
In this work we mainly focus on the \emph{symmetric} LCBC,  where we have,
\begin{align} \label{eq:symmetric_assumption}
	&m_1 = m_2 = ... = m_K = m, \notag \\
	&m_1' = m_2' = ... =m_K' = m'.
\end{align}
Note that the generic capacity (if it exists) can only be a function of $(p,K, d, m,m')$, since these are the only parameters left. 

\subsection{$K=1,2,3$ Users}
While we are interested primarily in LCBC settings with large number of users (large $K$), it is instructive to start with the generic capacity characterizations for $K=1,2,3$ users. Recall that the LCBC problem is already fully solved for $K=2$ in \cite{Sun_Jafar_CBC} and $K=3$ in \cite{Yao_Jafar_3LCBC}. Therefore, the following theorem essentially follows from \cite{Sun_Jafar_CBC, Yao_Jafar_3LCBC}. The $K=1$ case is trivial and is included for the sake of completeness.
\begin{theorem}\label{thm:GLCBC23} The generic capacity $C_g=1/\Delta_g$  for the symmetric LCBC with $K=1,2,3$ users  is characterized as follows.
\begin{align} \label{eq:gb_K2}
&K=1 \mbox{ user:} \notag \\
	 &~~~~\Delta_g = \begin{cases}
		0, & d \leq m'; \\
		d-m', & m' \leq d \leq m+m'; \\
		m, & m+m' \leq d.
	\end{cases}\\
&K=2 \mbox{ users:}\notag \\
	&~~~~\Delta_g= \begin{cases}
		0, & d \leq m'; \\
		d-m', & m' \leq d \leq m+m'; \\
		m, & m+m' \leq d \leq m+2m'; \\
		d-2m', & m+2m' \leq d \leq 2(m+m');\\
		2m,&2(m+m')\leq d.
	\end{cases}\\
&K=3 \mbox{ users:}\notag \\
	&~~~~\Delta_g =
	\begin{cases}
		0,  & d\leq m'; \\
		d-m', & m'\leq d \leq m'+m; \\
		m, & m'+m \leq d \leq m+1.5m'; \\
		d-1.5m', & m+1.5m' \leq d \leq 1.5(m+m'); \\
		1.5m, & 1.5(m+m') \leq d \leq 1.5m+2m'; \\
		d-2m',  & 1.5m+2m' \leq d \leq 2(m+m'); \\
		2m, & 2(m'+m) \leq d \leq 2m+3m'; \\
		d-3m', & 2m+3m' \leq d \leq 3(m+m');\\
		3m,&3(m+m')\leq d.
		\end{cases}
\end{align}
\end{theorem}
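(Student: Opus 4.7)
My plan is to combine the exact per-instance capacity characterizations for $K \leq 3$ (the $K=1$ case is elementary, $K=2$ is from \cite{Sun_Jafar_CBC}, and $K=3$ is from \cite{Yao_Jafar_3LCBC}) with a generic-position argument. All three characterizations express $\Delta^*$ as a closed-form, piecewise-linear optimization in the ranks of concatenations and intersections built from the coefficient matrices $V_{[K]}, V'_{[K]}$. For matrices whose entries are drawn uniformly from $\mathbb{F}_{p^n}$, these ranks a.a.s.\ attain their naive dimension-counting values as $n \to \infty$; substituting the generic ranks into the exact formulas and simplifying region by region in $d$ will recover the piecewise expressions stated in the theorem.

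The first technical step is a \emph{generic rank lemma}: for any fixed finite collection of rank expressions of the form $\rk([V_S, V'_T])$ or $\rk(V_S \cap V'_T)$ that enters the $K \leq 3$ formulas,
\begin{align}
\rk([V_S, V'_T]) &\aaseq \min\{d,\, m|S| + m'|T|\}, \\
\rk(V_S \cap V'_T) &\aaseq (m|S| + m'|T| - d)^+.
\end{align}
This follows from a Schwartz--Zippel style argument: rank deficiency is cut out by the vanishing of a fixed non-zero polynomial in the matrix entries, so the failure probability for a single expression is at most $(\text{degree})/p^n \to 0$, and a union bound over the finitely many expressions completes the argument.

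For $K=1$ we have $\Delta^* = \rk([V_1, V'_1]) - \rk(V'_1)$ on every instance, and substituting the generic values of the two ranks yields the three stated cases directly. For $K=2$ and $K=3$, the characterizations in \cite{Sun_Jafar_CBC, Yao_Jafar_3LCBC} take the form of an optimization over subsets of users with rank-based coefficients; in the symmetric regime ($m_k = m$, $m'_k = m'$) this collapses to an optimization over subset \emph{sizes} $k \in \{0,1,\ldots,K\}$. After plugging in the generic ranks, the $k$-th term splits into two canonical forms: a random-coding-type term $(d - km')^+$, capturing the cost of jointly serving $k$ users, and a separate-transmission-type term $km$, capturing the cost of sending each user's demand independently. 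Taking the appropriate max/min of these over $k$ produces the alternating pattern of slopes $0,\, d{-}m',\, m,\, d{-}2m',\, 2m,\, \ldots$ visible in \eqref{eq:gb_K2}; the breakpoints in $d$ are exactly the crossovers between the two forms.

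The main obstacle will be the case analysis for $K=3$, which has nine regions in $d$. In each region I must identify the active term in the Yao--Jafar formula and verify that it reduces to the stated $\Delta_g$ after substituting generic ranks, and I must check boundary consistency between adjacent regions. Because the underlying $K\leq 3$ capacity expressions are already tight per-instance, both achievability and converse are inherited for free from the cited results once the ranks are set to their generic values --- no new coding scheme or information-theoretic bound is required. The proof is thus essentially (i)~the Schwartz--Zippel reduction that makes every rank a.a.s.\ generic, followed by (ii)~a mechanical piecewise simplification of the exact $K\leq 3$ capacity formulas.
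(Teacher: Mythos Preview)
Your overall strategy---use Schwartz--Zippel to pin down generic ranks, then substitute into the known per-instance capacity formulas for $K\leq 3$---is exactly what the paper does. For $K=1,2$ your sketch is adequate. The gap is in $K=3$.

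The rank expressions that actually enter the $K=3$ formula from \cite{Yao_Jafar_3LCBC} are not all of the form $\rk([V_S,V'_T])$ or $\rk(V_S\cap V'_T)$. With $\FqU_k=[\FqV_k',\FqV_k]$, the formula involves quantities such as
\[
\rk\big([\FqV_i',\,\FqU_i\cap\FqU_j,\,\FqU_i\cap\FqU_k]\big),\qquad
\rk\big([\FqV_i',\,\FqU_1\cap\FqU_2\cap\FqU_3]\big),\qquad
\rk\big([\FqV_i',\,\FqU_i\cap(\FqU_j,\FqU_k)]\big),
\]
i.e., concatenations that themselves contain intersection subspaces. For these, ``rank deficiency is cut out by a polynomial'' is still true, but the hard part is verifying that the polynomial is not identically zero. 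Unlike the simple concatenation case (where you can just exhibit an identity block), here the columns of $\FqU_i\cap\FqU_j$ are themselves rational functions of the entries of $\FqU_i,\FqU_j$, and one must construct explicit realizations of $\FqU_1,\FqU_2,\FqU_3$ for which, say, $[\FqV_1',\FqU_{12},\FqU_{13}]$ attains the claimed rank. The paper does this carefully (its conditions C3--C6), in one case via a separate block-matrix lemma for the triple intersection and in another via an adjugate construction to parameterize $\FqU_{12},\FqU_{13}$ polynomially; neither is the one-line Schwartz--Zippel you describe.

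A symptom that your simplification is too coarse: your ``optimization over subset sizes $k$'' with terms $(d-km')^+$ and $km$ can only produce integer multiples of $m,m'$, yet the $K=3$ answer contains $1.5m$ and $d-1.5m'$. Those half-integer pieces come precisely from the nested-intersection terms above, not from any single-subset cooperation bound. So the mechanical piecewise simplification in step (ii) will not go through until step (i) is strengthened to handle those nested rank expressions.
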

The proof of Theorem \ref{thm:GLCBC23} is relegated to Appendix \ref{app:proof_generic_K23}. The task left for the proof is to correctly account for the generic cases (non-trivial for $K=3$), after which the capacity results from \cite{Sun_Jafar_CBC, Yao_Jafar_3LCBC} can be directly applied.

\subsection{Large $K$}
The main result of this work appears in the following theorem.
\begin{theorem}\label{thm:large_K}
For the symmetric LCBC with the number of users satisfying
\begin{align}
K\geq d/\mbox{gcd}(d,m+m'),\label{eq:large_K}
\end{align}
the generic capacity $C_g=1/\Delta_g$ is characterized as follows,
\begin{align} \label{eq:Delta_large_K}
\Delta_g &= \begin{cases}
		0, & d \leq m'; \\
		d-m', & m' \leq d \leq m+m'; \\
		dm/(m+m'), & m+m' < d \leq K(m+m').
	\end{cases}
\end{align}
\end{theorem}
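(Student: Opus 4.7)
The plan separates the three regimes of $\Delta_g$. In the trivial regimes $d\leq m'$ and $m'<d\leq m+m'$, the result reduces to single-user considerations: when $d\leq m'$, a generic $m'$-dimensional side-information space equals $\mathbb{F}_q^d$ a.a.s., so no broadcast is required; when $m'<d\leq m+m'$, broadcasting any $d-m'$ generic linear combinations of $\FqXL$ a.a.s.\ augments any single user's side-information to span all of $\mathbb{F}_q^d$, which decodes every user's demand, while the matching converse is the single-user cut-set bound $\Delta\aasgeq d-m'$ applied to any fixed $k$.

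The substance of the theorem lies in the non-trivial regime $m+m'<d\leq K(m+m')$, where $\Delta_g=dm/(m+m')$. For the converse, the plan is to invoke the functional-submodularity framework of \cite{Tao_FS,Yao_Jafar_3LCBC}. The elementary rank inequality for any subset $S\subseteq[K]$ is
\[
\Delta L \;\aasgeq\; \rk([\FqV_S,\FqV_S'])-\rk(\FqV_S') \;=\; \min(|S|(m+m'),d)-\min(|S|m',d),
\]
which evaluates to $|S|m$ when $|S|(m+m')\leq d$ and to $d-|S|m'$ when $|S|m'<d<|S|(m+m')$. A single integer choice of $|S|$ attains at best $\lfloor d/(m+m')\rfloor\,m$ or $d-\lceil d/(m+m')\rceil m'$, which falls short of the target $dm/(m+m')$ when $d/(m+m')$ is not an integer. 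To extract the tight fractional bound I would combine inequalities over multiple overlapping subsets through the convex-combination duality that submodularity supports; the hypothesis $K\geq d/\gcd(d,m+m')$ is exactly what supplies enough users to realise the combination that recovers the ratio $dm/(m+m')$.

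For the achievability, the plan is to adapt the asymptotic interference-alignment idea of \cite{Cadambe_Jafar_int} to the finite-field LCBC. Working in an $L$-symbol-extended data space $\mathbb{F}_q^{dL}$, I would construct a broadcast subspace of dimension $\lceil dmL/(m+m')\rceil$ such that, modulo each user's extended side information $\langle\FqV_k'\rangle\otimes\mathbb{F}_q^L$, it contains the extended demand $\langle\FqV_k\rangle\otimes\mathbb{F}_q^L$ for every $k\in[K]$. In the spirit of Cadambe-Jafar, the broadcast subspace would be a structured ``monomial-type'' lattice built from per-user operators derived from the $\FqV_k,\FqV_k'$, arranged so that all $K$ users' residual (quotient) demands are carried simultaneously. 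Because classical IA invokes continuous-field genericity to guarantee the rank conditions, a non-trivial adaptation is required over $\mathbb{F}_q$: one passes to a growing auxiliary extension $\mathbb{F}_{p^{nz}}$ and uses a Schwartz-Zippel-style estimate to bound the algebraic bad set by the ratio of polynomial degree to field size, so that letting $n\to\infty$ drives this bad set's measure to zero and yields the a.a.s.\ guarantee.

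The hardest step is this finite-field achievability. Over the reals/complexes, the IA rank conditions hold automatically on the complement of a proper algebraic subvariety, so almost every realisation works. Over $\mathbb{F}_{p^n}$ the fraction of bad coefficients is governed by polynomial-degree-to-field-size, and the degree grows with the IA symbol-extension parameter $L$ — which is itself tied to the target rate $dm/(m+m')$. The interlocking choice of $L$ (large enough to realise the target), the auxiliary extension $z$ (large enough to tame the degree-to-size ratio), and $n\to\infty$ (to take the a.a.s.\ limit), is the delicate balancing that constitutes the technical core of the proof.
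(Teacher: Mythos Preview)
Your converse sketch is on the right track but leaves the key construction unnamed. The paper does not use a generic convex-combination over subset cut bounds; it builds a specific recursive chain. With $\overline{m}=(m+m')/\gcd(d,m+m')$ and $K_i=\lceil id/(m+m')\rceil$ for $i\in[0:\overline{m}]$, it defines $\Sp_{i+1}=[\Insc_i,\FqU_{K_i+1},\ldots,\FqU_{K_{i+1}-1}]$, $\Insc_{i+1}=\Sp_{i+1}\cap\FqU_{K_{i+1}}$, $\Un_{i+1}=[\Sp_{i+1},\FqU_{K_{i+1}}]$, shows $\rk(\Un_i)\aaseq d$ inductively, and then chains $2\overline{m}$ copies of $H({\bf S})$ through functional submodularity to get $\overline{m}H({\bf S})+K_{\overline{m}}Lm'\geq \overline{m}Ld$. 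Since $K_{\overline{m}}/\overline{m}=d/(m+m')$ exactly, this yields $\Delta\geq dm/(m+m')$. The hypothesis $K\geq d/\gcd(d,m+m')=K_{\overline{m}}$ is what guarantees enough users for the chain.

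Your achievability plan has a genuine gap. You propose an $L$-symbol extension in $\mathbb{F}_q^{dL}$ with a CJ-style monomial lattice. The paper points out explicitly (Observation 10 and Remark \ref{rem:overhead}) why this fails: the coefficient matrices $\FqV_k,\FqV_k'$ are \emph{fixed} across the $L$ batch instances, so in the extended space the relevant linear maps are block-diagonal of the form $I_L\otimes(\cdot)$ --- scaled identities with no diversity across slots. CJ alignment needs independent realisations to make the monomial products linearly independent; with constant coefficients the lattice collapses and the rank conditions cannot hold. Passing to a larger field $\mathbb{F}_{p^{nz}}$ for Schwartz--Zippel does not repair this, because the obstruction is structural (the polynomial you would want non-zero is identically zero), not probabilistic.

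The paper's fix is different in kind: take $L=1$ and instead exploit the sub-field structure, viewing $\mathbb{F}_{p^n}$ as an $n$-dimensional $\mathbb{F}_p$-vector space. Each scalar coefficient $v\in\mathbb{F}_q$ becomes an $n\times n$ matrix over $\mathbb{F}_p$ (its regular representation), and these matrices \emph{do} carry the diversity needed for the CJ monomial construction. The second ingredient you are missing is a deliberate $\sqrt{n}$ overhead: the scheme reserves $n-(m+m')\eta\geq\sqrt{n}$ dimensions for per-user transmissions ${\bf S}_k^\mu=\FpX^T\FpV_k^\mu{\bf Z}_\mu$. This overhead is asymptotically negligible in rate but is exactly what makes the Schwartz--Zippel count close, since the bad-event bound becomes $O(N)/p^{\sqrt{n}}\to 0$. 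Without it the degree/field-size balance you allude to does not go through.
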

The proof of converse for Theorem \ref{thm:large_K} appears in Section \ref{sec:proof_converse_large_K} while the achievability  is proved in Appendix \ref{sec:proof_large_K}.  The main technical challenge is to show the achievability of $dm/(m+m')+\varepsilon, \forall \varepsilon >0$ in the  non-trivial regime, $m+m'<d\leq K(m+m')$. Remarkably, in this regime we are able to show that $\forall \varepsilon>0$, a broadcast cost of $dm/(m+m')+\varepsilon$ is achievable a.a.s., regardless of the number of users $K$, based on an asymptotic interference alignment (IA) scheme.  Examples to illustrate the asymptotic IA construction are provided in Section \ref{sec:ex}. The condition \eqref{eq:large_K} on the number of users in Theorem \ref{thm:large_K} is needed  for our converse bound in Section \ref{sec:proof_converse_large_K} to match the achievability. 

Next, let us briefly address arbitrary number of users and asymmetric settings through the following corollary of Theorem \ref{thm:large_K}. 
\begin{corollary}\label{cor:allK}
For the (not necessarily symmetric) LCBC with arbitrary number of users,
\begin{align}
\Delta_l^*&\geq \max_{\mathcal{K}\subset [K]}\min\left\{d-\sum_{k\in\mathcal{K}}m_k', ~\sum_{k\in\mathcal{K}}m_k\right\},\label{eq:genlbound}\\
\Delta_u^*&\leq \min\left\{\frac{m_{\max}d}{m_{\max}+m'_{\min}}, (d-m'_{\min})^+, \sum_{k\in[K]}m_k\right\},\label{eq:genubound}
\end{align}
where $m_{\max}\triangleq\max_{k\in[K]}m_k$, and $m_{\min}'\triangleq\min_{k\in[K]}m_k'$.
\end{corollary}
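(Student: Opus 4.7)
The plan is to handle the two bounds separately: the lower bound via a cut-set converse applied to each subset of users, and the upper bound via the minimum of three explicit achievable schemes.

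For \eqref{eq:genlbound}, I would fix any $\mathcal{K}\subset[K]$, fuse the users in $\mathcal{K}$ into a single meta-user with concatenated matrices $V_{\mathcal{K}}=[\FqV_k:k\in\mathcal{K}]$ and $V'_{\mathcal{K}}=[\FqV_k':k\in\mathcal{K}]$, and apply the standard single-user information-theoretic converse (the same bound used in the $K=1$ case of Theorem~\ref{thm:GLCBC23}),
\[
\Delta^*(\Lambda_n)\;\geq\;\rk([V'_{\mathcal{K}},V_{\mathcal{K}}])-\rk(V'_{\mathcal{K}}),
\]
which holds for every individual instance. Because $\FqV_k,\FqV_k'$ are sampled uniformly from matrices over $\mathbb{F}_{p^n}$, each rank converges a.a.s.\ to its generic value as $n\to\infty$, so the right-hand side tends a.a.s.\ to $\min\{d,\sum_{k\in\mathcal{K}}(m_k+m_k')\}-\min\{d,\sum_{k\in\mathcal{K}}m_k'\}$, which simplifies to $\min\{\sum_{k\in\mathcal{K}}m_k,(d-\sum_{k\in\mathcal{K}}m_k')^+\}$. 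Maximizing over all $\mathcal{K}$ (with $\mathcal{K}=\emptyset$ providing the trivial $0$ floor) yields \eqref{eq:genlbound}.

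For \eqref{eq:genubound}, the three terms correspond to three achievable strategies. The trivial scheme that broadcasts each $\FqWL_k$ separately costs $\sum_k m_k$. Random linear coding that broadcasts $(d-m'_{\min})^+$ uniformly random combinations of $\FqX$ suffices because each user's $m_k'\geq m'_{\min}$ dimensions of side-information, concatenated with the broadcast, generically span $\mathbb{F}_q^d$ a.a.s., allowing full recovery of $\FqX$ and hence of $\FqWL_k$. For the third term I would \emph{symmetrize} the instance by appending fresh uniform random columns to each $\FqV_k$ to reach dimension $m_{\max}$, and truncating each $\FqV_k'$ to any $m'_{\min}$ of its columns. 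The result is a uniformly random instance of the symmetric LCBC with parameters $(p,K,d,m_{\max},m'_{\min})$, and any scheme that satisfies this harder augmented problem also satisfies the original. Applying the asymptotic interference alignment achievability of Theorem~\ref{thm:large_K} (which, as that theorem's discussion emphasizes, works for \emph{any} number of users $K$) then delivers broadcast cost $dm_{\max}/(m_{\max}+m'_{\min})+\varepsilon$ a.a.s.\ for every $\varepsilon>0$.

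The main technical care I expect is in the third upper bound: one must verify that the augmentation and truncation produce matrices that are jointly uniform with the correct marginals, so the a.a.s.\ guarantees of Theorem~\ref{thm:large_K} transfer verbatim, and that the achievability half of that theorem does not inherit the $K\geq d/\gcd(d,m+m')$ hypothesis that was imposed only so the matching converse would go through. The remaining pieces---the two simpler achievable schemes and the cut-set converse---are routine once the generic full-rank statements are invoked.
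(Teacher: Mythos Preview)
Your proposal is correct and follows essentially the same approach as the paper: cooperation/cut-set for the lower bound, and the same three schemes (separate transmission, random coding, symmetrization then IA from Theorem~\ref{thm:large_K}) for the upper bound. One point the paper makes explicit that you leave implicit: the IA achievability from Theorem~\ref{thm:large_K} is proved only in the regime $m_{\max}+m'_{\min}<d\leq K(m_{\max}+m'_{\min})$, so you should observe (as the paper does) that outside this regime the term $\frac{m_{\max}d}{m_{\max}+m'_{\min}}$ is dominated by one of the other two terms in the minimum and hence need not be achieved directly.
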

\proof
\eqref{eq:genlbound} follows from a simple cooperation bound. Consider any LCBC instance ${\Lambda}_n\in\mathcal{L}_n$. For any subset $\mathcal{K}\subset[K]$ let the users in $\mathcal{K}$ fully cooperate as one user, and eliminate all other users $[K]\setminus\mathcal{K}$, to obtain a single user LCBC instance $\Lambda_{n,\mathcal{K}}$ with optimal broadcast cost $\Delta^*(\Lambda_{n,\mathcal{K}})\leq \Delta^*(\Lambda_n)$. The combined user has demand coefficient matrix $\FqV_{\mathcal{K}}\triangleq \begin{bmatrix}\FqV_k, k\in\mathcal{K}\end{bmatrix}$ with $d$ rows and $m_{\mathcal{K}}\triangleq \sum_{k\in\mathcal{K}}m_k$ columns, and the side-information coefficient matrix $\FqV'_{\mathcal{K}}\triangleq \begin{bmatrix}\FqV_k', k\in\mathcal{K}\end{bmatrix}$ with $d$ rows and $m_{\mathcal{K}}'\triangleq \sum_{k\in\mathcal{K}}m_k'$ columns. From Theorem \ref{thm:GLCBC23}, based on the generic capacity for the single user case, we immediately obtain \eqref{eq:genlbound}. For \eqref{eq:genubound} note that $\Delta^*(\Lambda_n)\leq\sum_{k\in[K]}m_k$ because serving the users separately is always an option. $\Delta^*(\Lambda_n)\leq (d-m_{\min}')^+$ also holds because broadcasting $(d-m_{\min}')$ generic linear combinations (over a sufficiently large field extension) of the data allows each user a total number of generic equations $(d-m_{\min}')^++m_k'\geq d$, which suffices for each user to recover all  $d$ data dimensions. Finally, it is always possible to mimic a symmetric setting by adding superfluous demands and discarding some of the side-information at each user until every user has $m'_{\min}$ generic linear combinations of side-information and $m_{\max}$ generic linear combinations of demand. Note that if  $d\leq m_{\max}+m'_{\min}$ then $\frac{m_{\max}d}{m_{\max}+m'_{\min}}>(d-m'_{\min})^+$, and if $d>K(m_{\max}+m_{\min}')$, then $\frac{m_{\max}d}{m_{\max}+m'_{\min}}>\sum_{k\in[K]}m_k$. The only remaining case is $m_{\max}+m'_{\min}< d\leq K(m_{\max}+m_{\min}')$, in which case the achievability of $\frac{m_{\max}d}{m_{\max}+m'_{\min}}$ is shown in the proof of Theorem \ref{thm:large_K}.$\hfill\square$

\subsection{One dimensional case: $m=m'=1$}\label{sec:onedim}
In the special case where the side-information and demands are one-dimensional,   the generic capacity is characterized for \emph{any} number of users, as follows.
\begin{theorem}\label{thm:one_dim}
For LCBC with $m=m'=1$, the generic capacity $C_g=1/\Delta_g$  is characterized as follows.
\begin{enumerate}
	\item For even $d$,
		\begin{align}
		\Delta_g &= \begin{cases}
				d/2, & 2 \leq d \leq 2K;\\
				K,&  d \geq 2K.
			\end{cases}
		\end{align}
	\item For odd $d$,
		\begin{align}
		\Delta_g &= \begin{cases}
				0, & d = 1; \\
				d/2, & 3 \leq d < 2K-1;\\
				K-1,& d = 2K-1; \\
				K,& d > 2K-1.
			\end{cases}
		\end{align}
\end{enumerate}
\end{theorem}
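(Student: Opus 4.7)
My plan is to partition the parameter space by parity of $d$ and its size relative to $K$, invoking Theorem~\ref{thm:large_K} and Corollary~\ref{cor:allK} where they apply directly, and handling the residual cases with tailored arguments. In the regime where the hypothesis of Theorem~\ref{thm:large_K}, $K \geq d/\gcd(d, m+m')$, is satisfied---i.e., $K \geq d/2$ for even $d$ (so $2 \leq d \leq 2K$) and $K \geq d$ for odd $d$ (so $3 \leq d \leq K$)---the theorem directly yields $\Delta_g = dm/(m+m') = d/2$. In the separate-transmission regime ($d \geq 2K$ for even $d$; $d \geq 2K+1$ for odd $d$), Corollary~\ref{cor:allK} supplies both the upper bound $\Delta_u^* \leq \sum_k m_k = K$ and, via all-users cooperation, the matching lower bound $\Delta_l^* \geq \min\{d-K, K\} = K$, so $\Delta_g = K$. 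The case $d = 1$ is trivial.

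The boundary point $d = 2K - 1$ (odd) is special: the all-users cooperation of Corollary~\ref{cor:allK} already delivers $\Delta_l^* \geq \min\{d-K, K\} = K - 1$, so the task reduces to matching achievability. I would construct a broadcast coefficient matrix $S_\mathrm{coeff} \in \mathbb{F}_q^{d \times (K-1)}$ such that $V_k \in \langle S_\mathrm{coeff}, V_k'\rangle$ for every $k$. Equivalently, I would choose scalars $\mu_1,\ldots,\mu_K \in \mathbb{F}_q$ so that the $K$ vectors $V_k - \mu_k V_k' \in \mathbb{F}_q^{d}$ span at most a $(K-1)$-dimensional subspace; this is a codimension-$K$ rank-drop condition on a $K$-parameter family of scalars, so a Schwartz-Zippel style dimension count guarantees a solution a.a.s., and $\langle S_\mathrm{coeff}\rangle$ is then taken to be the resulting $(K-1)$-dimensional span. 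A single batch ($L = 1$) then attains $\Delta = K - 1$.

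The most delicate residual regime is odd $d$ with $K < d < 2K - 1$, where Theorem~\ref{thm:large_K}'s hypothesis fails but its formula $\Delta_g = d/2$ is still claimed. For achievability I would revisit the asymptotic interference alignment construction underlying Theorem~\ref{thm:large_K} and observe that the restrictive hypothesis $K \geq d$ for odd $d$ is only used to make the converse match; the IA scheme itself should produce $\Delta \leq d/2 + \varepsilon$ a.a.s.\ whenever $K \geq \lceil d/2 \rceil = (d+1)/2$, which is satisfied here. For the converse I would rerun the functional-submodularity argument of \cite{Yao_Jafar_3LCBC, Tao_FS} that powers the converse of Theorem~\ref{thm:large_K}, tracking precisely the condition on $K$ that it requires; I expect it to deliver $\Delta_l^* \geq d/2$ across this range, while breaking at $d = 2K - 1$ in a manner consistent with the actual capacity $K - 1 < d/2$ at that point.

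The main obstacle is the converse in this last case. The naive subset-cooperation bound only supplies $\lfloor d/2 \rfloor = (d-1)/2$, falling half an integer short of the target $d/2$. Closing this gap demands a careful use of functional submodularity across multiple cooperating subsets, with subspace-dimension bookkeeping sensitive enough to distinguish odd $d \leq 2K - 3$ (where $d/2$ is tight) from $d = 2K - 1$ (where only $K - 1$ is tight). A parallel subtask is verifying that the IA achievability of Theorem~\ref{thm:large_K} extends cleanly down to $K \geq (d+1)/2$ when $d$ is odd, since its original statement requires $K \geq d$.
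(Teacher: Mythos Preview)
Your partition of cases and invocations of Theorem~\ref{thm:large_K} and Corollary~\ref{cor:allK} are correct where they apply, and you have correctly identified the two residual tasks: achievability at $d=2K-1$ and the converse for odd $d$ with $3\le d<2K-1$. Two remarks, one minor and one substantive.

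Minor: there is no need to ``revisit'' the interference-alignment scheme for the achievability in the range $K<d<2K-1$. The bound $\Delta_u^*\le md/(m+m')=d/2$ is already recorded in Corollary~\ref{cor:allK} (the first term of \eqref{eq:genubound}) and its proof only requires $m+m'<d\le K(m+m')$, i.e., $2<d\le 2K$, independent of Theorem~\ref{thm:large_K}'s hypothesis.

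Substantive: your proposed converse for odd $d$ with $K<d<2K-1$ will not go through as planned. The functional-submodularity chain in Section~\ref{sec:proof_converse_large_K} is built from exactly $K_{\overline{m}}=d/\gcd(d,m+m')$ users; for $m+m'=2$ and $d$ odd this is $K_{\overline{m}}=d$, so the argument genuinely consumes $d$ users and cannot be ``tracked'' down to fewer. The paper instead reduces (by monotonicity in $K$) to the minimal interesting case $K=(d+1)/2+1$ and applies a three-user converse: User~$1$ is kept as is, while two overlapping coalitions $\mathcal{K}_1=\{2\}\cup\mathcal{K}_0$ and $\mathcal{K}_2=\{3\}\cup\mathcal{K}_0$ (with $\mathcal{K}_0$ the last $(d-3)/2$ users) play the roles of Users~$2$ and~$3$ in the three-user bound of \cite{Yao_Jafar_ISIT22}. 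The crux is a subspace-intersection computation showing $\rk\big(\FqV_1'\cap[\FqU_1\cap\FqU_{\mathcal{K}_1},\,\FqU_1\cap\FqU_{\mathcal{K}_2}]\big)\aaseq 1$, which is established via an explicit adjugate construction and Schwartz--Zippel. This intersection term is precisely what lifts the naive $(d-1)/2$ cooperation bound to $d/2$, and it is not visible from the Theorem~\ref{thm:large_K} converse machinery.

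Your achievability sketch for $d=2K-1$ is in the right spirit but the ``Schwartz--Zippel style dimension count'' for rank drop is vague (Schwartz--Zippel shows a polynomial is generically nonzero, not that a system of equations generically has a solution). The paper's concrete route is to observe that the $d\times d$ matrices $M_k=[\FqV'_{[K]},\FqV_{[K]\setminus\{k\}}]$ have full rank a.a.s.\ for every $k$; from $\rk(M_K)=d$ one writes $\FqV_K=\sum_{k<K}\alpha_k\FqV_k+\sum_k\alpha_k'\FqV_k'$, and the remaining $\rk(M_k)=d$ conditions force each $\alpha_k\ne 0$, so broadcasting ${\bf S}_k=\FqX^T(\alpha_k\FqV_k+\alpha_k'\FqV_k')$ for $k\in[K-1]$ suffices.
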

The result for even $d$ follows directly from Theorem \ref{thm:large_K} and Corollary \ref{cor:allK}. Specifically, note that for even $d$ we have $d/\mbox{gcd}(d,2) = d/2$, and thus \eqref{eq:Delta_large_K} finds the generic capacity for even $d$ when $K\geq d/2$. Meanwhile, letting $\mathcal{K} = [K]$ in \eqref{eq:genlbound}, together with \eqref{eq:genubound} proves the capacity for $K\leq d/2$.

For cases with odd $d$, \eqref{eq:genlbound} and \eqref{eq:genubound} provide the capacity for $d=1$ and $d>2K-1$ (equivalently, $d\geq 2K$). For the remaining $2$ regimes, \eqref{eq:genubound} shows achievability for $1\leq d<2K-1$. \eqref{eq:genlbound} provides the converse for $d=2K-1$ by specifying $\mathcal{K} = [1:(d-1)/2]$, since $\min\{ d-(d-1)/2,  (d-1)/2\} = (d-1)/2 = K-1$. To complete the proof of Theorem \ref{thm:one_dim}, it remains to show the converse for $3\leq d < 2K-1$, and the achievability for $d=2K-1$. These proofs are provided in Section \ref{sec:proof_thm_one_dim}.

\subsection{Observations}\label{sec:obs}
\begin{enumerate}
\item  \label{obs:obs1} The following observations follow directly by specializing Corollary \ref{cor:allK} to the symmetric LCBC with arbitrary number of users $K$.
\begin{align}
	&~~~~~~~~~~~~~d \leq m' \implies \Delta_g=0;  \\
	&m' \leq d \leq m+m' \implies \Delta_g=d-m';  \\
	&1 < \frac{d}{m+m'} \leq K  \implies \mbox{if $\Delta_g$ exists, then} \notag \\ 
	&\max  \left\{ d-m'\left\lceil \frac{d}{(m+m')}\right\rceil, m\left\lfloor \frac{d}{m+m'} \right\rfloor \right\} \notag \\ 
	&~~~~~~~~~~~~~~\leq \Delta_g\leq \frac{md}{(m+m')}; \label{eq:factor2}\\
	&~~~K(m+m')\leq d \implies \Delta_g=Km.
\end{align}
Regarding \eqref{eq:factor2}, where $\Delta_g$ is not fully established, note that in this regime, if $(m+m')$ divides $d$, then $\Delta_g=md/(m+m')$ is settled. On the other hand, if $(m+m')$ does not divide $d$, and if $\Delta_g$ exists, then we have its value within a multiplicative factor of $2$ because in this regime,
\begin{align}
 \frac{md/(m+m')}{m\lfloor d/(m+m') \rfloor} = \frac{d/(m+m')}{\lfloor d/(m+m') \rfloor} \leq 2.
\end{align}
Thus, the generic capacity of the symmetric LCBC is characterized within a factor of $2$ when it exists, for arbitrary number of users $K$.

\item Theorem \ref{thm:large_K} and Corollary \ref{cor:allK} lead to sharp generic capacity results for various asymmetric cases as well. For example, from Corollary \ref{cor:allK} we find that for any number of users, if $m_k'\leq d\leq m_k+m_k'$ for all $k\in[K]$, then $\Delta_g=d-\min_{k\in[K]}m_k'$. If $d\geq \sum_{k\in[K]}(m_k+m_k')$ then $\Delta_g=\sum_{k\in[K]}m_k$. In Theorem \ref{thm:large_K} the optimal broadcast cost $\Delta_g=dm/(m+m')$ for the non-trivial regime $(m+m')<d\leq K(m+m')$, remains unchanged if we include another $K'$ users, say Users $K+1, K+2, \cdots, K+K'$, with asymmetric demands and side-information such that $m_{k'}\geq m'$ and $m_{k}\leq m$ for all $k'\in[K+1:K+K']$. The original converse still holds because  additional users cannot help. The original achievability still holds because the additional users have more side-information and less demands than original users, so they can pretend to be like the original users by discarding some of their side-information and adding superfluous demands. This creates a symmetric setting with $K+K'$ users, but in this regime $\Delta_g=dm/(m+m')$ does not depend on the number of users.

\item \label{obs:exgain} The generic capacity of the symmetric LCBC $C_g(p,K,d,m,m')$ can be quite significantly higher than the best of the rates achievable through the baseline schemes of random coding $(1/(d-m'))$ and separate transmissions $(1/(Km))$. Gains up to the order of $K$ and $d$ in the number of users and data dimensions are observed. To make this precise, consider the  \emph{extremal} gain \cite{Chan_extremal} of generic capacity over the baseline schemes as a function of the number of users, $K$, 
\begin{align}
	\eta_K&\triangleq \sup_{p,d,m,m'}\frac{C_g(p,K,d,m,m')}{\max\left\{\frac{1}{(d-m')^+},\frac{1}{Km}\right\}} \notag\\
	&= \sup_{d,m,m'}\frac{\frac{m+m'}{md}}{\max\left\{\frac{1}{(d-m')^+},\frac{1}{Km}\right\}}\notag\\
	&=K,\label{eq:etaK}
\end{align}
and as a function of the data dimension $d$, 
\begin{align}
	\eta_d&\triangleq \sup_{p,K,m,m'}\frac{C_g(p,K,d,m,m')}{\max\left\{\frac{1}{(d-m')^+},\frac{1}{Km}\right\}} \notag\\
	&= \sup_{K,m,m'}\frac{\frac{m+m'}{md}}{\max\left\{\frac{1}{(d-m')^+},\frac{1}{Km}\right\}}\notag\\
	&\in[d/4, d/4+1].\label{eq:etad}
\end{align}
To see \eqref{eq:etaK} note that it is trivial that $\eta_K\leq K$, whereas with $d=Km+m'$ and $m=1, m'\rightarrow\infty$, we have $\eta_K\geq \lim_{m'\rightarrow\infty}K(1+m')/(K+m')=K$. For \eqref{eq:etad}, setting $m=1, m'=\lfloor d/2\rfloor$ and $K>\lceil d/2\rceil$, we have $\eta_d\geq (1+\lfloor d/2 \rfloor)(d-\lfloor d/2\rfloor)/d \geq d/4$, whereas we also have $\eta_d\leq (1+m')(1-m'/d)\leq (d+1)^2/(4d)\leq d/4+1$. The strong gains  are indicative of the crucial role of interference alignment in the capacity of linear computation networks, especially when side-information is abundant.

\item In all cases where the question of \emph{existence} of generic capacity is settled, the answer is in the affirmative. However, it remains unknown whether this is always true, e.g., for all $p,d, K, m,m'$, must we have $\Delta_u^*=\Delta_l^*$? We conjecture that this is indeed the case.
\item In all cases where the generic capacity of the LCBC is known, it does not depend on the characteristic, $p$, i.e., for a fixed $K$, the generic optimal broadcast cost can be expressed as $\Delta_g(d,m,m')$. The capacity of the \emph{general} LCBC should depend on the characteristic, because there exist  examples of network coding problems where such dependence has been demonstrated, and there exists an equivalence between network coding and index coding, which in turn is a special case of the LCBC. However, it remains  unknown whether the \emph{generic} capacity of LCBC could depend on the characteristic $p$. 

\item The functional form of $\Delta_g(d,m,m')$ is plotted in Figure \ref{fig:GLCBCK3} for $K=2$, $K=3$ and for large $K$. While  $\Delta_g(d,m,m')$ characterizations are only defined for non-negative integer values of $d,m,m'$, the  functional form is shown as a continuous plot for simplicity. There exist three slightly different forms of the plot for $K=3$, depending on the relationship between $m$ and $m'$. The $K=3$ plot shown in Figure \ref{fig:GLCBCK3} assumes $m<m'$. While the lengths of the steps for $K=3$ are determined by the relative sizes of $m, m', d$, the plot always takes the shape of a staircase function with alternating horizontal (slope = 0) and slanted (slope = 1) edges.  The slope of the outer envelope  is $m/(m'+m)$.

\begin{figure*}[t]
\center
	\begin{tikzpicture}

    	\begin{axis}[
    	axis line style = thick,
      	width=1\textwidth,
      	height=0.35\textwidth,
      	xmin = -0.5, xmax = 22,
		ymin = -1, ymax = 8,
		axis lines = center,
		xlabel = $d$, ylabel = {$\Delta_g(d,m,m')$},
		xtick = \empty, ytick = \empty,
		legend style={at={(0.85,0.75)},anchor=north west},
    	]
      \addplot[color = blue, densely dotted, opacity=1, ultra thick] coordinates {(6, 2) (18.5, 6.16)};
      \addplot[color=black, thick] coordinates {
      	(0,0)
        (4,0)
        (6,2)
        (8,2)
        (9,3)
        (11,3)
        (12,4)
        (16,4)
        (18,6)
        (18.5,6)
      };
      \addplot[color = red!70, line width=0.7mm, opacity=1,   dashed] coordinates {
      	(0,0)
        (4,0)
        (6,2)
        (10,2)
        (12,4)
        (18,4)
        (18.5, 4)
      };
      \addplot[color = black!30, line width=2mm, opacity=0.4] coordinates {
      	(0,0)
        (4,0)
        (6,2)
        (18,2)
        (18.5,2)
              };
              
        \addplot[color = gray, opacity=1, loosely dotted, thick] coordinates {(0, 2) (6, 2)};
    	\addplot[color = gray, opacity=1, loosely dotted, thick] coordinates {(0, 3) (9, 3)};
    	\addplot[color = gray, opacity=1, loosely dotted, thick] coordinates {(0, 4) (12, 4)};
    	\addplot[color = gray, opacity=1, loosely dotted, thick] coordinates {(0, 6) (18, 6)};

    	\addplot[color = gray,  opacity=1, loosely dotted, thick] coordinates {(6, 0) (6, 2)};
    	\addplot[color = gray, opacity=1, loosely dotted, thick] coordinates {(8, 0) (8, 2)};
    	\addplot[color = gray, opacity=1, loosely dotted, thick] coordinates {(9, 0) (9, 3)};
    	\addplot[color = gray, opacity=1, loosely dotted, thick] coordinates {(11, 0) (11, 3)};
    	\addplot[color = gray, opacity=1, loosely dotted, thick] coordinates {(12, 0) (12, 4)};
    	\addplot[color = gray, opacity=1, loosely dotted, thick] coordinates {(16, 0) (16, 4)};
	
    	\addplot[color = gray, opacity=1,  loosely dotted, thick] coordinates {(18, 0) (18, 6)};
    	\addplot[color = gray, opacity=1, loosely dotted, thick] coordinates {(10, 0) (10, 2)};

	\addplot[color = black, loosely dotted, opacity=1, thick] coordinates {(0, 0) (6,2)};
    	
       \addlegendentry{$K\geq d$}
       \addlegendentry{$K=3$}
       \addlegendentry{$K=2$}
       \addlegendentry{$K=1$}
       legend style={at={(0.03,0.5)},anchor=west}
    \end{axis}
    \node at (0.1,0.2) {\small $0$};
    \node at (-0.2, 1.4) {\small $m$};
    \node at (-0.2, 1.9) {\small $1.5m$};
    \node at (-0.2, 2.4) {\small $2m$};
    \node at (-0.2, 3.3) {\small $3m$};
    \node[rotate=30] at (3, -0.2) {\footnotesize $m'$};
    \node[rotate=30] at (4.3, -0.2) {\footnotesize $m+m'$};
    \node[rotate=30] at (5.2, -0.2) {\footnotesize $m+1.5m'$};
    \node[rotate=30] at (6, -0.2) {\footnotesize $1.5(m+m')$};
    \node[rotate=30] at (6.7, -0.2) {\footnotesize $m+2m'$};
    \node[rotate=30] at (7.4, -0.2) {\footnotesize $1.5m+2m'$};
    \node[rotate=30] at (8.2, -0.2) {\footnotesize $2(m+m')$};
    \node[rotate=30] at (11, -0.2) {\footnotesize $2m+3m'$};
    \node[rotate=30] at (12.5, -0.2) {\footnotesize $3(m+m')$};
    \node at (9.6,3) {\color{blue} $\frac{md}{m+m'}$};

  \end{tikzpicture}
   \caption{Functional form of $\Delta_g$. The  large $K$ setting listed as $K\geq d$ for brevity, also allows more general $K$ as in \eqref{eq:large_K}.}
   \label{fig:GLCBCK3}
\end{figure*}
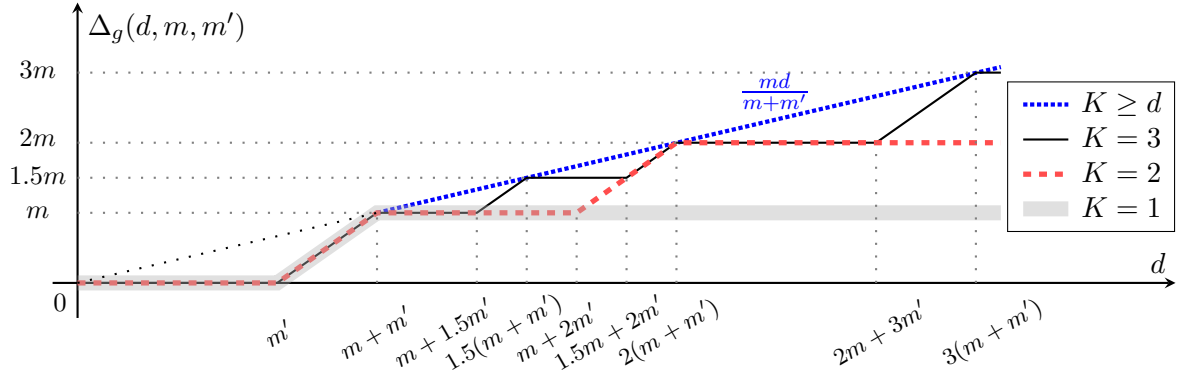

\item A remarkable \emph{scale-invariance} property is evident in $\Delta_g(d,m,m')$, in the sense that scaling $d,m,m'$ by the same constant results in  a  scaling of $\Delta_g(d,m,m')$ by the same constant as well. Specifically, 
\begin{align}
\Delta_g(\lambda d, \lambda m, \lambda m')=\lambda \Delta_g(d, m,m').
\end{align}
This is reminiscent of scale-invariance  noted in DoF studies of wireless networks \cite{Sun_Gou_Jafar}. 

\item The  initial (where $d\leq m+m'$) and final stages $(d\geq K(m+m'))$ represent somewhat trivial regimes that are the same for all $K$. In the remaining non-trivial regime, while $\Delta_g$ for $K=2$ and $K=3$ takes the shape of a slanted staircase function, for large number of users we obtain a smooth ramp function instead. A comparison of $K=2$ with $K=3$ suggests that the number of steps in the staircase increases with $K$, bringing the staircase closer to its upper linear envelope ($md/(m+m')$), until $K$ exceeds a threshold, beyond which the stairs disappear and $\Delta_g$ is equal to that linear envelope.

\item In the non-trivial regime $1<d/(m+m')<K$ for large $K$ (e.g., $K\geq d$) it is remarkable that $\Delta_g$ does not depend on $K$. In other words, once the number of users exceeds a threshold (e.g., $K\geq d$), additional users do not add to the generic broadcast cost of the LCBC. The achievable scheme in this parameter regime relies on linear asymptotic interference alignment (IA) \cite{Cadambe_Jafar_int} over a sub-field of $\mathbb{F}_q$, and while $\Delta_g$ remains unaffected by additional users, the cost of additional users may be reflected in the need for larger $n$ values to approach the same broadcast cost, as the number of users increases. As usual with asymptotic IA schemes, the achievable scheme is far from practical, and serves primarily to establish the fundamental limit of an asymptotic metric, in this case $\Delta_g$. What is possible with practical schemes, e.g., with limited $n$ and other coding-theoretic complexity constraints, remains an important open problem.

\item The generic capacity explored in this work is for LCBC instances over $\mathbb{F}_{p^n}$ where we allow large $n$ but $p$ is arbitrary. This formulation is appealing because the large $n$ limit allows $\mathbb{F}_{p^n}$ to be interpreted as high dimensional vector subspaces over  subfields of $\mathbb{F}_{p^n}$, e.g., $\mathbb{F}_p$. This facilitates  linear vector coding schemes,  allows dimensional analysis from vector space perspectives, and leads to new insights from linear interference alignment schemes, that may be broadly applicable. The alternative, where  $p$ is allowed to be large but $n$ is arbitrary (especially $n=1$) remains unexplored. By analogy with wireless DoF studies, the latter is somewhat  reminiscent of algebraic interference alignment schemes based on rational dimensions \cite{Motahari_Gharan_Khandani}, i.e., non-linear IA schemes.

\item Linear asymptotic IA  has been used previously for network coding problems, e.g., distributed storage exact repair \cite{Cadambe_Jafar_Maleki_Ramchandran_Suh}, and  $K$ user multiple unicast \cite{PBNA}, under the assumption of large `$q$'. Note that since $q=p^n$, a large-$q$ assumption is more general than a large-$n$ assumption, e.g., large-$q$ also allows $n=1$ with large $p$. So at first sight it may seem that our IA schemes that require large-$n$ are weaker than conventional asymptotic IA schemes that only require large-$q$. This interpretation however misses a crucial aspect of our construction, which is somewhat subtle but technically quite significant. Conventional (large-$q$) constructions of asymptotic IA schemes rely on a diagonal structure of underlying linear transformations (matrices), based on symbol extensions (batch processing), and most importantly require these diagonal matrices to have sufficient diversity, which is possible with \emph{time-varying} coefficients \cite{Cadambe_Jafar_int}. In fact, such constructions are also possible for LCBC if we allow time-varying demands and side-information, e.g., new coefficient matrices are drawn i.i.d. uniform for each $\ell\in[L]$. However, for the LCBC with fixed demands and side-information, i.e., fixed coefficient matrices $\FqV_k,\FqV_k'$, symbol extensions only give rise to diagonal matrices that are scaled versions of the identity matrix (consider large $p$ and $n=1$), i.e., they lack the diversity that is needed for linear asymptotic IA schemes. Our construction  works with fixed coefficient matrices, consistent with the original LCBC definition. In this regard, a key technical contribution of this work is to show that the large-$n$ assumption allows sufficient diversity for linear asymptotic IA. For this we modify the conventional asymptotic IA construction to include an additional overhead (see Remark \ref{rem:overhead} in Section \ref{sec:example1}), and then show that while this overhead has a negligible impact on $\Delta_g$, it gives us sufficient diversity a.a.s.

\end{enumerate}

\section{Examples}\label{sec:ex}
In this section let us present two examples to convey the main ideas of our asymptotic IA constructions, with somewhat simplified notation. The complete achievability proof for Theorem \ref{thm:large_K} appears later, in Appendix \ref{sec:proof_large_K}. The first example that is presented in Section \ref{sec:example1} is perhaps the smallest example where asymptotic alignment is needed. However, the proof in this limited case hides too many of the details that are needed in the general case,  so we provide a larger example in the following subsection which may be more useful in understanding the general proof.
\subsection{Example 1: $\left(p,K,d=4,m=1,m'=1\right)$} \label{sec:example1}
Let $L=1$. For $q=p^n$, we will interpret $\mathbb{F}_q$ as an $n$-dimensional vector space over $\mathbb{F}_p$, and design a linear scheme over $\mathbb{F}_p$. Accordingly, let us clarify the notation as follows.
\begin{enumerate}
	\item The elements of the data and coefficient matrices are chosen from ${\Bbb F}_{q} = {\Bbb F}_{p^n}$.
	\item The data $\FqX^T = [\Fqx_1,\Fqx_2,\Fqx_3,\Fqx_4]\in {\Bbb F}_q^{1\times 4}$, is  equivalently represented over  $\mathbb{F}_p$ as $\FpX^T = [\Fpx_1^T,\Fpx_2^T,$ $\Fpx_3^T,\Fpx_4^T]$ $\in\mathbb{F}_p^{1\times 4n}$, where $\Fpx_i\in\mathbb{F}_p^{n\times 1}$ is the $n\times 1$  vector representation of $\Fqx_i$ over $\mathbb{F}_p$.
	\item User $k$ has side information $\FqX^T\FqV_{k}'\in\mathbb{F}_q$ and wishes to compute $\FqX^T\FqV_{k}\in\mathbb{F}_q$, where the elements of $(\FqV_k')^T=[\Fqv_{k1}', \Fqv_{k2}', \Fqv_{k3}', \Fqv_{k4}']$,  $\FqV_k^{T}=[\Fqv_{k1}, \Fqv_{k2}, \Fqv_{k3}, \Fqv_{k4}]$, are drawn i.i.d. uniform in ${\Bbb F}_q$. Equivalently, over $\mathbb{F}_p$,  User $k$ has side information $\FpX^T\FpV_{k}'\in\mathbb{F}_p^{1\times n}$ and wishes to compute $\FpX^T\FpV_{k}\in\mathbb{F}_p^{1\times n}$, where $(\FpV_k')^T=\big[(\Fpv_{k1}')^{T}, (\Fpv_{k2}')^{T}, (\Fpv_{k3}')^{T}, (\Fpv_{k4}')^{T}\big]\in\mathbb{F}_p^{n\times 4n}$,  $(\FpV_k)^{T}=\big[(\Fpv_{k1})^T, (\Fpv_{k2})^T, (\Fpv_{k3})^T, (\Fpv_{k4})^T\big]\in\mathbb{F}_p^{n\times 4n}$ and $\FpV_{ki}', \FpV_{ki}$ are the $n\times n$ matrix representations in $\mathbb{F}_p$ of $\Fqv_{ki}'$ and $\Fqv_{ki}$, respectively.
\item Let $\Fqr$ be  uniformly randomly chosen in ${\Bbb F}_q$, and denote by $\Fpr\in\mathbb{F}_p^{n\times n}$ the matrix representation of $r$ in $\mathbb{F}_p$. 
\item Define the set of variables,
\begin{align}
\mathcal{V}&\triangleq \Big\{\Fqv_{ki}: k\in[K], i\in[4]\Big\}  \cup \Big\{ \Fqv_{ki}': k\in[K],i\in[4] \Big\}\cup\Big\{\Fqr\Big\},
\end{align}
and note that $|\mathcal{V}|=8K+1$.
\end{enumerate}
Our goal is to show that $\Delta_{u}^* \leq dm/(m+m')=2$.
For all $k\in[K]$ and for all $i\in[4]$, let us define $\Fqt_{ki}\in\mathbb{F}_q$ as,
\begin{align} \label{eq:def_t0}
	\Fqt_{ki} \triangleq \Fqv_{ki}'-\Fqv_{ki}\Fqr,
\end{align}
so that we have
\begin{align} \label{eq:def_vvt}
	\begin{bmatrix}
		\Fqv_{k1}' \\ \Fqv_{k2}' \\ \Fqv_{k3}' \\ \Fqv_{k4}'
	\end{bmatrix}
	=
	\begin{bmatrix}
		\Fqv_{k1} \\ \Fqv_{k2} \\ \Fqv_{k3} \\ \Fqv_{k4}
	\end{bmatrix}
	\Fqr
	+
	\begin{bmatrix}
		\Fqt_{k1} \\ \Fqt_{k2} \\ \Fqt_{k3} \\ \Fqt_{k4}
	\end{bmatrix}.
\end{align}
Let $\Fpt_{ki}\in\mathbb{F}_p^{n\times n}$ denote the  $n\times n$ matrix representation of $\Fqt_{ki}$ in $\mathbb{F}_p$, so that we have in $\mathbb{F}_p$,
\begin{align} \label{eq:def_VVT}
	\underbrace{\begin{bmatrix}
		\Fpv_{k1}' \\ \Fpv_{k2}' \\ \Fpv_{k3}' \\ \Fpv_{k4}'
	\end{bmatrix}}_{\FpV_k'\in\mathbb{F}_p^{4n\times n}}
	=
	\underbrace{\begin{bmatrix}
		\Fpv_{k1} \\ \Fpv_{k2} \\ \Fpv_{k3} \\ \Fpv_{k4}
	\end{bmatrix}}_{\FpV_k\in\mathbb{F}_p^{4n\times n}}
	\Fpr
	+
	\underbrace{\begin{bmatrix}
		\Fpt_{k1} \\ \Fpt_{k2} \\ \Fpt_{k3} \\ \Fpt_{k4}
	\end{bmatrix}}_{\FpT_k\in\mathbb{F}_p^{4n\times n}},
\end{align}
and $\FpT_k$ is defined as in \eqref{eq:def_VVT}.

Next let us construct a matrix, $\FpH\in\mathbb{F}_p^{n\times \eta}$, whose column-span over $\mathbb{F}_p$ is almost invariant under linear transformations $\FpV_{ki}$ and $\FpT_{ki}$ for all $k\in [K], i\in [4]$, i.e., $\langle \Fpv_{ki}\FpH \rangle_p \approx \langle \FpH \rangle_p$ and $\langle \FpT_{ki}\FpH \rangle_p \approx \langle \FpH \rangle_p$. In addition, we want $\eta/n\approx 1/2$, so that the columns of $\FpH$ span approximately half of the $n$-dimensional vector space. For this, we invoke the asymptotic IA scheme of \cite{Cadambe_Jafar_int}.  

For a natural number $N$ whose value (as a function of $n$) will be specified later, let us start first by constructing the vector $\FqH \in\mathbb{F}_q^{1\times \eta}$ as follows,
\begin{align} \label{eq:def_h_e1}
	\FqH &= \left[ \prod_{k=1}^K \prod_{i=1}^4 \Fqv_{ki}^{\alpha_{ki}}\Fqt_{ki}^{\beta_{ki}},~ \mbox{s.t.}~ 0 \leq \alpha_{ki}, \beta_{ki} \leq N-1 \right] \\
	&\triangleq (\Fqh_1,\Fqh_2, \ldots ,\Fqh_{\eta}),
\end{align}
and similarly define $\overline{\FqH}\in\mathbb{F}_q^{1\times \overline{\eta}}$ as follows,
\begin{align}
	\overline{\FqH} &= \left[ \prod_{k=1}^K \prod_{i=1}^4 \Fqv_{ki}^{\alpha_{ki}}\Fqt_{ki}^{\beta_{ki}},~ \mbox{s.t.}~ 0 \leq \alpha_{ki}, \beta_{ki} \leq N \right] \\
	&\triangleq (\overline{\Fqh}_1,\overline{\Fqh}_2,\ldots ,\overline{\Fqh}_{\overline{\eta}}).
\end{align}
Note that we have,
\begin{align}
	\eta = N^{8K}, ~~~\overline{\eta} = (N+1)^{8K}.
\end{align}
This construction ensures that the elements of $\Fqv_{ki}\FqH$ and $\Fqt_{ki}\FqH$ are contained among the elements of $\overline{\FqH}$ for all $k\in[K], i\in[4]$. Now let $\Fph_1, \Fph_2,...,\Fph_{\eta}\in\mathbb{F}_p^{n\times n}$ be the matrix representations in $\mathbb{F}_p$ of $\Fqh_1,\Fqh_2,...,\Fqh_{\eta}\in\mathbb{F}_q$, and $\overline{\Fph}_1,\overline{\Fph}_2,...,\overline{\Fph}_\eta\in\mathbb{F}_p^{n\times n}$ be the matrix representations in $\mathbb{F}_p$ of $\overline{\Fqh}_1,\overline{\Fqh}_2,...,\overline{\Fqh}_{\overline{\eta}}\in\mathbb{F}_q$. Define,
\begin{align}
	\FpH &= \big[\Fph_1{\bf 1},\Fph_2{\bf 1},\ldots,\Fph_{\eta}{\bf 1}\big]~~\in\mathbb{F}_p^{n\times\eta},
\end{align}
and
\begin{align}
	\overline{\FpH} &= \big[\overline{\Fph}_1{\bf 1},\overline{\Fph}_2{\bf 1},\ldots,\overline{\Fph}_{\overline{\eta}}{\bf 1}\big]~~\in\mathbb{F}_p^{n\times \overline\eta},
\end{align}
where ${\bf 1}$ denotes the $n\times 1$ vector of $1$'s.
By construction, the columns of $\Fpv_{ki}\FpH$ and $\Fpt_{ki}\FpH$ are subsets of the columns of $\overline{\FpH}$, which  implies that $\forall k\in[K],~ i\in[4]$,
\begin{align} \label{eq:H_subset_e1}
	\langle \Fpv_{ki} \FpH \rangle_p \subset \langle \overline{\FpH} \rangle_p,&&\langle \Fpt_{ki} \FpH \rangle_p \subset \langle \overline{\FpH} \rangle_p.
\end{align}
Consider the matrix $[\FpH, \Fpr\FpH]\in\mathbb{F}_p^{n\times 2\eta}$, and define the event $E_n$ as,
\begin{align} \label{eq:E_n_e1}
E_n&\triangleq \Big(\rk([\FpH, \Fpr\FpH])=2\eta\Big).
\end{align}
The next steps, \eqref{eq:gotozerobegin}-\eqref{eq:gotozero} show that   $E_n$ holds a.a.s., which will subsequently be essential to guarantee successful decoding by each user.

$[\FpH, \Fpr \FpH]\in\mathbb{F}_p^{n\times 2\eta}$ has full column rank if and only if for all ${\bf c}=[c_1,\ldots,c_\eta]^T, {\bf c}'=[c_1', \ldots, c_\eta']^T\in {\Bbb F}_p^{\eta \times 1}$ such that $\big[{\bf c}^T, {\bf c}^{'T}\big] \neq {\bf 0}^{1\times 2\eta}$,
{\small
	\begin{align}
		&{\bf 0}^{n\times 1}\notag\\
		&\neq [
		\FpH, \Fpr \FpH]\begin{bmatrix}{\bf c}\\{\bf c}'\end{bmatrix} \label{eq:gotozerobegin}\\ 
		& = \big[\Fph_1{\bf 1},\Fph_2{\bf 1},\ldots,\Fph_{\eta}{\bf 1}, \Fpr\Fph_1{\bf 1},\Fpr\Fph_2{\bf 1},\ldots,\Fpr\Fph_{\eta}{\bf 1} \big]\begin{bmatrix}{\bf c}\\{\bf c}'\end{bmatrix}\\
		&= \big(c_1\Fph_1{\bf 1}+c_2\Fph_2{\bf 1}+ \cdots +c_{\eta}\Fph_{\eta}{\bf 1}\big)\notag\\
		&~~~~+\big(c'_{1}\Fpr\Fph_1{\bf 1}+c'_2\Fpr\Fph_2{\bf 1} +\ldots+c'_{\eta}\Fpr\Fph_{\eta}{\bf 1}\big)\\
		&= \big(c_1\Fph_1+c_2\Fph_2+\ldots+c_{\eta}\Fph_{\eta}\notag\\
		&~~~~+c'_{1}\Fpr\Fph_1+c'_2\Fpr\Fph_2 +\ldots+c'_{\eta}\Fpr\Fph_{\eta}\big){\bf 1}\\
		&\triangleq \Fpf_{{\bf c},{\bf c}'}{\bf 1} 
	\end{align}
}where $\Fpf_{{\bf c},{\bf c}'}$ is an $n\times n$ matrix in $\mathbb{F}_p$, which has a scalar representation in $\mathbb{F}_q$ as,
	\begin{align}
		\Fqf_{{\bf c},{\bf c}'} &= \underbrace{c_1h_1+c_2h_2+...+c_{\eta}h_{\eta}}_{\eta}\notag \\
		&+\underbrace{c_1'rh_1+c_2'rh_2+...+c_{\eta}'rh_{\eta}}_{\eta} ~\in\mathbb{F}_q.
	\end{align}
	Note that since $\mathbb{F}_p$ is a sub-field of $\mathbb{F}_q$, the scalars $c_i, c_i'$ in $\mathbb{F}_p$, are also scalars  $c_i, c_i'$ in $\mathbb{F}_q$.
	Thus, $\Fpf_{{\bf c},{\bf c}'}{\bf 1}\in\mathbb{F}_p^{n\times 1}$ can be equivalently represented in $\mathbb{F}_q$ as the product of $\Fqf_{{\bf c},{\bf c}'}$ with the scalar representation in ${\Bbb F}_q$,  of ${\bf 1}$ (the all $1$ vector in $\mathbb{F}_p$). Since the ${\Bbb F}_q$ representation of ${\bf 1}$ is not $0$, we obtain that
	\begin{align}
		\Fpf_{{\bf c},{\bf c}'}{\bf 1}\not= {\bf 0}^{n\times 1} \Longleftrightarrow \Fqf_{{\bf c},{\bf c}'} \not= 0.
	\end{align}
	Therefore, $[\FpH, \Fpr \FpH]$ has full column rank if and only if,
	\begin{align}
		P \triangleq \prod_{[{\bf c}^T,{\bf c}^{'T}]\in {\Bbb F}_{p}^{1\times 2\eta}\backslash \{{\bf 0}\}}\Fqf_{{\bf c}, {\bf c}'} \not= 0.\label{eq:nonzero}
	\end{align}
To distinguish polynomials from polynomial functions, let us indicate polynomials with square parentheses around them. For example, $[\Fqf_{{\bf c},{\bf c}'}]\in\mathbb{F}_p[\mathcal{V}]$ is a polynomial in the indeterminate variables $\mathcal{V}$, with coefficients in $\mathbb{F}_p$, and $\Fqf_{{\bf c},{\bf c'}}(\mathcal{V}): \mathbb{F}_q^{|\mathcal{V}|}\rightarrow\mathbb{F}_q$ is a function that maps the variables $\mathcal{V}$, which take values in $\mathbb{F}_q$, to a scalar value in $\mathbb{F}_q$. Similarly, $[P]\in\mathbb{F}_p[\mathcal{V}]$ is a polynomial, whereas $P(\mathcal{V}):\mathbb{F}_q^{|\mathcal{V}|}\rightarrow\mathbb{F}_q$ is a function. The condition \eqref{eq:nonzero}, which is equivalent to the event $E_n$, says that a uniformly random evaluation of the function $P(\mathcal{V})$ produces a non-zero value. We will show that this is true a.a.s. in $n$.
	
	First let us show that $[\Fqf_{{\bf c},{\bf c}'}]\in\mathbb{F}_p[\mathcal{V}]$ is a non-zero polynomial for all $\big[{\bf c}^T,{\bf c}^{'T}\big] \in\mathbb{F}_p^{1\times 2\eta}\setminus{\bf 0}$. We consider two cases.
	\begin{enumerate}
		\item Case I: At least one of $c_1,c_2,...,c_{\eta}$ is not zero. Let us set $\Fqr = 0$, which implies $\Fqt_{ki}= \Fqv'_{ki}$ by \eqref{eq:def_vvt}. Meanwhile, $\Fqh_1,\Fqh_2,...,\Fqh_{\eta}$ are different monomials in the elements of $\Fqv_{ki}'$ and $\Fqv_{ki}$ due to \eqref{eq:def_h_e1}. Since different monomials are linearly independent, we have that $[\Fqf_{{\bf c},{\bf c}'}] = c_1\Fqh_1+c_2\Fqh_2+\ldots+c_{\eta}\Fqh_{\eta}$ is a non-zero polynomial.
		\item Case II: $c_1=c_2=...=c_{\eta}=0$ and thus at least one of $c_1',c_2',...,c_{\eta}' \not=0$. For this case, we have $[\Fqf_{{\bf c},{\bf c}'}] = \Fqr(c_1'\Fqh_1+c_2'\Fqh_2+...+c_{\eta}'\Fqh_{\eta})$, which is also a non-zero polynomial since it is a product of $\Fqr$ with a non-zero polynomial.
	\end{enumerate}
	Thus, $[\Fqf_{{\bf c},{\bf c}'}]\in\mathbb{F}_p[\mathcal{V}]$ is a non-zero polynomial. It has degree not more than $12K(N-1)+1$. Therefore, $[P]\in\mathbb{F}_p[\mathcal{V}]$ is a non-zero polynomial  with degree not more than $(p^{2\eta}-1)[12K(N-1)+1]$. By Schwartz-Zippel Lemma, when all the variables $\mathcal{V}$ are assigned i.i.d. uniformly chosen values  in ${\Bbb F}_q$,
	\begin{align}
		\mathsf{Pr}\Big(P(\mathcal{V}) \not= 0\Big) &\geq 1-\frac{(p^{2\eta}-1)[12K(N-1)+1]}{q}\\
		&=1-\frac{(p^{2\eta}-1)[12K(N-1)+1]}{p^n} \\
		&\geq 1-12K\frac{N}{p^{n-2\eta}} \\
		& \to 1 \label{eq:thusEn}
	\end{align}
	as $n\to \infty$ if $\lim_{n\to \infty}\frac{N}{p^{(n-2\eta)}} = 0$.

Now let us specify the value of $N$  as follows,
\begin{align} \label{eq:def_N}
	N = \left\lfloor \left( \frac{n-\sqrt{n}}{2} \right)^{1/(8K)} \right\rfloor,
\end{align}
from which it follows that,
\begin{align}
	N \leq \left( \frac{n-\sqrt{n}}{2} \right)^{1/(8K)} 
\end{align}
and
\begin{align}
	n-2\eta = n-2N^{8K} \geq \sqrt{n} .\label{eq:overhead}
\end{align}
Therefore,
\begin{align}
	\lim_{n\to \infty} \frac{N}{p^{n-2\eta}} \leq \lim_{n\to \infty}\frac{\left( \frac{n-\sqrt{n}}{2} \right)^{1/(8K)}}{p^{\sqrt{n}}} = 0\label{eq:gotozero}
\end{align}
and since $N\geq 0$, we have $\lim_{n\to \infty}\frac{N}{p^{n-2\eta}} = 0$. Thus, we have shown that $E_n$ holds a.a.s., i.e., $[\FpH, \Fpr\FpH]\in\mathbb{F}_p^{n\times 2\eta}$ has full column rank $2\eta$, a.a.s.

Now let ${\bf Z} = ({\bf I}^{n\times n}|[\FpH, \Fpr \FpH]) \in {\Bbb F}_p^{n\times (n-2\eta)}$, so that $[\FpH, \Fpr \FpH, {\bf Z}]\in\mathbb{F}_p^{n\times n}$ has full rank $n$.
Let the server broadcast ${\bf S}=({\bf S}_0, {\bf S}_1, {\bf S}_2,\cdots,{\bf S}_K)\in\mathbb{F}_p^{1\times(4\overline\eta+K(n-2\eta))}$, such that,
\begin{align}
	{\bf S}_0 = \FpX^T
	\begin{bmatrix}
		\overline{\FpH} & {\bf 0} & {\bf 0} & {\bf 0} \\
		{\bf 0} & \overline{\FpH} & {\bf 0} & {\bf 0} \\
		{\bf 0} & {\bf 0} & \overline{\FpH} & {\bf 0} \\
		{\bf 0} & {\bf 0} & {\bf 0} & \overline{\FpH}
	\end{bmatrix}~~\in\mathbb{F}_p^{1\times 4\overline{\eta}}
\end{align}
and for all $k\in[K]$,
\begin{align}
	{\bf S}_k =  \FpX^T \FpV_k{\bf Z}
	~~\in\mathbb{F}_p^{1\times(n-2\eta)}.
\end{align}

\begin{remark}\label{rem:overhead} From \eqref{eq:overhead} we note that $n\geq 2\eta+\sqrt{n}$. The $\sqrt{n}$ term represents an overhead that is not present in conventional asymptotic IA constructions. The overhead is evident in the separate transmissions, of the projections of the desired information along the columns of ${\bf Z}$, for each of the $K$ users, as in ${\bf S}_1, {\bf S}_2, \cdots, {\bf S}_K$. Digging deeper, this overhead is essential for our scheme to ensure  that $E_n$ holds a.a.s. Note that it is  because of this $\sqrt{n}$ overhead that we have $\lim_{n\to \infty}\frac{N}{p^{(n-2\eta)}} = 0$ in \eqref{eq:thusEn},\eqref{eq:gotozero}, because in the fraction $\frac{N}{p^{n-2\eta}}$, the numerator is sub-linear in $n$ (roughy $n^{1/8K}$), while the denominator is super-polynomial in $n$, roughly $(p^{\sqrt{n}}$). Fortunately, the extra broadcast cost of $K\sqrt{n}$ due to this overhead, is negligible compared to $n$ for large $n$, so it does not affect the asymptotic achievability. 
\end{remark}

The decoding process works as follows.
User $k$ is able to compute $\FpX^T\FpV_{k}'\FpH$ directly from its side information $\FpX^T\FpV_k'$. The user is able to compute $\FpX^T\FpV_k\FpH$ and $\FpX^T\FpT_k\FpH$ from ${\bf S}_0$, since,
\begin{align}
	\left\langle \FpV_k\FpH \right\rangle_p = \Bigg\langle \begin{bmatrix}\Fpv_{k1} \FpH\\ \Fpv_{k2} \FpH \\ \Fpv_{k3} \FpH\\ \Fpv_{k4} \FpH\end{bmatrix} \Bigg\rangle_p \subset 
		\Bigg\langle \begin{bmatrix}
		\overline{\FpH} & {\bf 0} & {\bf 0} & {\bf 0} \\
		{\bf 0} & \overline{\FpH} & {\bf 0} & {\bf 0} \\
		{\bf 0} & {\bf 0} & \overline{\FpH} & {\bf 0} \\
		{\bf 0} & {\bf 0} & {\bf 0} & \overline{\FpH}
	\end{bmatrix} \Bigg\rangle_p
\end{align}
and
\begin{align}
	\left\langle \FpT_k\FpH \right\rangle_p = \Bigg\langle \begin{bmatrix}\Fpt_{k1} \FpH\\ \Fpt_{k2}\FpH \\ \Fpt_{k3}\FpH \\ \Fpt_{k4}\FpH\end{bmatrix} \Bigg\rangle_p \subset 
		\Bigg\langle \begin{bmatrix}
		\overline{\FpH} & {\bf 0} & {\bf 0} & {\bf 0} \\
		{\bf 0} & \overline{\FpH} & {\bf 0} & {\bf 0} \\
		{\bf 0} & {\bf 0} & \overline{\FpH} & {\bf 0} \\
		{\bf 0} & {\bf 0} & {\bf 0} & \overline{\FpH}
	\end{bmatrix}\Bigg\rangle_p
\end{align}
due to \eqref{eq:H_subset_e1}. Thus, User $k$ is able to compute 
\begin{align}
\FpX^T\FpV_k\Fpr\FpH&=\FpX^T\FpV_k'\FpH-\FpX^T\FpT_k\FpH
\end{align}
according to \eqref{eq:def_VVT}. Together with ${\bf S}_k$, User $k$ thus obtains,
\begin{align}
[\FpX^T\FpV_k\FpH, ~\FpX^T\FpV_k\Fpr\FpH, ~ {\bf S}_k] &=\FpX^T\FpV_k[\FpH, \Fpr\FpH, {\bf Z}]
\end{align}
and since $[\FpH, \Fpr\FpH, {\bf Z}]\in\mathbb{F}_p^{n\times n}$ is invertible (has full rank) a.a.s., User $k$ is able to retrieve its desired computation, $\FpX^T\FpV_k\in\mathbb{F}_p^{1\times n}$ a.a.s.

For $q=p^n$, the cost of broadcasting each $p$-ary symbol is $1/n$ in $q$-ary units. Thus, the broadcast cost of this scheme is,
\begin{align}
	\Delta_n = \frac{4\overline{\eta}+K(n-2\eta)}{n}.\label{eq:deltan}
\end{align}
The next few steps \eqref{eq:limdeltanbegin}-\eqref{eq:limdeltanend} show that $\lim_{n\rightarrow\infty}\Delta_n= 2$.

By \eqref{eq:def_N}, we have that,
\begin{align}
	\eta = N^{8K} \leq \frac{n-\sqrt{n}}{2} \leq (N+1)^{8K} = \overline{\eta}\label{eq:limdeltanbegin}
\end{align}
which implies that
\begin{align}
	\lim_{n\to \infty}\frac{\eta}{n} =\lim_{n\to \infty} \frac{N^{8K}}{n} \leq \lim_{n\to \infty}\frac{n-\sqrt{n}}{2n} = \frac{1}{2}.
\end{align}
On the other hand,
\begin{align}
	\lim_{n\to \infty}\frac{\eta}{n} &=\lim_{n\to \infty} \frac{(N+1)^{8K}/((1+1/N)^{8K})}{n} \notag\\
	&\geq \lim_{N\to \infty} \frac{1}{(1+1/N)^{8K}}\lim_{n\to \infty}\frac{n-\sqrt{n}}{2n} \notag\\
	&= 1\times \frac{1}{2} = \frac{1}{2}.
\end{align}
Thus, we have that 
\begin{align}
	\lim_{n\to \infty}\frac{\eta}{n} = \frac{1}{2},\label{eq:limeta}
\end{align}
which also implies that
\begin{align}
	&\lim_{n\to \infty}\frac{\overline{\eta}}{n} = \lim_{n\to \infty}  \frac{\eta}{n} \times\lim_{N\to \infty}(1+1/N)^{8K} = \frac{1}{2} \times 1 = \frac{1}{2}.\label{eq:limetabar}
\end{align}
Combining \eqref{eq:deltan} with \eqref{eq:limeta} and \eqref{eq:limetabar} we have 
\begin{align}
\lim_{n\to \infty}\Delta_n = 4\times \frac{1}{2} + 0 =2\label{eq:limdeltanend}
\end{align}
since $K$ is independent of $n$. Thus, for any $\varepsilon >0$, $\exists n_0>0$ such that $\Delta_n \leq 2+\varepsilon$ for all $n\geq n_0$. Recall that the broadcast cost $\Delta_n$ is achievable if $E_n$ holds, i.e., $\Delta^*(\Lambda_n)\leq \Delta_n\leq 2+\epsilon$ if $n\geq n_0$ and $E_n$ holds. Now let us show that $2+\epsilon$ is achievable a.a.s., by evaluating the limit in \eqref{eq:limaasach} as follows,
\begin{align}
&\lim_{n\rightarrow\infty}\mathsf{Pr}\Big(\Delta^*(\Lambda_n) \leq 2+\epsilon\Big) \notag \\
&\geq \lim_{n\rightarrow\infty}\mathsf{Pr}\Bigg(\Big(\Delta^*(\Lambda_n)\leq 2+\epsilon\Big)\land E_n\Bigg)\\
&=\lim_{n\rightarrow\infty}\mathsf{Pr}(E_n)\mathsf{Pr}\Big(\Delta^*(\Lambda_n)\leq 2+\epsilon~\Big|~ E_n\Big)\\
&=1
\end{align}
which implies that $\lim_{n\rightarrow\infty}\mathsf{Pr}\Big(\Delta^*(\Lambda_n)\leq 2+\epsilon\Big)=1$.
Since this is true for all $\varepsilon>0$, according to \eqref{eq:infdeltau} we have $\Delta_u^*\leq \inf\{2+\epsilon\mid \epsilon>0\} = 2$. $\hfill \qed$

\subsection{Example 2: $\left(p,K,d=4,m=2,m'=1\right)$}
Let $L=1$. For $q=p^n$, we will interpret ${\Bbb F}_q$ as an $n$-dimensional vector space over ${\Bbb F}_p$, and design a linear scheme over ${\Bbb F}_p$. Accordingly, let us clarify the notation as follows.
\begin{enumerate}
	\item The elements of the data and coefficient matrices are chosen from ${\Bbb F}_{q} = {\Bbb F}_{p^n}$.
	\item The data $\FqX^T = [\Fqx_1,\Fqx_2,\Fqx_3,\Fqx_4]\in {\Bbb F}_q^{1\times 4}$, is  equivalently represented over  $\mathbb{F}_p$ as $\FpX^T = [\Fpx_1^T,\Fpx_2^T,$ $\Fpx_3^T,\Fpx_4^T]$ $\in\mathbb{F}_p^{1\times 4n}$, where $\Fpx_i\in\mathbb{F}_p^{n\times 1}$ is the $n\times 1$ vector representation of $\Fqx_i$ over $\mathbb{F}_p$.
	\item User $k$ has side information $\FqX^T\FqV_k' \in {\Bbb F}_q$ and wishes to compute $\FqX^T \FqV_k = \FqX^T\big[\FqV_k^{1},\FqV_k^{2}\big] \in {\Bbb F}_q^{1\times 2}$, where the elements of $(\FqV_k')^T = [\Fqv'_{k1},\Fqv'_{k2},\Fqv'_{k3},\Fqv'_{k4}]$, $(\FqV_k^{\mu})^T = [\Fqv_{k1}^{\mu},\Fqv_{k2}^{\mu},\Fqv_{k3}^{\mu},\Fqv_{k4}^{\mu}], {\mu}\in[2]$ are drawn i.i.d. uniform in ${\Bbb F}_q$. Equivalently, over ${\Bbb F}_p$, User $k$ has side information $\FpX^T \FpV'_k$ and wishes to compute $\FpX^T \FpV_k$, where $(\FpV_k')^T = \big[(\Fpv'_{k1})^T,(\Fpv'_{k2})^T,(\Fpv'_{k3})^T,(\Fpv'_{k4})^T \big] \in {\Bbb F}_p^{n\times 4n}$, $\FpV_k = [\FpV_k^1,\FpV_k^2]$ and $(\FpV_k^{\mu})^T=\big[ (\Fpv_{k1}^{\mu})^T,(\Fpv_{k2}^{\mu})^T,(\Fpv_{k3}^{\mu})^T,(\Fpv_{k4}^{\mu})^T \big] \in {\Bbb F}_p^{n\times 4n}$. $\Fpv_{ki}',\Fpv_{ki}^{\mu}, {\mu}\in[2]$ are the $n\times n$ matrix representations in ${\Bbb F}_p$ of $\Fqv_{ki}'$ and $\Fqv_{ki}^{\mu}$, respectively.
	\item Let $\Fqr, \Fqs_1, \Fqs_2$ be uniformly randomly chosen in ${\Bbb F}_q$.  Denote by $\Fpr$ the matrix representations of $\Fqr, \Fqs_1, \Fqs_2$ in ${\Bbb F}_p$, respectively.
	\item Define the set of variables,
		\begin{align}
			\mathcal{V}& \triangleq \Big\{ \Fqv_{ki}^{j}: k\in [K], i\in [4], j \in[2] \Big\}  \cup \Big\{ \Fqv_{ki}': k\in [K], i\in [4] \Big\} \cup \Big\{ \Fqr, \Fqs_1, \Fqs_2 \Big\},
		\end{align}
		and note that $|\mathcal{V}| = 12K+3$.
\end{enumerate}
Our goal is to show that $\Delta^*_u \leq dm/(m+m') = \frac{8}{3}$. For all $k\in [K], i\in [4], {\mu}\in[2]$, let us define $t_{ki}^{\mu}$ such that
\begin{align} \label{eq:def_vvt_e2}
	\begin{bmatrix}
		\Fqv_{k1}' \\ \Fqv_{k2}' \\ \Fqv_{k3}' \\ \Fqv_{k4}'
	\end{bmatrix}
	=
	\begin{bmatrix}
		\Fqv_{k1}^{\mu}  \\ \Fqv_{k2}^{\mu} \\ \Fqv_{k3}^{\mu} \\ \Fqv_{k4}^{\mu}
	\end{bmatrix}
	\Fqr
	+
	\begin{bmatrix}
		\Fqt_{k1}^{\mu}  \\ \Fqt_{k2}^{\mu}  \\ \Fqt_{k3}^{\mu}  \\ \Fqt_{k4}^{\mu} 
	\end{bmatrix}.
\end{align}

Let $\Fpt_{ki}^{\mu} \in {\Bbb F}_p^{n\times n}$ denote the $n\times n$ matrix representations of $\Fqt_{ki}^{\mu}$ in ${\Bbb F}_p$, so that we have in ${\Bbb F}_p$, 
\begin{align} \label{eq:def_VVT_e2}
	\underbrace{\begin{bmatrix}
		\Fpv_{k1}' \\ \Fpv_{k2}' \\ \Fpv_{k3}' \\ \Fpv_{k4}'
	\end{bmatrix}}_{\FpV_k' \in {\Bbb F}_p^{4n \times n}}
	=
	\underbrace{\begin{bmatrix}
		\Fpv_{k1}^{\mu}  \\ \Fpv_{k2}^{\mu} \\ \Fpv_{k3}^{\mu} \\ \Fpv_{k4}^{\mu}
	\end{bmatrix}}_{\FpV_k^{\mu} \in {\Bbb F}_p^{4n \times n}}
	\Fpr
	+
	\underbrace{\begin{bmatrix}
		\Fpt_{k1}^{\mu}  \\ \Fpt_{k2}^{\mu}  \\ \Fpt_{k3}^{\mu}  \\ \Fpt_{k4}^{\mu} 
	\end{bmatrix}}_{\FpT_k^{\mu} \in {\Bbb F}_p^{4n \times n}}
\end{align}
for all $k\in[K], {\mu}\in [2]$, and $\FpT_k^{\mu},{\mu}\in[2]$ are defined as in \eqref{eq:def_VVT_e2}.

Since $m=2>1$ in this example, unlike what we did in the previous $m=1$ example, this time we will need to create two $\FpH$ matrices, namely, $\FpH_1$ and $\FpH_2$, such that $\FpH_1$ is almost invariant under linear transformations $\Fpv_{ki}^j$ and $\Fpt_{ki}^1, \forall k\in[K], i\in[4],j\in[2]$, i.e., $\langle \Fpv_{ki}^j  \FpH_1 \rangle_p \approx \langle \FpH_1 \rangle_p$ and $\langle \Fpt_{ki}^1 \FpH_1 \rangle_p \approx \langle \FpH_1 \rangle_p$. $\FpH_2$ is almost invariant under linear transformations $\Fpv_{ki}^j$ and $\Fpt_{ki}^2, \forall k\in[K], i\in[4],j\in[2]$, i.e., $\langle \Fpv_{ki}^j  \FpH_2 \rangle_p \approx \langle \FpH \rangle_p$ and $\langle \Fpt_{ki}^2 \FpH_2 \rangle_p \approx \langle \FpH_2 \rangle_p$. In addition, we want $\eta/n\approx 1/3$ so that the columns of $\FpH_\mu,\mu \in[2]$ span approximately one third of the $n$-dimensional vector space. Moreover, $\FpH_1$ and $\FpH_2$ are required to be linearly independent a.a.s. For these, we invoke the asymptotic IA scheme of \cite{Cadambe_Jafar_int}, and design $\FqH_{\mu}, \overline{\FqH}_{\mu}, \mu \in [2]$ in the following way,
\begin{align} \label{eq:def_h_e2}
	&\FqH_{\mu}^{1\times \eta} = \Bigg[ \Fqs_{\mu} \prod_{k=1}^K \Bigg( \Big( \prod_{i=1}^4\prod_{j=1}^2 (\Fqv_{ki}^j )^{\alpha_{ki}^j} \Big) \Big( \prod_{i=1}^4(\Fqt_{ki}^{\mu})^{\beta_{ki}^\mu} \Big) \Bigg),  \mbox{s.t.}~ 0 \leq \alpha_{ki}^j, \beta_{ki}^\mu \leq N-1 \Bigg] \\
	&~~~~~~~~~\triangleq (\Fqh_\mu^1,\Fqh_\mu^2,...,\Fqh_\mu^{\eta}), ~~~~~~ \forall \mu\in[2], \\
	&\overline{\FqH}_{\mu}^{1\times \eta} = \Bigg[ \Fqs_{\mu} \prod_{k=1}^K \Bigg( \Big( \prod_{i=1}^4\prod_{j=1}^2 (\Fqv_{ki}^j  )^{\alpha_{ki}^j} \Big) \Big( \prod_{i=1}^4(\Fqt_{ki}^{\mu})^{\beta_{ki}^\mu} \Big) \Bigg), \mbox{s.t.}~ 0 \leq \alpha_{ki}^j, \beta_{ki}^\mu \leq N \Bigg] \\
	&~~~~~~~~~\triangleq (\overline{\Fqh}_\mu^1,\overline{\Fqh}_\mu^2,...,\overline{\Fqh}_\mu^{\overline{\eta}}), ~~~~~~ \forall \mu\in[2].
\end{align}
Note that we have,
\begin{align}
	\eta = N^{12K},~~ \overline{\eta} = {(N+1)}^{12K}.
\end{align}
This construction ensures that for $\mu\in[2]$, the elements of $\Fqv_{ki}^j \FqH_\mu$ and $\Fqt_{ki}^\mu \FqH_\mu$ are contained among the elements of $\overline{\FqH}_\mu$ for all $i\in[4], j\in[2]$. Now let $\Fph_1, \Fph_2, \hdots, \Fph_{\eta} \in {\Bbb F}_p^{n\times n}$ be the matrix representations in ${\Bbb F}_p$ of $\Fqh_1,\Fqh_2,\hdots,\Fqh_{\eta} \in {\Bbb F}_q$, and $\overline{\Fph}_1,\overline{\Fph}_2,\hdots, \overline{\Fph}_{\overline{\eta}} \in {\Bbb F}_p^{n\times n}$ be the matrix representations in ${\Bbb F}_p$ of $\overline{\Fqh}_1,\overline{\Fqh}_2,\hdots, \overline{\Fqh}_{\overline{\eta}} \in {\Bbb F}_q$. Define,
\begin{align}
	\FpH_\mu = \big[ \Fph_{\mu}^1{\bf 1}, \Fph_{\mu}^2{\bf 1}, \hdots, \Fph_{\mu}^{\eta}{\bf 1}  \big] ~~ \in {\Bbb F}_p^{n\times \eta}, ~~~~~~ {\mu} \in [2]
\end{align}
and
\begin{align}
	\overline{\FpH}_{\mu} = \big[ \overline{\Fph}_{\mu}^1{\bf 1}, \overline{\Fph}_{\mu}^2{\bf 1}, \hdots, \overline{\Fph}_{\mu}^{\overline{\eta}}{\bf 1}  \big]~~\in {\Bbb F}_p^{n\times \overline{\eta}}, ~~~~~~ {\mu} \in [2],
\end{align}
where ${\bf 1}$ denotes the $n\times 1$ vector of $1$'s. By construction, the columns of $\Fpv_{ki}^j \FpH_\mu$ and $\Fpt_{ki}^\mu\FpH_\mu$ are subsets of the columns of $\FpH_{\mu}$, which implies that $\forall \mu\in[2], k\in[K], i\in[4],j\in[2]$,
\begin{align}  \label{eq:H_subset_e2}
	\langle \Fpv_{ki}^j \FpH_\mu  \rangle_p \subset \langle \overline{\FpH}_\mu \rangle_p, && \langle \Fpt_{ki}^\mu \FpH_\mu \rangle_p \subset \langle \overline{\FpH}_\mu \rangle_p.
\end{align}
Consider the $m=2$ matrices, $[\FpH_1,\FpH_2,\Fpr\FpH_1]\in {\Bbb F}_p^{n\times 3\eta}$ and $[\FpH_1,\FpH_2,\Fpr\FpH_2] \in {\Bbb F}_p^{n\times 3\eta}$, and define the event $E_n$ as,
\begin{align}
	&E_n \triangleq \left( \rk([\FpH_1,\FpH_2,\Fpr\FpH_1])= 3\eta \right)  \wedge \left( \rk([\FpH_1,\FpH_2,\Fpr\FpH_2])= 3\eta \right).
\end{align}
The next steps \eqref{eq:gotozerobegin_e2}-\eqref{eq:gotozero_e2}, show that $E_n$ holds a.a.s., which will subsequently be essential to guarantee successfully decoding by each user. In fact, due to symmetry, it suffices to prove that $ \rk([\FpH_1,\FpH_2,\Fpr\FpH_1]) \aaseq 3\eta $.

$[\FpH_1,\FpH_2,\Fpr\FpH_1]\in {\Bbb F}_p^{n\times 3\eta}$ has full column rank if and only if for all ${\bf c}_1 = [c_1^1,c_1^2,\hdots, c_1^{\eta}]^T, {\bf c}_2= [c_2^1,c_2^2,\hdots, c_2^{\eta}]^T, {\bf c}'= [c'^1,c'^2,\hdots, c'^{\eta}]^T \in {\Bbb F}_p^{\eta \times 1}$ such that $\overline{{\bf c}}^T = [{\bf c}_1^T, {\bf c}_2^T, {\bf c}'^T]\not= {\bf 0}^{1\times 3\eta}$, 
\begin{align} \label{eq:gotozerobegin_e2}
	{\bf 0}^{n\times 1} &\not= [ \FpH_1,  \FpH_2, \Fpr\FpH_1]
	\overline{{\bf c}} \\
	& = \FpH_1{\bf c}_1+  \FpH_2{\bf c}_2 + \Fpr\FpH_1{\bf c}' \\
	& = \sum_{j=1}^{\eta}c_1^j  \Fph_1^{j}{\bf 1} + \sum_{j=1}^{\eta}c_2^j  \Fph_2^{j}{\bf 1} + \sum_{j=1}^{\eta}c'^j \Fpr\Fph_1^{j}{\bf 1} \\
	& = \left(\sum_{j=1}^{\eta}c_1^j  \Fph_1^{j} + \sum_{j=1}^{\eta}c_2^j  \Fph_2^{j} + \sum_{j=1}^{\eta}c'^j \Fpr\Fph_1^{j}\right) {\bf 1} \\
	& \triangleq \Fpf_{\overline{\bf c}} {\bf 1}
\end{align}
where $\Fpf_{\overline{\bf c}}$ is an $n\times n$ matrix in ${\Bbb F}_p$, which has a scalar representation in ${\Bbb F}_q$ as,
\begin{align}
	\Fqf_{\overline{\bf c}} = \sum_{j=1}^{\eta}c_1^j  \Fqh_1^{j} + \sum_{j=1}^{\eta}c_2^j \Fqh_2^{j} + \sum_{j=1}^{\eta}c'^j \Fqr \Fqh_1^{j} && \in {\Bbb F}_q.
\end{align}
Note that since ${\Bbb F}_p$ is a sub-field of ${\Bbb F}_q$, the scalars $c_1^j,c_2^j, c'^j$ in ${\Bbb F}_p$ are also scalars $c_1^j,c_2^j, c'^j$ in ${\Bbb F}_q$.

Thus, $\Fpf_{\overline{\bf c}} \in {\Bbb F}_p^{n\times 1}$ can be equivalently represented in ${\Bbb F}_q$ as the product of $\Fqf_{\overline{\bf c}}$ with the scalar representation in ${\Bbb F}_q$, of ${\bf 1}$ (the all $1$ vector in ${\Bbb F}_p$). Since the ${\Bbb F}_q$ representation of ${\bf 1}$ is not $0$, we obtain that
\begin{align}
	\Fpf_{\overline{\bf c}} {\bf 1} \not= {\bf 0}^{n\times 1} \Longleftrightarrow \Fqf_{\overline{\bf c}} \not= 0.
\end{align}
Therefore, $[ \FpH_1, \FpH_2,\Fpr\FpH_1]\in {\Bbb F}_p^{n\times 3\eta}$ has full column rank if and only if,
\begin{align}
	P \triangleq \prod_{\overline{\bf c} \in {\Bbb F}_p^{3\eta \times 1} \backslash \{{\bf 0} \}} \Fqf_{\overline{\bf c}} \not =0. \label{eq:nonzero_e2}
\end{align}
The condition of \eqref{eq:nonzero_e2}, which is equivalent to the event $E_n$, says that a uniformly random evaluation of the function $P(\mathcal{V})$ produces a non-zero value. We will show that this is true a.a.s. in $n$.
\begin{enumerate}
	\item Case I: ${\bf c}_1$ or ${\bf c}_2$ is not the zero vector. Let us set $\Fqr=0$, which implies $\Fqf_{\overline{\bf c}} = \sum_{j=1}^{\eta}c_1^j  \Fqh_1^{j}+ \sum_{j=1}^{\eta}c_2^j  \Fqh_2^{j}$, and that  $\Fqt_{ki}^\mu = \Fqv'_{ki}, \forall \mu\in[2], i\in[4]$ by \eqref{eq:def_vvt_e2}. Meanwhile, $\Fqh_{1}^1,\Fqh_{1}^2,\hdots, \Fqh_{1}^{\eta}$ are different monomials in the elements of $\Fqv_{ki}^j$, $\Fqv_{ki}'$ and $\Fqs_1$. Similarly, $\Fqh_{2}^1,\Fqh_{2}^2,\hdots, \Fqh_{2}^{\eta}$ are different monomials in the elements of $\Fqv_{ki}^j$, $\Fqv_{ki}'$ and $\Fqs_2$ due to \eqref{eq:def_h_e2}. Moreover, since any $\Fqh_1^j$ has the factor $\Fqs_1$ but does not have the factor $\Fqs_2$, and any $\Fqh_2^j$ has the factor $\Fqs_2$ but does not have the factor $\Fqs_1$, it follows that $ \Fqh_1^1, \Fqh_1^2,\hdots,  \Fqh_1^{\eta},  \Fqh_2^1, \Fqh_2^2,\hdots,  \Fqh_2^{\eta}$ are different monomials. Since different monomials are linearly independent, we have that $[\Fqf_{\overline{\bf c}}]$ is a non-zero polynomial.
	\item Case II: ${\bf c}_1 = {\bf c}_2 = {\bf 0}$ and thus ${\bf c}'\not={\bf 0}$. For this case, we have $[\Fqf_{\overline{\bf c}}] = \Fqr(c'^1\Fqh_1^1+c'^2\Fqh_1^2+\hdots+c'^\eta\Fqh_1^\eta)$, which is also a non-zero polynomial since it is a product of $\Fqr$ with a non-zero polynomial (since $\Fqh_1^1,\Fqh_1^2,\hdots,\Fqh_1^\eta$ are linearly independent).
\end{enumerate}
Thus, $[\Fqf_{\overline{\bf c}}]$ is a non-zero polynomial. Since $h_i^j$ has degree not more than $16(N-1)K+1$, $[\Fqf_{\overline{\bf c}}]$ has degree not more than $16NK+2$. Therefore, $[P] \in {\Bbb F}_p[\mathcal{V}]$ is a non-zero polynomial with degree not more than $(p^{3\eta}-1)(16NK+2)$. By Schwartz-Zippel Lemma, when all the variables $\mathcal{V}$ are assigned i.i.d. uniformly chosen values in ${\Bbb F}_q$, 
\begin{align}
	{\sf Pr}\big( P\not=0 \big) &\geq 1-\frac{(p^{3\eta}-1)[16(N-1)K+2]}{q} \\
	&= 1-\frac{(p^{3\eta}-1)[16(N-1)K+2]}{p^n} \\
	& \geq 1- 16K\frac{N}{p^{n-3\eta}} \\
	& \to 1
\end{align}
as $n\to \infty$ if $\lim_{n\to \infty}\frac{N}{p^{n-3\eta}} = 0$. 

Now let us specify the value of $N$ as follows,
\begin{align} \label{eq:def_N_e2}
	N = \left\lfloor \left( \frac{n-\sqrt{n}}{3} \right)^{1/(12K)} \right\rfloor,
\end{align}
from which it follows that,
\begin{align}
	N \leq \left( \frac{n-\sqrt{n}}{3} \right)^{1/(12K)} 
\end{align}
and
\begin{align}
	n-3\eta = n-3N^{12K} \geq \sqrt{n}.
\end{align}
Therefore,
\begin{align} \label{eq:gotozero_e2}
	\lim_{n\to \infty} \frac{N}{p^{n-3\eta}} \leq \lim_{n\to \infty}\frac{\left( \frac{n-\sqrt{n}}{3} \right)^{1/(12K)}}{p^{\sqrt{n}}} = 0,
\end{align}
and since $N\geq 0$, we have $\lim_{n\to \infty} \frac{N}{p^{n-3\eta}} = 0$. Thus, $\rk([ \FpH_1, \FpH_2,\Fpr\FpH_1]) \aaseq 3\eta$, and due to symmetry it can be proved that $\rk([ \FpH_1, \FpH_2,\Fpr\FpH_2]) \aaseq 3\eta$. Thus, we have shown that $E_n$ holds a.a.s. 
Now for $\mu\in[2]$, let $({\bf Z}_{\mu} = {\bf I}^{n\times n}|[\FpH_1, \FpH_2, \Fpr \FpH_\mu]) \in {\Bbb F}_p^{n\times (n-3\eta)}$, so that
\begin{align}
	\big[ \FpH_1,  \FpH_2, \Fpr\FpH_\mu, {\bf Z}_\mu \big]
\end{align}
has full column rank $n$.
Let the server broadcast ${\bf S} = \big( {\bf S}_0, {\bf S}^1_{[K]}, {\bf S}^2_{[K]} \big) \in {\Bbb F}_p^{1\times (8\overline{\eta}+2K(n-3\eta))}$, where
\begin{align}
	{\bf S}_0 &=  \FpX^T
	\begin{bmatrix}
		\overline{\FpH}_1 & \overline{\FpH}_2 & {\bf 0} & {\bf 0} & {\bf 0} & {\bf 0} & {\bf 0} & {\bf 0} \\
		{\bf 0} & {\bf 0} & \overline{\FpH}_1 & \overline{\FpH}_2 & {\bf 0} & {\bf 0} & {\bf 0} & {\bf 0} \\
		{\bf 0} & {\bf 0} & {\bf 0} & {\bf 0} & \overline{\FpH}_1 & \overline{\FpH}_2 & {\bf 0} & {\bf 0} \\
		{\bf 0} & {\bf 0} & {\bf 0} & {\bf 0} & {\bf 0} & {\bf 0} & \overline{\FpH}_1 & \overline{\FpH}_2
	\end{bmatrix}  ~~\in {\Bbb F}_p^{1 \times 8\overline{\eta}}
\end{align}
and for $k\in [K],\mu\in[2]$,
\begin{align}
	{\bf S}_k^{\mu} = \FpX^T \FpV_k^\mu {\bf Z}_\mu.
\end{align}
The decoding process works as follows. User $k$ is able to compute
\begin{align} \label{eq:decode_1_e2}
	\FpX^T \FpV_k' \FpH_\mu, ~~~~~~ \mu \in[2]
\end{align}
directly from its side information $\FpX^T \FpV_k'$. Meanwhile, it is able to compute
\begin{align} \label{eq:projection_1_e2}
	\FpX^T  \FpV_k^j   \FpH_\mu, ~~~~~~ j\in[2], ~~ \mu \in[2]
\end{align}
and
\begin{align} \label{eq:decode_2_e2}
	\FpX^T \FpT_k^{\mu} \FpH_\mu, ~~~~~~ \mu \in[2]
\end{align}
from ${\bf S}_0$, since for $j\in[2],\mu \in[2]$,
\begin{align}
	\left\langle  \FpV_k^{j}   \FpH_\mu \right\rangle_p &= \Bigg\langle  \begin{bmatrix}
		\Fpv_{k1}^{j}   \FpH_{\mu} \\ \Fpv_{k2}^{j}   \FpH_{\mu} \\ \Fpv_{k3}^{j}   \FpH_{\mu} \\ \Fpv_{k4}^{j}   \FpH_{\mu}
	\end{bmatrix} \Bigg\rangle_p \subset
	\Bigg\langle  \begin{bmatrix}
		\overline{\FpH}_{\mu} & {\bf 0} & {\bf 0} & {\bf 0} \\
		{\bf 0} & \overline{\FpH}_{\mu} & {\bf 0} & {\bf 0} \\
		{\bf 0} & {\bf 0} & \overline{\FpH}_{\mu} & {\bf 0} \\
		{\bf 0} & {\bf 0} & {\bf 0} & \overline{\FpH}_{\mu} \end{bmatrix}
		\Bigg\rangle_p,   
\end{align}
and for $\mu \in [2]$,
\begin{align}
	& \Bigg\langle  \begin{bmatrix}
		\Fpt_{k1}^\mu \FpH_{\mu} \\ \Fpt_{k2}^\mu \FpH_{\mu} \\ \Fpt_{k3}^\mu \FpH_{\mu} \\ \Fpt_{k4}^\mu \FpH_{\mu}
	\end{bmatrix} \Bigg\rangle_p 
	\subset
	\Bigg\langle  \begin{bmatrix}
		\overline{\FpH}_{\mu} & {\bf 0} & {\bf 0} & {\bf 0} \\
		{\bf 0} & \overline{\FpH}_{\mu} & {\bf 0} & {\bf 0} \\
		{\bf 0} & {\bf 0} & \overline{\FpH}_{\mu} & {\bf 0} \\
		{\bf 0} & {\bf 0} & {\bf 0} & \overline{\FpH}_{\mu} \end{bmatrix}
		\Bigg\rangle_p
\end{align}
due to \eqref{eq:H_subset_e2}.
Thus, User $k$ is able to compute
\begin{align} \label{eq:projection_2_e2}
	\FpX^T \FpV_k^\mu \Fpr \FpH_\mu = \FpX^T\FpV_k'\FpH_\mu - \FpX^T \FpT_k^\mu \FpH_\mu, ~~ \mu\in[2]
\end{align}
according to \eqref{eq:def_VVT_e2}.
Together with ${\bf S}_k^\mu$, User $k$ thus obtains,
\begin{align}
	\big[\FpX^T \FpV_k^j  \FpH_\mu, \FpX^T \FpV_k^\mu \Fpr \FpH_\mu,{\bf S}_k^\mu \big],~~ j\in[2], ~~ \mu \in[2]
\end{align}
which are 
\begin{align}
	\FpX^T \FpV_k^1 \big[ \FpH_1,  \FpH_2, \Fpr \FpH_1,{\bf Z}_1 \big], ~\FpX^T \FpV_k^2 \big[ \FpH_1,  \FpH_2, \Fpr \FpH_2,{\bf Z}_2 \big]
\end{align}
and since $\big[ \FpH_1,  \FpH_2, \Fpr \FpH_1,{\bf Z}_1 \big] \in {\Bbb F}_p^{n\times n}$ and $\big[  \FpH_1,  \FpH_2, \Fpr \FpH_2,{\bf Z}_2 \big]\in {\Bbb F}_p^{n\times n}$  are invertible (have full rank) a.a.s., User $k$ is able to retrieve its desired computation, $\FpX^T \FpV_k=\big[\FpX^T \FpV_k^1,\FpX^T \FpV_k^2\big] \in{\Bbb F}_p^{1\times 2n}$ a.a.s.

For $q=p^n$, the cost of broadcasting each $p$-ary symbol is $1/n$ in $q$-ary units. Thus, the broadcast cost of this scheme is,
\begin{align} \label{eq:deltan_e2}
	\Delta_n = \frac{8\overline{\eta}+2K\big(n-3\eta\big)}{n}.
\end{align}
The next few steps \eqref{eq:limdeltanbegin_e2}-\eqref{eq:limdeltanend_e2} show that $\lim_{n\rightarrow\infty}\Delta_n= \frac{8}{3}$.

By \eqref{eq:def_N_e2}, we have that
\begin{align} \label{eq:limdeltanbegin_e2}
	\eta = N^{12K} \leq \frac{n-\sqrt{n}}{3} \leq (N+1)^{12K} = \overline{\eta}
\end{align}
which implies that
\begin{align}
	\lim_{n\to \infty}\frac{\eta}{n} = \lim_{n\to \infty}\frac{N^{12K}}{n}  \leq \lim_{n\to \infty}\frac{n-\sqrt{n}}{n} \times \frac{1}{3} = \frac{1}{3}.
\end{align}
On the other hand,
\begin{align}
	\lim_{n\to \infty} \frac{\eta}{n} &=\lim_{n\to \infty} \frac{(N+1)^{12K}/(1+1/N)^{12K}}{n} \\
	&\geq \lim_{N\to \infty} \frac{1}{(1+1/N)^{12K}}\lim_{n\to \infty}\frac{n-\sqrt{n}}{n}\times \frac{1}{3} \\
	& = \frac{1}{3}.
\end{align}
Thus, we have that 
\begin{align} \label{eq:limeta_e2}
	\lim_{n\to \infty}\frac{\eta}{n} = \frac{1}{3}
\end{align}
which also implies that
\begin{align} \label{eq:limetabar_e2}
	\lim_{n\to \infty}\frac{\overline{\eta}}{n} = \lim_{n\to \infty}  \frac{\eta}{n} \times\lim_{N\to \infty}(1+1/N)^{12K} = \frac{1}{3}.
\end{align}
Combining \eqref{eq:deltan_e2} with \eqref{eq:limeta_e2} and \eqref{eq:limetabar_e2} we have
\begin{align} \label{eq:limdeltanend_e2}
	\lim_{n\to \infty}\Delta_n = 8\times \frac{1}{3} + 0 = \frac{8}{3}
\end{align}
since $K$ is independent of $n$. Thus, for any $\varepsilon >0$, $\exists n_0>0$ such that $\Delta_n \leq \frac{8}{3}+\varepsilon$ for all $n\geq n_0$. Recall that the broadcast cost $\Delta_n$ is achievable if $E_n$ holds, i.e., $\Delta^*(\Lambda_n)\leq \Delta_n\leq \frac{8}{3}+\epsilon$ if $n\geq n_0$ and $E_n$ holds. Now let us show that $\frac{8}{3}+\epsilon$ is achievable a.a.s., by evaluating the limit in \eqref{eq:limaasach} as follows,
\begin{align}
&\lim_{n\rightarrow\infty}\mathsf{Pr}\Big(\Delta^*(\Lambda_n)\leq \frac{8}{3}+\epsilon\Big)\notag\\
&\geq \lim_{n\rightarrow\infty}\mathsf{Pr}\Bigg(\Big(\Delta^*(\Lambda_n)\leq \frac{8}{3}+\epsilon\Big)\land E_n\Bigg)\\
&=\lim_{n\rightarrow\infty}\mathsf{Pr}(E_n)\mathsf{Pr}\Big(\Delta^*(\Lambda_n)\leq \frac{8}{3}+\epsilon~\Big|~ E_n\Big)\\
&=1
\end{align}
which implies that $\lim_{n\rightarrow\infty}\mathsf{Pr}\Big(\Delta^*(\Lambda_n)\leq \frac{8}{3}+\epsilon\Big)=1$.
Since this is true for all $\varepsilon>0$, according to \eqref{eq:infdeltau} we have $\Delta_u^*\leq \inf\{\frac{8}{3}+\epsilon\mid \epsilon>0\} = \frac{8}{3}$. $\hfill \qed$

\section{Proof of Converse: Theorem \ref{thm:large_K}}\label{sec:proof_converse_large_K}
Recall that $\FqU_k \triangleq [\FqV_k',\FqV_k],\forall k\in [K]$, and the data $\FqXL \in \mathbb{F}_q^{d\times L}$ for a scheme with batch size equal to $L$. Since a scheme must work for all data realizations, it must work if $\FqXL$ is uniformly distributed. The decoding constraint \eqref{eq:decoding_constraint} implies
\begin{align} \label{eq:conv_ex_decoding}
	H({\bf S}, \FqXL^T \FqV_k') = H({\bf S}, \FqXL^T \FqU_k), ~ \forall k\in [K].
\end{align}
The converse for $d \geq K(m+m')$ is obtained trivially by allowing all $K$ users to cooperate fully, see proof of \eqref{eq:genlbound} in Corollary \ref{cor:allK}. The converse for $d\leq m+m'$ is obtained as 
\begin{align}
	\Delta^*(\Lambda_n) & \geq H({\bf S})/L \label{eq:conv_r1_1} \\
	&\geq H({\bf S}\mid \FqXL^T\FqV_1')/L \label{eq:conv_r1_2} \\
	&= H({\bf S}, \FqXL^T\FqU_1\mid \FqXL^T\FqV_1')/L \label{eq:conv_r1_3} \\
	&\geq H(\FqXL^T\FqU_1~|~\FqXL^T\FqV_1')/L \label{eq:conv_r1_4} \\
	&\geq \big(H(\FqXL^T\FqU_1) - H(\FqXL^T\FqV_1')\big)/L \label{eq:conv_r1_5} \\
	&= \rk(\FqU_1) - \rk(\FqV_1')\aaseq (d-m')^+ \label{eq:conv_r1_6}
\end{align}
Step \eqref{eq:conv_r1_1} is due to Shannon's source coding theorem. Steps \eqref{eq:conv_r1_2}, \eqref{eq:conv_r1_4} and \eqref{eq:conv_r1_5} follow from basic information inequalities. Step \eqref{eq:conv_r1_3} is because User $1$ must decode $\FqXL^T\FqU_1$ from ${\bf S}$ and $\FqXL^T\FqV_1'$. Step \eqref{eq:conv_r1_6} applies the useful connection between entropy and ranks, that $H(\FqXL^T\FqU) = L\cdot \rk(\FqU)$ for a uniformly distributed $\FqXL$ and a deterministic matrix $\FqU$. This leaves us with the only non-trivial regime, $(m+m')<d<K(m+m')$, for which we will show that $\Delta^*(\Lambda_n)\aasgeq dm/(m+m')$ in the remainder of this section. 

Let us provide an intuitive outline before launching into the technical details. Recall that Theorem \ref{thm:large_K} considers $K \geq d/\gcd(d,m+m')$. If $m+m'$ divides $d$, it immediately follows that $K \geq d/\gcd(d,m+m') = d/(m+m') \implies d \geq K(m+m')$. Therefore, the non-trivial cases must be that $m+m'$ does not divide $d$.
What we want for the converse argument, intuitively, is to still have the first $d/(m+m')$ users cooperate fully. This is not directly possible because $d/(m+m')$ is not a natural number, but let us set that concern aside for a moment. The $d/(m+m')$ users together already have side information that is equivalent to $m'd/(m+m')$ dimensional projection of the data, which together with the broadcast symbol ${\bf S}$ allows them to recover $(m+m')d/(m+m') = d$ dimensions of the data. If so, then we would have that $H({\bf S})/L \geq d-m'd(m+m') = dm/(m+m')$ as the desired converse bound. Now, how do we overcome the obstacle that we cannot have a fractional number of users? Intuitively, this is achieved by invoking functional submodularity (Lemma 1 of \cite{Yao_Jafar_3LCBC}, \cite{Tao_FS,Kontoyiannis_Madiman}). The idea is that functional submodularity helps to identify and introduce additional entropic terms of certain (linear) functions of the side information and  demands. These functions are essentially the projection of the data into finer subspaces. If we regard the entropies of the subsets of the side information and demands as a set of regular building blocks, the additional entropies introduced by functional submodularity are similar to finer fragments. By rearranging and combining these regular building blocks and fragments in a more efficient way, we are able to derive a better converse bound. To make the details concrete, the readers may refer to the following proof sketch for the example with $m+m'=6,d=10$ and $K = d/\gcd(d,m+m') = 5$. For this example, we want to show that $\Delta^*(\Lambda_n) \aasgeq md/(m+m') = 5m/3$. 

In the following, steps labeled $(*)$ uses functional submodularity (Lemma 1 of \cite{Yao_Jafar_3LCBC}, \cite{Tao_FS,Kontoyiannis_Madiman}).
We proceed as follows.
\begin{align}
	&\underbrace{H({\bf S}, \FqXL^T \FqV_1') + H({\bf S}, \FqXL^T \FqV_2')}_{T_{12}} \notag\\
	&\stackrel{\eqref{eq:conv_ex_decoding}}{=} H({\bf S}, \FqXL^T \FqU_1) + H({\bf S}, \FqXL^T \FqU_2) \\
	&\stackrel{(*)}{\geq} H({\bf S}, \FqXL^T [\FqU_1,\FqU_2]) + H({\bf S}, \FqXL^T (\FqU_1 \cap \FqU_2)) \\
	&\geq H(\FqXL^T [\FqU_1,\FqU_2]) + H({\bf S}, \FqXL^T (\FqU_1 \cap \FqU_2)) \\
	&\aaseq 10L + H({\bf S}, \FqXL^T (\FqU_1 \cap \FqU_2)) \label{eq:conv_ex_aas}
\end{align}
The last step is because as $n\to\infty$ the rank of $[\FqU_1,\FqU_2]$ is equal to $10$ a.a.s. (The proof is omitted here but can be found in the proof for the general case). Then,
\begin{align}
	&\underbrace{T_{12} + H({\bf S}, \FqXL^T\FqV_3')}_{T_{123}} \notag\\
	&\stackrel{\eqref{eq:conv_ex_decoding}}{=} T_{12} + H({\bf S}, \FqXL^T\FqU_3)  \\
	&\stackrel{(*)}{\geq} 10L + H({\bf S}, \FqXL^T [\FqU_1 \cap \FqU_2, \FqU_3]) + H({\bf S})
\end{align}
It follows that,
\begin{align}
	&\underbrace{T_{123}+ H({\bf S}, \FqXL^T\FqV_4')}_{T_{1234}} \notag \\
	&\stackrel{\eqref{eq:conv_ex_decoding}}{=} T_{123}+ H({\bf S}, \FqXL^T\FqU_4) \\
	&\stackrel{(*)}{\geq} 10L + H({\bf S}) + H({\bf S}, \FqXL^T[\FqU_1\cap \FqU_2, \FqU_3, \FqU_4]) \notag\\
	&~~~~+ H({\bf S}, \FqXL^T[(\FqU_1\cap \FqU_2, \FqU_3) \cap  \FqU_4])\\
	&\geq 10L + H({\bf S}) + H(\FqXL^T[\FqU_1\cap \FqU_2, \FqU_3, \FqU_4]) \notag\\
	&~~~~+ H({\bf S}, \FqXL^T[(\FqU_1\cap \FqU_2, \FqU_3) \cap  \FqU_4])\\
	&\aaseq 20L+H({\bf S})+ H({\bf S}, \FqXL^T[(\FqU_1\cap \FqU_2, \FqU_3) \cap  \FqU_4])
\end{align}
The last step is because as $n\to\infty$ the rank of $[\FqU_1\cap \FqU_2,\FqU_3, \FqU_4]$ is equal to $10$ a.a.s. Then,
\begin{align}
	&\underbrace{T_{1234} + H({\bf S}, \FqXL^T\FqV_5')}_{T_{12345}} \notag \\
	&\stackrel{\eqref{eq:conv_ex_decoding}}{=} T_{1234} + H({\bf S}, \FqXL^T\FqU_5) \\
	&\geq 20L + H({\bf S}) + H({\bf S}, \FqXL^T[(\FqU_1\cap \FqU_2, \FqU_3) \cap  \FqU_4]) \notag\\
	&~~~~+ H({\bf S}, \FqXL^T\FqU_5) \\
	&\stackrel{(*)}{\geq} 20L+2H({\bf S}) + H({\bf S}, \FqXL^T[(\FqU_1\cap \FqU_2, \FqU_3) \cap  \FqU_4, \FqU_5]) \\
	&\geq 20L+2H({\bf S}) + H(\FqXL^T[(\FqU_1\cap \FqU_2, \FqU_3) \cap  \FqU_4, \FqU_5])  \\
	&\aaseq 30L + 2H({\bf S})
\end{align}
The last step is because as $n\to\infty$ the rank of $[(\FqU_1\cap \FqU_2, \FqU_3) \cap  \FqU_4, \FqU_5]$ is equal to $10$ a.a.s.

On the other hand,
\begin{align}
	T_{12345} &= H({\bf S}, \FqXL^T\FqV_1') + H({\bf S}, \FqXL^T\FqV_2') + \cdots + H({\bf S}, \FqXL^T\FqV_5') \\
	&\leq H({\bf S})+H(\FqXL^T\FqV_1') + H({\bf S})+H(\FqXL^T\FqV_2') +\cdots \notag\\
	& + H({\bf S})+H(\FqXL^T\FqV_5')\\
	&\leq 5H({\bf S}) + 5m'L
\end{align}
We thus obtain
\begin{align}
	&5H({\bf S}) + 5m'L \aasgeq 30L+2H({\bf S}) \notag  \\
	 &\implies \Delta^*(\Lambda_n) \geq H({\bf S})/L \aasgeq (30-5m')/3=5m/3
\end{align}
as desired.

The general proof starts as follows. Let us start with a useful lemma, whose proof is relegated to Appendix \ref{sec:proof:lem:full_rank_matrix}.
\begin{lemma}\label{lem:full_rank_matrix}
Consider any $M' \in {\Bbb F}_{p^n}^{d\times \mu'}, (\mu' \leq d)$ that has full column rank $\mu'$. 
Let $M\in {\Bbb F}_{p^n}^{d\times \mu}$. If the elements of $M$ are chosen i.i.d uniform in ${\Bbb F}_{p^n}$, then $[M',M]$ has rank $\min\{d, \mu'+\mu\}$ a.a.s.
\end{lemma}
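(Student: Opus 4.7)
The plan is to append the columns of $M$ to $M'$ one at a time and track how the rank grows. Writing $M_{[1]},\ldots,M_{[\mu]}$ for the columns of $M$, set $N_0 := M'$, $N_i := [M', M_{[1]},\ldots,M_{[i]}]$, and let $r_i := \min\{d,\mu'+i\}$. The hypothesis gives $\rk(N_0) = \mu' = r_0$, and the target statement of the lemma is exactly $\rk(N_\mu) \aaseq r_\mu$.

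The inductive step is where the randomness enters. I would condition on the event $E_{i-1} := \{\rk(N_{i-1}) = r_{i-1}\}$ and on any specific realization making it true. If $r_{i-1} = d$ there is nothing to prove, since rank is monotone under column extension and capped at $d$, so $\rk(N_i) = d = r_i$ deterministically. Otherwise $r_{i-1} < d$, and the column span $\langle N_{i-1}\rangle$ is a fixed subspace of $\mathbb{F}_{p^n}^d$ of dimension $r_{i-1}$; since $M_{[i]}$ is independent of $N_{i-1}$ and uniformly distributed over $\mathbb{F}_{p^n}^d$, the probability it lands in $\langle N_{i-1}\rangle$ equals $p^{n(r_{i-1}-d)}/1 = p^{n(r_{i-1}-d)} \leq p^{-n}$. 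Outside this negligible event the rank must strictly increase by one, giving $\rk(N_i) = r_{i-1}+1 = r_i$. Hence $\mathsf{Pr}(E_i \mid E_{i-1}) \geq 1 - p^{-n}$.

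A union bound over the $\mu$ steps then closes the argument: iterating the chained conditional bounds yields $\mathsf{Pr}(E_\mu) \geq 1 - \mu p^{-n}$, which tends to $1$ as $n \to \infty$ because $\mu$ is a parameter that does not depend on $n$. This is precisely $\rk([M',M]) \aaseq \min\{d,\mu'+\mu\}$. There is no substantive technical obstacle here: the proof ultimately reduces to the standard observation that a uniformly random vector over a growing finite field avoids any fixed proper subspace with exponentially small probability, and the hypothesis that $M'$ has full column rank is used only to pin down the base case of the induction at $r_0 = \mu'$.
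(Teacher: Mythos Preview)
Your proof is correct, but it takes a different route from the paper's. The paper argues via the Schwartz--Zippel lemma: it extends $[M',M]$ (or a truncation of it) to a square $d\times d$ matrix, observes that its determinant is a nonzero polynomial in the random entries (exhibiting one realization where it is nonzero), and then bounds the probability of vanishing by $d/p^n$. You instead reveal the columns of $M$ one at a time and use the elementary fact that a uniform vector in $\mathbb{F}_{p^n}^d$ lands in a fixed proper subspace with probability at most $p^{-n}$, finishing with a union bound over $\mu$ steps. Your argument is more self-contained and avoids invoking Schwartz--Zippel; the paper's approach is consistent with the polynomial-identity-testing machinery it uses repeatedly elsewhere (e.g., in the asymptotic IA achievability proofs). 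Both yield a failure probability of the form $O(1)/p^n$ with constants independent of $n$, so neither is meaningfully stronger.
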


\noindent Define $\overline{m}$, and the constants $K_0,K_1,\cdots, K_{\overline{m}}$ as follows,
\begin{align}
\overline{m} &\triangleq \frac{m+m'}{\mbox{gcd}(d,m+m')},\\
K_i &\triangleq \left\lceil \frac{id}{m+m'} \right\rceil, ~~ \forall i\in[0:\overline{m}],
\end{align}
so that,
\begin{align}
&\mbox{1)}~K_0=0, ~K_{\overline{m}} = \left\lceil \frac{\overline{m}d}{m+m'} \right\rceil = \frac{d}{\mbox{gcd}(d,m+m')} \label{eq:Koverlinem},\\
&\mbox{2)}~(K_i-1)(m+m')<id\leq K_i(m+m'), \forall i\in[\overline{m}],\label{eq:Kim}\\
&\mbox{3)}~ K_i - K_{i-1}>0, ~\forall i\in[\overline{m}].
\end{align}
Define the matrices $\Insc_0,\cdots,\Insc_{\overline{m}}$, $\Un_1,\cdots,\Un_{\overline{m}}$, and $\Sp_1,\cdots,\Sp_{\overline{m}}$, as follows,
\begin{align}
&\Insc_0 \triangleq [~] \\
	& \Sp_1 \triangleq [\Insc_0, \FqU_{K_0+1},\FqU_{K_0+2},\hdots, \FqU_{K_1-1}], \notag\\
	& \Insc_1 \triangleq \Sp_1 \cap \FqU_{K_1},  ~~\Un_1 \triangleq [\Sp_1, \FqU_{K_1}], \\
	& \Sp_2 \triangleq [\Insc_1, \FqU_{K_1+1}, \FqU_{K_1+2}, \hdots, \FqU_{K_2-1}],\notag \\
	& \Insc_2 \triangleq \Sp_2 \cap \FqU_{K_2}, ~~\Un_2 \triangleq [\Sp_2, \FqU_{K_2}] \\
	& ~~~~~~ \vdots \notag \\
	& \Sp_{i+1} \triangleq [\Insc_{i}, \FqU_{K_{i}+1}, \FqU_{K_{i}+2}, \hdots, \FqU_{K_{i+1}-1}], \notag\\
	& \Insc_{i+1} \triangleq \Sp_{i+1} \cap \FqU_{K_{i+1}},~~ \Un_{i+1} \triangleq [\Sp_{i+1}, \FqU_{K_{i+1}}] \\
	& ~~~~~~ \vdots \notag \\
	& \Sp_{\overline{m}} \triangleq [\Insc_{\overline{m}-1}, \FqU_{K_{\overline{m}-1}+1}, \FqU_{K_{\overline{m}-1}+2}, \hdots, \FqU_{K_{\overline{m}}-1}], \notag\\
	& \Insc_{\overline{m}} \triangleq \Sp_{\overline{m}} \cap \FqU_{K_{\overline{m}}}, ~~\Un_{\overline{m}} \triangleq [\Sp_{\overline{m}}, \FqU_{K_{\overline{m}}}]
\end{align}
so that for all $i\in[\overline{m}]$,
\begin{align}
\Sp_i&\in\mathbb{F}_q^{d\times (\mbox{\footnotesize\bf rk}(\Insc_{i-1})+(K_i-K_{i-1}-1)(m+m'))},\label{eq:dimT}\\
\Un_i&\in\mathbb{F}_q^{d\times (\mbox{\footnotesize\bf rk}(\Insc_{i-1})+(K_i-K_{i-1})(m+m'))}.\label{eq:dimJ}
\end{align}
Define the event $E_n$ as follows,
\begin{align}
	E_n & \triangleq \Big( \rk(\FqV_k') = m',~~ \forall k\in [K]\Big) \land  \Big( \rk(\Un_i) = d, ~~ \forall i\in[\overline{m}]\Big).
\end{align}
The next steps \eqref{eq:rk_T_1}-\eqref{eq:Ii+1} show that $E_n$ holds a.a.s. 

From Lemma \ref{lem:full_rank_matrix} (let $M'=[~]$, $M = \FqV_k'$), we have $\rk(\FqV_k') \stackrel{\mbox{\tiny a.a.s.}}{=} m'$, 
since $m'\leq m+m' \leq d$. 
Similarly by Lemma \ref{lem:full_rank_matrix},  (letting $M'=[~]$ and $M=\Sp_1$, $M=\FqU_{K_1}$, $M=\Un_1$, respectively), we have
\begin{align}
	& \rk(\Sp_1) \aaseq (K_1-1)(m+m'), \label{eq:rk_T_1} \\
	&  \rk(\FqU_{K_1}) \aaseq m+m', \label{eq:rk_U_K1}\\
	& \rk(\Un_1) \aaseq d, \label{eq:rk_J_1} 
\end{align}
where \eqref{eq:rk_T_1} and \eqref{eq:rk_J_1} are due to \eqref{eq:Kim}, and \eqref{eq:rk_U_K1} follows from $m+m'\leq d$.
Then since $\rk(\Insc_1) = \rk(\Sp_1)+\rk(\FqU_{K_1}) - \rk(\Un_1)$, we have that
\begin{align} \label{eq:rank_I1}
	\rk(\Insc_1) \aaseq K_1(m+m')-d.
\end{align}
Next, to set up an inductive argument, suppose for some $i$, $1\leq i< \overline{m}$,
\begin{align}
	\rk(\Insc_i) \aaseq K_i(m+m')-id. \label{eq:Ii}
\end{align}
Conditioned on $\Big( \rk(\Insc_i) = K_i(m+m')-id \Big)$, from Lemma \ref{lem:full_rank_matrix} and \eqref{eq:Kim},\eqref{eq:dimT},\eqref{eq:dimJ} we have
\begin{align}
&\rk(\Sp_{i+1})\aaseq (K_{i+1}-1)(m+m')-id, \label{eq:rk_T_ip1}\\
&\rk(\Un_{i+1})\aaseq d, \label{eq:rk_J_ip1}\\
&\rk(\FqU_{K_{i+1}}) \aaseq m+m',\label{eq:rk_U_ip1}\\
&\rk(\Insc_{i+1}) \aaseq K_{i+1}(m+m')-(i+1)d.\label{eq:Ii+1}
\end{align}
where in order to obtain \eqref{eq:Ii+1}, we used the property $\rk(\Insc_{i+1}) = \rk(\Sp_{i+1}) +\rk(\FqU_{K_{i+1}}) - \rk(\Un_{i+1})$, along with 
\eqref{eq:rk_T_ip1}, \eqref{eq:rk_J_ip1} and \eqref{eq:rk_U_ip1}. By induction,  we obtain $\rk(\Un_i) \aaseq d, ~~ \forall i\in[\overline{m}]$, which implies that $E_n$ holds a.a.s.

Figure \ref{fig:ill_conv} may be useful in understanding the construction above and the proof. 
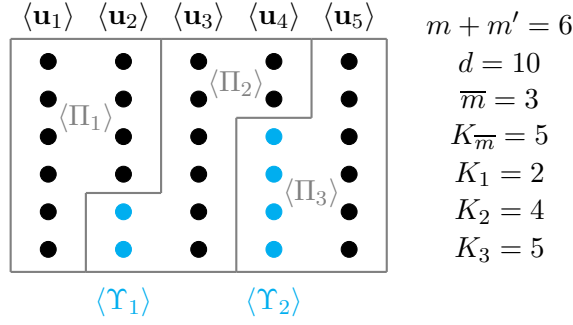
\begin{figure}[!h]
\center
\begin{tikzpicture}
	\filldraw (0,0) circle (3pt);
	\filldraw (0,0.5) circle (3pt);
	\filldraw (0,1) circle (3pt);
	\filldraw (0,1.5) circle (3pt);
	\filldraw (0,2) circle (3pt);
	\filldraw (0,2.5) circle (3pt);
	\filldraw[color=cyan]  (1,0) circle (3pt);
	\filldraw[color=cyan]  (1,0.5) circle (3pt);
	\filldraw (1,1) circle (3pt);
	\filldraw (1,1.5) circle (3pt);
	\filldraw (1,2) circle (3pt);
	\filldraw (1,2.5) circle (3pt);
	\filldraw (2,0) circle (3pt);
	\filldraw (2,0.5) circle (3pt);
	\filldraw (2,1) circle (3pt);
	\filldraw (2,1.5) circle (3pt);
	\filldraw (2,2) circle (3pt);
	\filldraw (2,2.5) circle (3pt);
	\filldraw[color=cyan]  (3,0) circle (3pt);
	\filldraw[color=cyan]  (3,0.5) circle (3pt);
	\filldraw[color=cyan]  (3,1) circle (3pt);
	\filldraw[color=cyan]  (3,1.5) circle (3pt);
	\filldraw (3,2) circle (3pt);
	\filldraw (3,2.5) circle (3pt);
	\filldraw (4,0) circle (3pt);
	\filldraw (4,0.5) circle (3pt);
	\filldraw (4,1) circle (3pt);
	\filldraw (4,1.5) circle (3pt);
	\filldraw (4,2) circle (3pt);
	\filldraw (4,2.5) circle (3pt);
	\draw [draw=gray, thick] (-0.5,-0.3)--(-0.5,2.8);
	\draw [draw=gray, thick] (-0.5,2.8)--(1.5,2.8);
	\draw [draw=gray, thick] (1.5,2.8)--(1.5,0.75);
	\draw [draw=gray, thick] (0.5,0.75)--(1.5,0.75);
	\draw [draw=gray, thick] (0.5,0.75)--(0.5,-0.3);
	\draw [draw=gray, thick] (-0.5,-0.3)--(0.5,-0.3);
	\draw [draw=gray, thick] (0.5,-0.3)--(2.5,-0.3);
	\draw [draw=gray, thick] (2.5,-0.3)--(2.5,1.75);
	\draw [draw=gray, thick] (2.5,1.75)--(3.5,1.75);
	\draw [draw=gray, thick] (3.5,1.75)--(3.5,2.8);
	\draw [draw=gray, thick] (3.5,2.8)--(1.5,2.8);
	\draw [draw=gray, thick] (2.5,-0.3)--(4.5,-0.3);
	\draw [draw=gray, thick] (4.5,-0.3)--(4.5,2.8);
	\draw [draw=gray, thick] (4.5,2.8)--(3.5,2.8);
	\node (I1) at (1,-0.7)  []   {\color{cyan} $\langle\Insc_1\rangle$};
	\node (I2) at (3,-0.7)  []   {\color{cyan} $\langle\Insc_2\rangle$};
	\node at (0,3.1)  []   { $\langle\FqU_1\rangle$};
	\node at (1,3.1)  []   { $\langle\FqU_2\rangle$};
	\node at (2,3.1)  []   { $\langle\FqU_3\rangle$};
	\node at (3,3.1)  []   { $\langle\FqU_4\rangle$};
	\node at (4,3.1)  []   { $\langle\FqU_5\rangle$};
	\node at (0.5,1.75)  []   {\color{gray} $\langle \Un_1 \rangle$};
	\node at (2.5,2.2)  []   {\color{gray} $\langle\Un_2\rangle$};
	\node at (3.5,0.75)  []   {\color{gray} $\langle\Un_3\rangle$};
	\node at (6,3) [] {$m+m' = 6$};
	\node at (6,2.5) [] {$d=10$};
	\node at (6,2) [] {$\overline{m}=3$};
	\node at (6,1.5) [] {$K_{\overline{m}}=5$};
	\node at (6,1)  []   { $K_1 = 2$};
	\node at (6,0.5)  []   { $K_2 = 4$};
	\node at (6,0)  []   { $K_3 = 5$};

\end{tikzpicture}	
\caption{\small Illustration of the converse proof for $m+m' = 6, d=10$. Each  dot represents one dimension. The number of dots represents the dimension for the corresponding space a.a.s. $\langle \FqU_k \rangle, k\in[5]$ has dimension $6$ a.a.s. $\langle\Insc_1\rangle$ is the intersection of $\langle\FqU_1\rangle$ and $\langle\FqU_2\rangle$, which has dimension $2$ a.a.s. $\langle\Insc_2\rangle$ is the intersection of $\langle[\Insc_1,\FqU_3]\rangle$ and $\langle\FqU_4\rangle$, which has dimension $4$ a.a.s. $\langle\Un_1\rangle$ is the union of $\langle\FqU_1\rangle$ and $\langle\FqU_2\rangle$, which has dimension $10$ a.a.s. $\langle\Un_2\rangle$ is the union of $\langle[\Insc_1, \FqU_3]\rangle$ and $\langle\FqU_4\rangle$, which has dimension $10$ a.a.s. $\langle \Un_3\rangle$ is the union of $\langle\Insc_2\rangle$ and $\langle \FqU_5\rangle$, which has dimension $10$ a.a.s.}\label{fig:ill_conv}
\end{figure}

For the next stage of the proof, we consider any given LCBC$\big({\Bbb F}_{p^n},  \FqV_{[K]}, \FqV_{[K]}' \big)$ where $E_n$ holds. Note that $\FqV_{[K]}, \FqV_{[K]}'$ are now arbitrary \emph{constant} matrices that satisfy $E_n$. Our goal now is to bound the optimal broadcast cost $\Delta^*\big({\Bbb F}_q,  \FqV_{[K]}, \FqV_{[K]}' \big)$. Let  $\FqXL$ be uniformly distributed to facilitate entropic accounting. Recall that there is no loss of generality in this assumption, because $\FqXL$ is independent of $E_n$ and any achievable scheme must work for all data realizations, so it must also work for all data distributions. Thus, we have,
{\small
\begin{align}
2\overline{m}  & H({\bf S}) + K_{\overline{m}} Lm'\\
	\stackrel{(a)}{=}&~2\overline{m}H({\bf S}) + \sum_{k=1}^{K_{\overline{m}}}H(\FqXL^T\FqV_k') \\
	=&\sum_{i=0}^{\overline{m}-1}\Big[\Big(H({\bf S})+H(\FqXL^T\FqV'_{K_i+1})+H(\FqXL^T\FqV'_{K_i+2})+ \hdots  +H(\FqXL^T\FqV'_{K_{i+1}-1})\Big)+\Big(H({\bf S})+H(\FqXL^T\FqV'_{K_{i+1}})\Big)\Big]\notag\\
	\geq &~\sum_{i=0}^{\overline{m}-1} \Big(H({\bf S}, \FqXL^T[\FqV'_{K_i+1},\FqV'_{K_i+2},\hdots, \FqV'_{K_{i+1}-1}]) + H({\bf S}, \FqXL^T\FqV'_{K_{i+1}}) \Big)\\
	\stackrel{(b)}{=}&~ \sum_{i=0}^{\overline{m}-1} \Big( H({\bf S},\FqXL^T[\FqU_{K_i+1}, \FqU_{K_i+2}, \hdots, \FqU_{K_{i+1}-1}]) +H({\bf S},\FqXL^T\FqU_{K_{i+1}})  \Big) \\
	= &~  H({\bf S},\FqXL^T[\FqU_{1}, \FqU_{2}, \hdots, \FqU_{K_{1}-1}]) +H({\bf S},\FqXL^T\FqU_{K_{1}}) \notag \\
	&+ \sum_{i=1}^{\overline{m}-1} \Big( H({\bf S},\FqXL^T[\FqU_{K_i+1}, \FqU_{K_i+2}, \hdots, \FqU_{K_{i+1}-1}]) + H({\bf S},\FqXL^T\FqU_{K_{i+1}})  \Big) \\
	=&~ H({\bf S}, \FqXL^T\Sp_1) + H({\bf S}, \FqXL^T\FqU_{K_1})  + \sum_{i=1}^{\overline{m}-1} \Big( H({\bf S},\FqXL^T[\FqU_{K_i+1}, \FqU_{K_i+2}, \hdots, \FqU_{K_{i+1}-1}]) +H({\bf S},\FqXL^T\FqU_{K_{i+1}})  \Big) \\
	 \stackrel{(c)}{\geq} &~ H(\FqXL^T\Un_1) + H({\bf S},\FqXL^T\Insc_1) + \sum_{i=1}^{\overline{m}-1} \Big( H({\bf S},\FqXL^T[\FqU_{K_i+1}, \FqU_{K_i+2}, \hdots, \FqU_{K_{i+1}-1}]) +H({\bf S},\FqXL^T\FqU_{K_{i+1}})  \Big) \\
	 = &~ H(\FqXL^T\Un_1) + H({\bf S},\FqXL^T\Insc_1) +  H({\bf S},\FqXL^T[\FqU_{K_1+1}, \FqU_{K_1+2}, \hdots, \FqU_{K_{2}-1}]) +H({\bf S},\FqXL^T\FqU_{K_{2}})  \notag \\
	 &+ \sum_{i=2}^{\overline{m}-1} \Big( H({\bf S},\FqXL^T[\FqU_{K_i+1}, \FqU_{K_i+2}, \hdots, \FqU_{K_{i+1}-1}]) +H({\bf S},\FqXL^T\FqU_{K_{i+1}})  \Big) \\
	 \stackrel{(c)}{\geq} &~  \Big( H(\FqXL^T\Un_1) + H({\bf S})\Big) + H({\bf S},\FqXL^T\Sp_2) + H({\bf S},\FqXL^T \FqU_{K_2}) \notag \\
	 &+ \sum_{i=2}^{\overline{m}-1} \Big( H({\bf S},\FqXL^T[\FqU_{K_i+1}, \FqU_{K_i+2}, \hdots, \FqU_{K_{i+1}-1}]) +H({\bf S},\FqXL^T\FqU_{K_{i+1}})  \Big) \\
	 \vdots~ & \notag \\
	 \geq &  ~ \Big( H(\FqXL^T\Un_1) + H({\bf S}) \Big) + \hdots + \Big( H(\FqXL^T\Un_{\overline{m}-1}) + H({\bf S}) \Big) + H({\bf S}, \FqXL^T\Sp_{\overline{m}}) + H({\bf S},\FqXL^T\FqU_{K_{\overline{m}}}) \\
	 \stackrel{(c)}{\geq} & ~ \Big( H(\FqXL^T\Un_1) + H({\bf S}) \Big) + \hdots   + \Big( H(\FqXL^T\Un_{\overline{m}}) + H({\bf S}) \Big) \\
	 \stackrel{(a)}{\geq} & ~ \overline{m}\big(L d+ H({\bf S}) \big)  \\
	 \implies & \overline{m}H({\bf S}) + K_{\overline{m}} Lm' \geq \overline{m}Ld 
\end{align}
}Steps labeled $(a)$ hold because $E_n$ holds. Steps labeled $(b)$ follow from the decodability constraint, i.e., $H({\bf S}, \FqXL^T\FqU_k) = H({\bf S}, \FqXL^T\FqV_k')$. Steps labeled $(c)$ use functional submodularity (Lemma 1 of \cite{Yao_Jafar_3LCBC}, \cite{Tao_FS,Kontoyiannis_Madiman}). 

Note that \eqref{eq:Koverlinem} implies that 
\begin{align}
	\frac{K_{\overline{m}}}{\overline{m}} = \frac{d}{m+m'}.
\end{align}
Thus, we obtain that 
\begin{align}
	\Delta \geq \frac{H({\bf S})}{L} \geq d - \frac{K_{\overline{m}}}{\overline{m}}m' = d- \frac{d}{m+m'}m' = \frac{md}{m+m'}.
\end{align}
This in turn implies that for any $\varepsilon>0$,
\begin{align}
&\lim_{n\rightarrow\infty}\mathsf{Pr}\left(\Delta^*(\Lambda_n)>\frac{md}{m+m'}-\epsilon\right)\\
&\geq \lim_{n\rightarrow\infty}\mathsf{Pr}\left(\left[\Delta^*(\Lambda_n)>\frac{md}{m+m'}-\epsilon\right]\land E_n\right)\\
&= \lim_{n\rightarrow\infty}\mathsf{Pr}(E_n)\mathsf{Pr}\Big(\Delta^*(\Lambda_n)>\frac{md}{m+m'}-\epsilon~\Big|~ E_n\Big)\\
&=1.
\end{align}
Thus $\Delta^*(\Lambda_n)\aasg md/(m+m')-\varepsilon$. Since this is true for all $\varepsilon>0$, according to \eqref{eq:defdeltal} we have $\Delta_l^*\geq md/(m+m')$.$\hfill\square$

\section{Proof of Theorem \ref{thm:one_dim}}\label{sec:proof_thm_one_dim}
In this section, let us show the converse for odd $d$ with $3\leq d < 2K-1$, and the achievability for $d = 2K-1$.
\subsection{Converse for odd $d$ with $3\leq d < 2K-1$}
The condition for this regime is equivalent to $2\leq \frac{d+1}{2}<K$. Since the generic capacity for this regime is only a function of $d$, and assuming a smaller $K$ will not hurt the converse, it suffices to show the converse for $K=\frac{d+1}{2}+1$, that is, $\Delta^*_l \geq d/2$.
We start with the following lemma.
\begin{lemma} \label{lem:3user}
	For LCBC(${\Bbb F}_q, \FqV_{[K]}, \FqV_{[K]}'$), the broadcast cost $\Delta$ satisfies,
	\begin{align} \label{eq:3user}
		\Delta \geq \frac{1}{2} &\Bigg( \rk(\FqV_1' \cap [\FqU_1\cap \FqU_{\mathcal{K}_1}, \FqU_1\cap \FqU_{\mathcal{K}_2}]) + \rk([\FqU_1, \FqU_{\mathcal{K}_1}])+ \rk([\FqU_1, \FqU_{\mathcal{K}_2}])  \notag\\
		&~~~~ - 2\rk(\FqV_1')-\rk(\FqV'_{\mathcal{K}_1})-\rk(\FqV'_{\mathcal{K}_2}) \Bigg),
	\end{align}
	where $\mathcal{K}_1$ and $\mathcal{K}_2$ are subsets of $[K]$.
\end{lemma}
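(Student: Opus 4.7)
The inequality depends on $\FqV_{[K]}, \FqV'_{[K]}$ only through User~1's matrices and the two coalitions' matrices $\FqV_{\mathcal{K}_i}, \FqV'_{\mathcal{K}_i}$, which naturally suggests viewing it as a 3-user LCBC bound. My plan is to reduce to the 3-user setting by letting all users in $\mathcal{K}_1$ cooperate into a single super-user with demand matrix $\FqV_{\mathcal{K}_1}$ and side-information matrix $\FqV'_{\mathcal{K}_1}$, doing the same for $\mathcal{K}_2$, and keeping User~1 as is. Cooperation among users never raises the optimal broadcast cost, since any achievable scheme for the original automatically works for the merged system; hence any lower bound on the merged 3-user $\Delta^*$ is also a lower bound on the original $K$-user $\Delta^*$.

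For the 3-user bound I would follow the entropy-based style of Section~\ref{sec:proof_converse_large_K}: take $\FqXL$ uniform so that $H(\FqXL^T M) = L\,\rk(M)$, and combine the decodability identities $H({\bf S}, \FqXL^T \FqV'_j) = H({\bf S}, \FqXL^T \FqU_j)$ for $j \in \{1, \mathcal{K}_1, \mathcal{K}_2\}$ (valid for a coalition because each member can decode its own demand) with subadditivity to get, for $i = 1, 2$,
\[
H({\bf S}) + H(\FqXL^T\FqV'_1) + H(\FqXL^T\FqV'_{\mathcal{K}_i}) \;\geq\; H\!\big({\bf S}, \FqXL^T [\FqU_1, \FqU_{\mathcal{K}_i}]\big) \;\geq\; L\,\rk([\FqU_1, \FqU_{\mathcal{K}_i}]).
\]
Summing these for $i=1,2$ and dividing by $2L$ already produces the $\rk([\FqU_1,\FqU_{\mathcal{K}_1}]) + \rk([\FqU_1,\FqU_{\mathcal{K}_2}]) - 2\rk(\FqV'_1) - \rk(\FqV'_{\mathcal{K}_1}) - \rk(\FqV'_{\mathcal{K}_2})$ portion of the claim. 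The missing $\rk(\FqV'_1 \cap [W_1, W_2])$ term, with $W_i \triangleq \FqU_1 \cap \FqU_{\mathcal{K}_i}$, is to be recovered by sharpening the sum via functional submodularity (Lemma~1 of \cite{Yao_Jafar_3LCBC}) applied to $H({\bf S}, \FqXL^T [\FqU_1, \FqU_{\mathcal{K}_1}]) + H({\bf S}, \FqXL^T [\FqU_1, \FqU_{\mathcal{K}_2}])$, tracking which columns of $[W_1,W_2] \subseteq \FqU_1$ are already ``paid for'' by ${\bf S}$ (the $\FqV_1$-portion, via User~1's decodability) versus which genuinely come from User~1's side-information (the $\FqV'_1$-portion).

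The main technical obstacle is exactly this last accounting: a naive application of submodularity produces the intersection $\rk([\FqU_1,\FqU_{\mathcal{K}_1}] \cap [\FqU_1,\FqU_{\mathcal{K}_2}])$, which is too large, whereas the claim asks for the smaller $\rk(\FqV'_1 \cap [W_1,W_2])$. I plan to address this by decomposing $[W_1,W_2]$ relative to the basis $[\FqV'_1, \FqV_1]$ of $\FqU_1$ and arguing that the $\FqV_1$-part contributes nothing new beyond ${\bf S}$ (by decodability $H({\bf S}, \FqXL^T\FqV_1\mid \FqXL^T\FqV'_1)=0$), so only the $\FqV'_1 \cap [W_1,W_2]$ part survives as a genuine extra term. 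If this direct manipulation is too intricate, the clean fall-back is to simply invoke the corresponding 3-user converse bound already established in \cite{Yao_Jafar_3LCBC}, which the present lemma is designed to recycle through the cooperation reduction above.
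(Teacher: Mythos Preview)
Your proposal is correct, and your fall-back is precisely what the paper does: the paper's proof consists of exactly the cooperation reduction you describe (merge $\mathcal{K}_1$ and $\mathcal{K}_2$ into super-users to form a 3-user LCBC $\Lambda'$ with $\Delta^*(\Lambda)\geq\Delta^*(\Lambda')$), and then directly invokes equation~(7) of the 3-user converse from \cite{Yao_Jafar_ISIT22} without re-deriving the entropy argument. Your attempted direct derivation of the $\rk(\FqV_1'\cap[W_1,W_2])$ term is more than the paper itself provides here, so there is no need to complete it for this lemma.
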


\begin{proof}
	For simplicity, we will make use of the converse in\cite{Yao_Jafar_ISIT22}. Denote the original LCBC as $\Lambda$ and its optimal download cost as $\Delta^*(\Lambda)$. Now, consider another LCBC setting $L'$ with $3$ users, where User $1$ has side information $\FqX^T\FqV_1'$ and desires $\FqX^T\FqV_1$; User $2$ has side information $\FqX^T\FqV_{\mathcal{K}_1}'$ and desires $\FqX^T\FqV_{\mathcal{K}_1}$; User $3$ has side information $\FqX^T\FqV_{\mathcal{K}_2}'$ and desires $\FqX^T\FqV_{\mathcal{K}_2}$. Denote by $\Delta^*(\Lambda')$ the optimal download cost of $\Lambda'$. We have $\Delta^*(\Lambda) \geq \Delta^*(\Lambda')$ since for any scheme that works for $\Lambda$, we can construct another scheme that works for $\Lambda'$ with a same download cost by letting the users in $\mathcal{K}_1$ cooperate, and the users in $\mathcal{K}_2$ cooperate. Note that $\mathcal{K}_1$ and $\mathcal{K}_2$ can have a non-empty intersection. Also note that although the capacity result in \cite{Yao_Jafar_ISIT22} is only for $m_k=m_k'=1, \forall k\in[3]$, the converse holds for any LCBC with $3$ users. Now let us make use of the converse in \cite{Yao_Jafar_ISIT22}. Since $\Delta^*(\Lambda) \geq \Delta^*(\Lambda')$, by (7) of \cite{Yao_Jafar_ISIT22}, and by rearranging the terms, we have the desired bound.
\end{proof}

Next we show the converse for the generic capacity, i.e., $\Delta_l^* \geq d/2$ for $K=\frac{d+1}{2}+1$. Note that $\frac{d+1}{2}+1 = 3+\frac{d-3}{2}$. Let $\mathcal{K}_0 = [K-\frac{d-3}{2}+1:K]$, $\mathcal{K}_1 = \{2\} \cup \mathcal{K}_0$ and $\mathcal{K}_2 = \{3\}\cup \mathcal{K}_0$. Note that $\mathcal{K}_0 = \emptyset$ if $d=3$. Since the converse bound in Lemma \ref{lem:3user} is composed of ranks of certain matrices, we then finds these ranks in the a.a.s. sense.

First, since $2(1+|\mathcal{K}_1|) = 2(1+|\mathcal{K}_2|) = d+1>d$, by Lemma \ref{lem:full_rank_matrix},
\begin{align}
	\rk([\FqU_1, \FqU_{\mathcal{K}_1}]) \aaseq d, ~~ \rk([\FqU_1, \FqU_{\mathcal{K}_2}]) \aaseq d,
\end{align}
and
\begin{align}
	\rk(\FqV_1') \aaseq 1, ~~ \rk(\FqV'_{\mathcal{K}_1}) \aaseq \frac{d-1}{2}, ~~\rk(\FqV'_{\mathcal{K}_2}) \aaseq \frac{d-1}{2}.
\end{align}
This leaves us the only non-trivial term, $\rk(\FqV_1' \cap [\FqU_1\cap \FqU_{\mathcal{K}_1}, \FqU_1\cap \FqU_{\mathcal{K}_2}])$. To avoid complex notations, within this section let
\begin{align}
	&A = \FqU_1 \in {\Bbb F}_q^{d\times 2}, \\
	&B = \FqU_{\mathcal{K}_1} = [\FqU_{\mathcal{K}_0},\FqU_2] \in {\Bbb F}_q^{d\times (d-1)}, \\
	&C = \FqU_{\mathcal{K}_2} = [\FqU_{\mathcal{K}_0},\FqU_3] \in {\Bbb F}_q^{d\times (d-1)}.
\end{align}
Then let
\begin{align}
	&D = [A,{\bf 0}^{d\times (d-2)}] \big([A,B_{[1:d-2]}]\big) ^* B_{[d-1]} \in {\Bbb F}_q^{d\times 1}, \\
	&E = [A,{\bf 0}^{d\times (d-2)}]\big([A,C_{[1:d-2]}]\big)^* C_{[d-1]} \in {\Bbb F}_q^{d\times 1},
\end{align}
where $M^*$ denotes the adjugate matrix of a square matrix $M$ such that $MM^* = \det(M){\bf I}$. By construction, we have $\langle D \rangle \subset \langle A \rangle$ and $\langle E\rangle \subset \langle A \rangle$. Then note that
\begin{align}
	D&+[{\bf 0}^{d\times 2}, B_{[1:d-2]}]\big([A,B_{[1:d-2]}]\big) ^* B_{[d-1]} = \det([A,B_{[1:d-2]}]) B_{[d-1]}.
\end{align}
We obtain that $\langle D \rangle \subset \langle B \rangle$. Similarly, we have $\langle E \rangle \subset \langle C \rangle$. Therefore, $\langle D \rangle \subset \langle A\cap B \rangle$ and $\langle E \rangle \subset \langle A\cap C \rangle$. Next, let $Z\in {\Bbb F}_q^{d\times (d-2)}$. We claim that
\begin{align}
	P = \det([D,E,Z])
\end{align}
is a non-zero polynomial in the elements of ${\bf u}_{[K]},Z$. To see this, specify
\begin{align}
	&{\bf u}_1 = {\bf I}^{d\times d}_{[1:2]}, ~~ \FqU_{\mathcal{K}_0} = {\bf I}_{[3:d-1]},\\
	&\FqU_2 = [{\bf I}^{d\times d}_{[d]},{\bf I}^{d\times d}_{[1]}],\\
	&\FqU_3 = [{\bf I}^{d\times d}_{[d]},{\bf I}^{d\times d}_{[2]}], \\
	& Z = {\bf I}^{d\times d}_{[3:d]}.
\end{align}
We then have
\begin{align}
	[A,B_{[1:d-2]}] = [A,C_{[1:d-2]}] = {\bf I}^{d\times d},
\end{align}
and it follows that
\begin{align}
	D = {\bf I}_{[1]}, ~~ E = {\bf I}_{[2]} \implies \det([D,E,Z]) = \det({\bf I}^{d\times d}) = 1.
\end{align}
Therefore, $P$ is a non-zero polynomial, with degree not more than $2(d+1)+(d-2)$. By Schwartz-Zippel Lemma, the probability of $P$ evaluating to a non-zero value is not less than
\begin{align}
	1-\frac{2(d+1)+(d-2)}{p^n} 
\end{align}
which approaches $1$ as $n\to \infty$. Thus,
\begin{align}
	\rk([D,E]) \aasgeq 2 \implies \rk([\FqU_1\cap \FqU_{\mathcal{K}_1},\FqU_1\cap \FqU_{\mathcal{K}_2}]) \aasgeq 2.
\end{align}
Since $\langle [\FqU_1\cap \FqU_{\mathcal{K}_1},\FqU_1\cap \FqU_{\mathcal{K}_2}]\rangle \subset \langle \FqU_1 \rangle$, and $\rk(\FqU_1) \leq 2$, we have that
\begin{align}
	\langle [\FqU_1\cap \FqU_{\mathcal{K}_1},\FqU_1\cap \FqU_{\mathcal{K}_2}]\rangle \aaseq \langle \FqU_1 \rangle.
\end{align}
Since $\langle \FqV_1' \rangle \subset \langle \FqU_1 \rangle$, we obtain that
\begin{align}
	\rk(\FqV_1'\cap [\FqU_1\cap \FqU_{\mathcal{K}_1},\FqU_1\cap \FqU_{\mathcal{K}_2}]) \aaseq \rk(\FqV_1') \aaseq 1.
\end{align}
Now let us consider \eqref{eq:3user} in the a.a.s. sense. We have
\begin{align}
	\Delta_n \aasgeq \frac{1+2d-2-(d-1)}{2} = \frac{d}{2},
\end{align}
or, equivalently,
\begin{align}
	\Delta^*_l \geq \frac{d}{2},
\end{align}
which proves the desired converse. $\hfill \qed$

\subsection{Achievability for odd $d$ with $d = 2K-1$}
Consider the following $K$ matrices
\begin{align}
	M_k = [\FqV_{[K]}',\FqV_{[K]\backslash \{k\}}] \in {\Bbb F}_q^{d\times d}, ~~ k\in [K].
\end{align}
By Lemma \ref{lem:full_rank_matrix}, we have,
\begin{align} \label{eq:full_rank_K}
	\rk(M_k) \aaseq d, ~~ \forall k\in [K].
\end{align}
For any LCBC instance that satisfies $\rk(M_k) = d, ~~ \forall k\in [K]$, we are able to find non-zero ${\Bbb F}_q$ elements $\alpha_1,\alpha_2,...,\alpha_{K-1}$ and  ${\Bbb F}_q$ elements $\alpha_1',\alpha_2',...,\alpha_{K}'$ such that
\begin{align}
	\FqV_{K} = \sum_{k=1}^{K-1} \alpha_k \FqV_k + \sum_{k=1}^K \alpha_k' \FqV_k'.
\end{align}
To see this, first note that since $\rk(M_K) = d$, $\FqV_{K}$ can be represented by a linear combination of the $2K-1$ vectors $\FqV_1,\FqV_2,...,\FqV_{K-1},\FqV_1',\FqV_2',...,\FqV_K'$. Now let us show that the coefficients $\alpha_1,\alpha_2,...,\alpha_{K-1}$ are non-zero. We prove by contradiction. Suppose $\alpha_i=0$. We then have
\begin{align}
	\sum_{k\in[1:K-1]\backslash\{i\}} \alpha_k \FqV_k + \sum_{k=1}^K \alpha_k' \FqV_k' - \FqV_K = 0,
\end{align}
which implies that $M_i$ does not have full column rank $d$. This contradiction proves that $\alpha_k\not=0, \forall k\in[1:K-1]$. Let the batch size $L=1$. The server broadcasts ${\bf S} = {\bf S}_{[K-1]}$, where
\begin{align}
	{\bf S}_{k} = \FqX^T (\alpha_k \FqV_k + \alpha_k' \FqV_k') \in {\Bbb F}_q, ~~ k\in[K-1].
\end{align}
User $k, k\in[K-1]$ can get its desired computation by $\FqX^T\FqV_k =  (1/{\alpha_k})({\bf S}_{k} - \alpha_k'\FqX^T\FqV_k')$. User $K$ can get its desired computation by $\FqX^T\FqV_K = \sum_{k=1}^{K-1}{\bf S}_k + \alpha_K' \FqX^T\FqV_K'$. The broadcast cost is then $\Delta = K-1$. By \eqref{eq:full_rank_K}, we conclude that
\begin{align}
	\Delta^*_u \leq K-1.
\end{align}
which is the desired upper bound. $\hfill \qed$

\section{Conclusion}
The take home message of this work is optimistic. While a general capacity characterization of the LCBC for large number of users remains out of reach because it includes recognized hard problems such as index coding, a generic capacity characterization is shown to be tractable. As such, the LCBC setting that generalizes index coding, combined with the generic capacity formulation that focuses on \emph{almost all} instances of the LCBC, presents a promising path forward for future progress.  This is  analogous to DoF studies of wireless networks where much progress has come about by focusing on generic settings (‘almost all’ channel realizations rather than ‘all’ channel realizations) while the DoF of arbitrary instances still remain largely open. 

The latter limitation is  worth emphasizing. While a generic capacity characterization reveals the capacity of most LCBC settings, it is notable that the LCBC settings that have received the most attention thus far, say index coding and coded caching for example,  have specialized demand and side-information structures that are not generic. Thus, open questions in index coding and caching remain open and as important as ever for future work. The study of generic capacity is not intended to supersede the studies of caching, index coding or other specialized applications, but to complement those efforts with an understanding of what is missed in the study of specializations --- the generic case. Understanding the capacity limits for structureless, i.e., generic side-information and demands is especially important because the scope of possible linear computation scenarios that may arise in future  applications is far too broad to be understood through studies of specialized structures alone. For example, arbitrary linear filters may be applied by different users on large datasets held by a central server, with side-information arising from previously retrieved outputs of other filtering operations on the same datasets. Depending on the application, there may be little or no freedom to optimize the structure of the demand and side-information. Also, a theory cannot be built out of special cases while ignoring the generic case. So if a cohesive information theoretical understanding of communication networks used for computation tasks is to ever emerge, the generic case has to be at its foundation.

Promising directions for future work include the exploration of generic capacity for asymmetric settings, analysis of the LCBC download cost vs complexity tradeoff, and generic capacity in the large $q$ sense (especially for $n=1$).  Extensions of finite field results to degrees of freedom (DoF) results over real/complex numbers, and studies of the tradeoffs between precision and communication cost in the GDoF sense (as in \cite{Price_Precision}) are promising as well.
Last but not the least, while the capacity results  in this work establish the information theoretic fundamental limits, asymptotic IA schemes are far from practical. Therefore, the extent to which  the fundamental limit can be approached with \emph{practical} coding schemes, is a most interesting open question where future coding-theoretic analysis can shed light.

\appendix

\section{Theorem \ref{thm:large_K}: Proof of Achievability} \label{sec:proof_large_K}
Let us recall the compact notation $\FqU_k \triangleq [\FqV_k',\FqV_k],\forall k\in [K]$. For $d\geq K(m+m')$, the broadcast cost $Km$ is trivially achievable, simply by broadcasting each user's demand separately, i.e., ${\bf S}=\FqX^T[\FqV_1, \hdots, \FqV_K]$. The achievability for the remaining regimes is shown next.
\subsection{Achievability for $d\leq m+m'$} \label{subsec:achi_trivial}
Define the event 
\begin{align}
E_n\triangleq\Big(\rk(\FqV_k') = \min\{m',d\}\Big)\land\Big(\rk(\FqU_k) = d\Big).
\end{align}
In Lemma \ref{lem:full_rank_matrix}, letting $M'=[~]$, $M=\FqV_k'$ and $M=\FqU_k$, respectively, we obtain that $E_n$ holds a.a.s. The following argument is true if $E_n$ holds.
\begin{enumerate}
	\item If $d\leq m'$, each of the $K$ users is able to compute $\FqX$, since $\rk(\FqV_k') = d$. This implies that $\Delta^* = 0$.
	\item Using  field extensions (cf. Appendix B of \cite{Yao_Jafar_3LCBC}), let us consider the equivalent LCBC with field size $q^z = p^{nz}$. If $m'< d\leq m+m'$, for each $k\in [K]$, let $\FqU^{c}\in {\Bbb F}_{q^z}^{d\times (d-m')}$. We claim that $P_k = \det([\FqU_k, \FqU^c])$ is a non-zero polynomial in the elements of $\FqU_k, \FqU^c$. To see this, for each $k$, we can choose $\FqU^{c}= ({\bf I}^{d\times d}|\FqU_k) \in \mathbb{F}_{q^z}^{d\times (d-m')}$ such that $[\FqU_k, \FqU^c]$ spans $\langle {\bf I}^{d\times d} \rangle$. It follows that $P=\prod_k P_k$ is a non-zero polynomial in the elements of $\FqU_{[K]},\FqU^c$. By Schwartz-Zippel Lemma, if the elements of $\FqU^c$ are chosen uniformly in $\mathbb{F}_{q^z}$, the probability of $P$ evaluating to zero is not more than $\frac{\mbox{degree of }P}{q^z}\leq\frac{K(d-m')}{q^z}$. Thus, by choosing $z>\log_q(K(d-m'))$, we ensure that there exists such $\FqU^c$ that satisfies $\det([\FqU_k, \FqU^c])\not=0$ for all $k\in[K]$. Broadcasting ${\bf S} = \FqXL^T\FqU^c$, we have $\Delta = d-m'$, and each User $k$ is able to compute $\FqXL$ with ${\bf S}$ and its side information $\FqXL^T\FqV_k'$. 
\end{enumerate}
Thus we have the desired achievability, $\Delta^*(\Lambda_n) \aasleq \max\{0, d-m'\}$.

\subsection{Achievability for $(m+m')<d< K(m+m')$} \label{app:achi_non_trivial}
Let $L=1$. For $q=p^n$, we will interpret ${\Bbb F}_q$ as an $n$-dimensional vector space over ${\Bbb F}_p$, and design a linear scheme over ${\Bbb F}_p$. Accordingly, let us clarify the notation as follows.
\begin{enumerate}
	\item The elements of the data and coefficient matrices are chosen from ${\Bbb F}_{q} = {\Bbb F}_{p^n}$.
	\item The data $\FqX^T = [\Fqx_1,\Fqx_2,\hdots,\Fqx_d]\in {\Bbb F}_q^{1\times d}$, is  equivalently represented over  $\mathbb{F}_p$ as $\FpX^T = [\Fpx_1^T,\Fpx_2^T,$ $\hdots,\Fpx_d^T]$ $\in\mathbb{F}_p^{1\times dn}$, where $\Fpx_i\in\mathbb{F}_p^{n\times 1}$ is the $n\times 1$ vector representation of $\Fqx_i$ over $\mathbb{F}_p$.
	\item User $k$ has side information $\FqX^T\FqV_k' \in {\Bbb F}_q^{1\times m'}$ and wishes to compute $\FqX^T \FqV_k\in {\Bbb F}_q^{1\times m}$, where
		\begin{align}
			\FqV_k &= 
			\begin{bmatrix}
				\FqV_k^1 & \FqV_k^2 & \hdots & \FqV_k^m
			\end{bmatrix} = \begin{bmatrix}
				\Fqv_{k1}^1 & \Fqv_{k1}^2 & \hdots & \Fqv_{k1}^{m} \\ 
				\Fqv_{k2}^1 & \Fqv_{k2}^2 & \hdots & \Fqv_{k2}^{m} \\ 
				\vdots & \vdots & \ddots & \vdots			\\
				\Fqv_{kd}^1 & \Fqv_{kd}^2 & \hdots & \Fqv_{kd}^{m}
			\end{bmatrix}~~ \in {\Bbb F}_q^{d\times m}
		\end{align}
		and
		\begin{align}
			\FqV_k' &= 
			\begin{bmatrix}
				\FqV_k'^1 & \FqV_k'^2 & \hdots & \FqV_k'^{m'}
			\end{bmatrix} = \begin{bmatrix}
				\Fqv_{k1}'^1 & \Fqv_{k1}'^2 & \hdots & \Fqv_{k1}'^{m'} \\ 
				\Fqv_{k2}'^1 & \Fqv_{k2}'^2 & \hdots & \Fqv_{k2}'^{m'} \\ 
				\vdots & \vdots & \ddots & \vdots			\\
				\Fqv_{kd}'^1 & \Fqv_{kd}'^2 & \hdots & \Fqv_{kd}'^{m'}
			\end{bmatrix}~~ \in {\Bbb F}_q^{d\times m'}
		\end{align}
	\item Let $\Fqr_{1},\Fqr_2,\hdots,\Fqr_{m'}, \Fqs_{1},\Fqs_{2},\hdots, \Fqs_{m}$ be chosen i.i.d uniformly in ${\Bbb F}_q$.
	\item Define the set of variables,
		\begin{align}
			\mathcal{V} &\triangleq \Big\{ \Fqv_{ki}^{j}: k\in [K], i\in [d], {j}\in[m] \Big\} \cup \Big\{ \Fqv_{ki}'^{j'}: k\in [K], i\in [d], j'\in[m'] \Big\} \notag \\
			&~~~~\cup \Big\{ \Fqr_{1},\Fqr_2,\hdots,\Fqr_{m'}, \Fqs_{1},\Fqs_{2},\hdots, \Fqs_{m} \Big\},
		\end{align}
		and note that $|\mathcal{V}| = (m+m')(dK+1)$.
	\item We will also introduce the corresponding $n\times n$ matrix representations in ${\Bbb F}_p$ for several ${\Bbb F}_q$ variables (some of the ${\Bbb F}_q$ variables will be introduced later). The following table specifies them.
		\begin{center}
		\begin{tabular}{c|c|c}
		\toprule 
		${\Bbb F}_q$ variable & \begin{tabular}{c} ${\Bbb F}_p^{n\times n}$ matrix\\ representation \end{tabular} & Comment  \\ \hline
		$\Fqv_{ki}^j$ & $\Fpv_{ki}^j$ & {\small \begin{tabular}{c} $k\in[K]$, \\ $i\in [d]$, \\ $j\in [m]$ \end{tabular}} \\ \hline
		$\Fqv_{ki}'^{j'}$ & $\Fpv_{ki}'^{j'}$ & {\small \begin{tabular}{c} $k\in[K]$,\\$i\in [d]$, \\ $j'\in [m']$ \end{tabular}} \\ \hline
		$\Fqr_{j'}$ & $\Fpr_{j'}$ & $j'\in [m']$ \\ \hline
		$\Fqt_{k i}^{\mu j'}$ & $\Fpt_{k i}^{\mu j'}$ & {\small \begin{tabular}{c} $k\in[K]$, \\$i\in[d]$, \\ $\mu\in[m]$,\\ $j'\in[m']$ \end{tabular}} \\ \hline
		$\Fqh_{\mu}^{j}$ & $\Fph_{\mu}^{j}$ & {\small \begin{tabular}{c} $j\in[\eta]$, \\$\mu \in [m]$ \end{tabular}} \\ \hline
		$\overline{\Fqh}_{\mu}^{j}$ & $\overline{\Fph}_{\mu}^{j}$ & {\small \begin{tabular}{c} $j\in[\overline{\eta}]$, \\$\mu \in [m]$ \end{tabular}} \\  \bottomrule
		\end{tabular}
		\end{center}
\end{enumerate}
Our goal is to show that $\Delta^*_u \leq dm/(m+m')$. Note that if $d=0$ or $m=0$, $\Delta=0$ is trivially achieved for all cases. If $m'=0$, then $\Delta\leq d$ is trivially achieved for all cases by broadcasting ${\bf S} = {\bf X}$.  Thus, in the following we consider the cases when $d>0,m>0,m'>0.$
First, for $k\in[K], \mu \in[m], j'\in[m']$, let us define 
\begin{align} \label{eq:def_vpvt}
	\begin{bmatrix}
		\Fqv_{k1}'^{j'} \\ \Fqv_{k2}'^{j'} \\ \vdots \\ \Fqv_{kd}'^{j'}
	\end{bmatrix}
	=
	\begin{bmatrix}
		\Fqv_{k1}^{\mu} \\ \Fqv_{k2}^{\mu} \\ \vdots \\ \Fqv_{kd}^{\mu}
	\end{bmatrix}\Fqr_{j'}
	+
	\begin{bmatrix}
		\Fqt_{k1}^{\mu j'} \\ \Fqt_{k2}^{\mu j'} \\ \vdots \\ \Fqt_{kd}^{\mu j'}
	\end{bmatrix}.
\end{align}
We have
\begin{align} \label{eq:def_VpVT}
	\underbrace{\begin{bmatrix}
		\Fpv_{k 1}'^{j'} \\ \Fpv_{k 2}'^{j'} \\ \vdots \\ \Fpv_{k d}'^{j'}
	\end{bmatrix}}_{\FpV_k'^{j'}\in {\Bbb F}_p^{dn\times n}}
	=
	\underbrace{\begin{bmatrix}
		\Fpv_{k 1}^{\mu} \\ \Fpv_{k 2}^{\mu} \\ \vdots \\ \Fpv_{k d}^{\mu}
	\end{bmatrix}}_{\FpV_k^\mu \in {\Bbb F}_p^{dn\times n}}
	\Fpr_{j'}
	+
	\underbrace{\begin{bmatrix}
		\Fpt_{k 1}^{\mu j'} \\ \Fpt_{k 2}^{\mu j'} \\ \vdots \\ \Fpt_{k d}^{\mu j'}
	\end{bmatrix}}_{\FpT_k^{\mu j'} \in {\Bbb F}_p^{dn\times n} }
\end{align}
by \eqref{eq:def_vpvt}, and $\FpT_k^{\mu,j'}, \mu\in[m], j'\in[m']$ are defined as in \eqref{eq:def_VpVT}.

Next, construct $\FqH_{\mu}\in {\Bbb F}_q^{\eta \times 1}, \overline{\FqH}_{\mu}\in {\Bbb F}_q^{\overline{\eta} \times 1}, \mu\in [m]$ as,
\begin{align} \label{eq:def_h}
	&\FqH_{\mu}^{1\times \eta} = \Bigg[\Fqs_{\mu} \prod_{k=1}^K \Bigg( \Big( \prod_{i=1}^d\prod_{j=1}^{m} (\Fqv_{ki}^j )^{\alpha_{ki}^j} \Big) \Big( \prod_{i=1}^d\prod_{j'=1}^{m'}(\Fqt_{ki}^{\mu j'})^{\beta_{ki}^{\mu j'}} \Big) \Bigg), \mbox{s.t.}~ 0 \leq \alpha_{ki}^j, \beta_{ki}^{\mu j'} \leq N-1 \Bigg] \\
	&~~~~~~~~~\triangleq (\Fqh_\mu^1,\Fqh_\mu^2,...,\Fqh_\mu^{\eta}), ~~~~~~ \forall \mu\in[m]. \\
	&\overline{\FqH}_{\mu}^{1\times \eta} = \Bigg[ \Fqs_{\mu}  \prod_{k=1}^K \Bigg( \Big( \prod_{i=1}^d\prod_{j=1}^{m} (\Fqv_{ki}^j )^{\alpha_{ki}^j} \Big) \Big( \prod_{i=1}^d\prod_{j'=1}^{m'}(\Fqt_{ki}^{\mu j'})^{\beta_{ki}^{\mu j'}} \Big) \Bigg),  \mbox{s.t.}~ 0 \leq \alpha_{ki}^j, \beta_{ki}^{\mu j'} \leq N \Bigg] \\
	&~~~~~~~~~\triangleq (\overline{\Fqh}_\mu^1,\overline{\Fqh}_\mu^2,...,\overline{\Fqh}_\mu^{\overline{\eta}}), ~~~~~~ \forall \mu\in[m].
\end{align}
Note that we have,
\begin{align}
	\eta = N^{Kd(m+m')}, ~~~\overline{\eta} = (N+1)^{Kd(m+m')}.
\end{align}
This construction ensures that for $\mu\in[m]$, the elements of $\Fqv_{ki}^j  \FqH_\mu$ and $\Fqt_{ki}^{\mu,j'} \FqH_\mu$ are contained among the elements of $\overline{\FqH}_\mu$ for all $i\in[d], j\in[m], j'\in[m']$. Define, 
\begin{align}
	\FpH_\mu = \big[ \Fph_{\mu}^1{\bf 1}, \Fph_{\mu}^2{\bf 1}, \hdots, \Fph_{\mu}^{\eta}{\bf 1}  \big] ~~ \in {\Bbb F}_p^{n\times \eta}, ~~~~~~ {\mu} \in [m]
\end{align}
and
\begin{align}
	\overline{\FpH}_{\mu} = \big[ \overline{\Fph}_{\mu}^1{\bf 1}, \overline{\Fph}_{\mu}^2{\bf 1}, \hdots, \overline{\Fph}_{\mu}^{\overline{\eta}}{\bf 1}  \big]~~\in {\Bbb F}_p^{n\times \overline{\eta}}, ~~~~~~ {\mu} \in [m],
\end{align}
where ${\bf 1}$ denotes the $n\times 1$ vector of $1$'s. By construction, the columns of $\Fpv_{ki}^j  \FpH$ and $\Fpt_{ki}^\mu\FpH$ are subsets of the columns of $\FpH_{\mu}$, which implies that $\forall \mu\in[m], k\in[K], i\in[d],j\in[m],j'\in[m']$,
\begin{align}  \label{eq:H_subset_general}
	\langle \Fpv_{ki}^j  \FpH_\mu  \rangle_p \subset \langle \overline{\FpH}_\mu \rangle_p, && \langle \Fpt_{ki}^\mu \FpH_\mu \rangle_p \subset \langle \overline{\FpH}_\mu \rangle_p.
\end{align}

Consider the $m$ matrices,
{\small
\begin{align}
	&\big[ \FpH_1, \FpH_2,\hdots, \FpH_m,\Fpr_1 \FpH_1,\Fpr_2 \FpH_1,\hdots, \Fpr_{m'} \FpH_1\big]\in {\Bbb F}_p^{n\times (m+m')\eta}\\
	&\big[  \FpH_1, \FpH_2,\hdots, \FpH_m,\Fpr_1 \FpH_2,\Fpr_2 \FpH_2,\hdots, \Fpr_{m'} \FpH_2\big]\in {\Bbb F}_p^{n\times (m+m')\eta}\\
	&~~~~~~\vdots \notag \\
	&\big[  \FpH_1, \FpH_2,\hdots, \FpH_m,\Fpr_1 \FpH_m,\Fpr_2 \FpH_m,\hdots, \Fpr_{m'} \FpH_m\big]\in {\Bbb F}_p^{n\times (m+m')\eta}.
\end{align}
}Define event $E_n$ as,
{\small
\begin{align} \label{eq:E_n_general}
	&E_n = \bigwedge_{\mu=1}^m  \Big( \rk\big([ \FpH_1, \FpH_2,\hdots, \FpH_m,\Fpr_1 \FpH_\mu,\Fpr_2 \FpH_\mu,\hdots, \Fpr_{m'} \FpH_\mu \big]\big)  = (m+m')\eta \Big).
\end{align}
}The following lemma establishes a sufficient condition when $E_n$ holds a.a.s., which will subsequently be essential to guarantee successfully decoding by each user.
\begin{lemma} \label{lem:full_rank_general}
	If $\lim_{n\to \infty}\frac{N}{p^{[n-(m+m')\eta]}} = 0$, then $E_n$ holds a.a.s.
\end{lemma}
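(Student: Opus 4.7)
The plan is to extend, essentially verbatim, the argument of Example 2 to the general symmetric setting. The event $E_n$ in \eqref{eq:E_n_general} is a conjunction of $m$ full-rank conditions, and since $m$ is independent of $n$, a finite union bound reduces the task to showing, for each fixed $\mu\in[m]$, that
$$M_\mu\triangleq[\FpH_1,\ldots,\FpH_m,\Fpr_1\FpH_\mu,\ldots,\Fpr_{m'}\FpH_\mu]\in\mathbb{F}_p^{n\times(m+m')\eta}$$
has full column rank a.a.s. Fixing $\mu$ and proceeding as in \eqref{eq:gotozerobegin_e2}--\eqref{eq:nonzero_e2}, writing $M_\mu\bar{\bf c}=\Fpf^{(\mu)}_{\bar{\bf c}}{\bf 1}$ and passing to the $\mathbb{F}_q$ scalar representations, $M_\mu$ has full column rank if and only if $P_\mu\triangleq\prod_{\bar{\bf c}\neq{\bf 0}}\Fqf^{(\mu)}_{\bar{\bf c}}\neq 0$, where
$$\Fqf^{(\mu)}_{\bar{\bf c}}=\sum_{\nu=1}^m\sum_{j=1}^\eta c_\nu^j\,\Fqh_\nu^j\;+\;\sum_{j'=1}^{m'}\sum_{j=1}^\eta {c'_{j'}}^{j}\,\Fqr_{j'}\Fqh_\mu^j\;\in\mathbb{F}_q.$$

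I would then show that $[P_\mu]\in\mathbb{F}_p[\mathcal{V}]$ is a non-zero polynomial, by verifying each factor via the same two-case split used in Examples 1 and 2. In Case I (some ${\bf c}_\nu\neq{\bf 0}$), specialize $\Fqr_{j'}=0$ for all $j'\in[m']$, so $\Fqt_{ki}^{\mu j'}=\Fqv_{ki}'^{j'}$ by \eqref{eq:def_vpvt} and $[\Fqf^{(\mu)}_{\bar{\bf c}}]$ collapses to $\sum_{\nu,j}c_\nu^j\Fqh_\nu^j$, a non-trivial $\mathbb{F}_p$-combination of $m\eta$ pairwise distinct monomials in $\{\Fqv,\Fqv',\Fqs\}$; distinctness across $\nu$ is guaranteed by the stream tag $\Fqs_\nu$ baked into \eqref{eq:def_h}, and distinctness across $j$ for fixed $\nu$ is immediate from construction. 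In Case II (all ${\bf c}_\nu={\bf 0}$, so some ${c'_{j'}}^{j}\neq 0$), only $\Fqt^{\mu\cdot}$ appears in the $\Fqh_\mu^j$'s, so for this fixed $\mu$ the substitution $\Fqt_{ki}^{\mu j'}\mapsto\Fqv_{ki}'^{j'}-\Fqv_{ki}^\mu\Fqr_{j'}$ is a bijective linear change of variables; the $\{\Fqr_{j'}\Fqh_\mu^j\}_{j',j}$ are distinct monomials in the pre-substitution variables and therefore stay linearly independent in $\mathbb{F}_p[\mathcal{V}]$, giving $[\Fqf^{(\mu)}_{\bar{\bf c}}]\neq 0$.

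To close, I would apply the Schwartz--Zippel Lemma. Each $[\Fqh_\nu^j]$ has degree at most $Kd(m+m')(N-1)+1$, so $\deg[\Fqf^{(\mu)}_{\bar{\bf c}}]\le Kd(m+m')N+2$, and hence $\deg[P_\mu]\le (p^{(m+m')\eta}-1)[Kd(m+m')N+2]$. Therefore
$$\mathsf{Pr}\bigl(P_\mu(\mathcal{V})=0\bigr)\;\leq\;\frac{(p^{(m+m')\eta}-1)[Kd(m+m')N+2]}{p^n}\;\leq\;\frac{C\,KN}{p^{n-(m+m')\eta}}\;\to\;0$$
for a constant $C=C(d,m,m')$ under the hypothesis $\lim_{n\to\infty}N/p^{n-(m+m')\eta}=0$. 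A union bound over $\mu\in[m]$ then yields $E_n$ a.a.s.

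The main technical obstacle is the monomial-distinctness claim in Case I: for $m=1$ (Example 1) there is a single block of $\eta$ monomials and the issue is trivial, but for $m>1$ one must ensure that monomials from different $\FpH_\nu$ blocks remain ``polynomially orthogonal'' in $\mathcal{V}$. This is precisely the purpose of the stream tags $\Fqs_1,\ldots,\Fqs_m$ introduced in \eqref{eq:def_h}; once that structural fact, together with the bijective-change-of-variables observation used in Case II, is in hand, the rest is a direct, notationally heavier rewrite of Example 2.
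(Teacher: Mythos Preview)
Your proposal is correct and follows essentially the same route as the paper's proof: reduce to a single $\mu$ by a finite union bound (the paper does the same, then invokes symmetry to treat only $\mu=1$), translate full column rank of $M_\mu$ into non-vanishing of the product polynomial $P_\mu$ over $\mathbb{F}_q$, handle Case~I by specializing all $r_{j'}=0$ so that the $m\eta$ terms become distinct monomials separated by the tags $\theta_\nu$, handle Case~II by arguing linear independence of $\{r_{j'}h_\mu^j\}_{j',j}$ (your change-of-variables justification is a clean variant of the paper's ``no $r_{j'}$ divides $h_\mu^j$'' argument), and finish with Schwartz--Zippel. One minor bookkeeping slip: since each $t_{ki}^{\mu j'}=v_{ki}'^{j'}-v_{ki}^\mu r_{j'}$ has degree $2$ in $\mathcal{V}$, the degree of $[h_\nu^j]$ is at most $Kd(m+2m')(N-1)+1$ rather than $Kd(m+m')(N-1)+1$ (matching the paper's constant $Kd(2m'+m)$), but this does not affect the asymptotic conclusion.
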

\begin{proof}
	See Appendix \ref{app:proof_full_rank}.
\end{proof}
Let us specify the value of $N$ as follows,
\begin{align} \label{eq:def_N_general}
	N = \left\lfloor \left( \frac{n-\sqrt{n}}{m+m'} \right)^{\frac{1}{Kd(m+m')}} \right\rfloor,
\end{align}
from which it follows that
\begin{align}
	N \leq \left( \frac{n-\sqrt{n}}{m+m'} \right)^{\frac{1}{Kd(m+m')}} 
\end{align}
and
\begin{align}
	n-(m+m')\eta \geq \sqrt{n}.
\end{align}
Therefore,
\begin{align}
	\lim_{n\to \infty} \frac{N}{p^{n-(m+m')\eta}} &\leq \lim_{n\to \infty}\frac{\left( \frac{n-\sqrt{n}}{m+m'} \right)^{\frac{1}{Kd(m+m')}}}{p^{\sqrt{n}}} \notag\\
	&= \lim_{n\to \infty}\frac{\mathcal{O}(n^{\alpha})}{p^{\sqrt{n}}} = 0
\end{align}
where $\alpha$ is independent of $n$. Since $N\geq 0$, we have $\lim_{n\to \infty}\frac{N}{p^{n-(m+m')\eta}} = 0$. Applying Lemma \ref{lem:full_rank_general}, we have that $E_n$ holds a.a.s.
Now for $\mu\in[1:m]$, let ${\bf Z}_{\mu} = ({\bf I}^{n\times n}|[\FpH_1,\hdots,\FpH_\mu, \Fpr_1 \FpH,\hdots, \Fpr_{m'} \FpH]) \in {\Bbb F}_p^{n\times (n-(m+m')\eta)}$, so that
\begin{align}
	\big[ \FpH_1,  \FpH_2,\hdots,  \FpH_{m},\Fpr_1\FpH_{\mu},\Fpr_2\FpH_{\mu}, \hdots, \Fpr_{m'}\FpH_{\mu}, {\bf Z}_{\mu} \big]
\end{align}
has full rank $n$. For compact notation, let 
\begin{align}
	\overline{\mathcal{H}} \triangleq [\overline{\FpH}_1,\overline{\FpH}_2,\hdots,\overline{\FpH}_m]^{n\times m\overline{\eta}}.
\end{align} 
Let the server broadcast ${\bf S} = \big({\bf S}_0,{\bf S}^1_{[K]},\hdots, {\bf S}^m_{[K]}\big) \in {\Bbb F}_p^{1\times [md\overline{\eta}+Km(n-(m+m')\eta)]}$, where
\begin{align}
	{\bf S}_0 = \FpX^T
	\begin{bmatrix}
		\overline{\mathcal{H}} & {\bf 0} & \hdots & {\bf 0} \\
		{\bf 0} & \overline{\mathcal{H}} & \hdots & {\bf 0} \\
		\vdots & \vdots & \ddots & \vdots \\
		{\bf 0} & {\bf 0} & \hdots & \overline{\mathcal{H}}
	\end{bmatrix} ~~ \in {\Bbb F}_p^{nd\times md\overline{\eta}}
\end{align}
and for $k\in [K], \mu\in[m]$,
\begin{align}
	{\bf S}_k^{\mu} = \FpX^T \FpV_k^\mu {\bf Z}_\mu.
\end{align}

The decoding process works as follows. User $k$ is able to compute
\begin{align} \label{eq:decode_1}
	\FpX^T \FpV_k'^{j'} \FpH_\mu, ~~~~~~ j'\in[m'], ~~ \mu \in[m] 
\end{align}
directly from its side information. Meanwhile, it is able to compute 
\begin{align} \label{eq:projection_1_general}
	\FpX^T  \FpV_k^j  \FpH_\mu, ~~~~~~ j\in[m], ~~ \mu \in[m]\end{align}
and
\begin{align} \label{eq:decode_2}
	\FpX^T \FpT_k^{\mu j'} \FpH_\mu, ~~~~~~ j'\in[m'],~~\mu \in[m]
\end{align}
from ${\bf S}_0$, since for $j\in [m], \mu \in [m]$,
{\small
\begin{align}
	\left\langle  \FpV_k^{j} \FpH_\mu \right\rangle_p = \Bigg\langle  \begin{bmatrix}
		\Fpv_{k1}^{j}   \FpH_{\mu} \\ \Fpv_{k2}^{j}   \FpH_{\mu} \\ \vdots \\ \Fpv_{kd}^{j}   \FpH_{\mu}
	\end{bmatrix} \Bigg\rangle_p 
	\subset
	\Bigg\langle  \begin{bmatrix}
		\overline{\FpH}_{\mu} & {\bf 0} & {\bf 0} & {\bf 0} \\
		{\bf 0} & \overline{\FpH}_{\mu} & {\bf 0} & {\bf 0} \\
		\vdots & \vdots & \ddots & \vdots \\
		{\bf 0} & {\bf 0} & {\bf 0} & \overline{\FpH}_{\mu} \end{bmatrix}
		\Bigg\rangle_p,
\end{align}
}and for $j'\in [m'], \mu\in [m]$,
\begin{align}
	& \Bigg\langle  \begin{bmatrix}
		\Fpt_{k1}^{\mu j'} \FpH_{\mu} \\ \Fpt_{k2}^{\mu j'} \FpH_{\mu} \\ \vdots \\ \Fpt_{kd}^{\mu j'} \FpH_{\mu}
	\end{bmatrix} \Bigg\rangle_p 
	\subset
	\Bigg\langle  \begin{bmatrix}
		\overline{\FpH}_{\mu} & {\bf 0} & {\bf 0} & {\bf 0} \\
		{\bf 0} & \overline{\FpH}_{\mu} & {\bf 0} & {\bf 0} \\
		\vdots & \vdots & \ddots & \vdots \\
		{\bf 0} & {\bf 0} & {\bf 0} & \overline{\FpH}_{\mu} \end{bmatrix}
		\Bigg\rangle_p
\end{align}
due to \eqref{eq:H_subset_general}.
Thus, User $k$ is then able to compute
\begin{align} \label{eq:projection_2_general}
	\FpX^T \FpV_k^\mu \Fpr_{j'} \FpH_\mu = \FpX^T\FpV_k'^{j'}\FpH_\mu - \FpX^T \FpT_k^{\mu j'} \FpH_\mu
\end{align}
for all $j'\in[m'],\mu\in[m]$ according to \eqref{eq:def_VpVT}.
Together with ${\bf S}_k^\mu$, User $k$ is able to compute
\begin{align}
	 ~\FpX^T  \FpV_k^j  \FpH_\mu, ~~\FpX^T \FpV_k^\mu \Fpr_{j'} \FpH_\mu ~ \mbox{and}~ {\bf S}_k^\mu,
\end{align}
for all $j\in[m], j'\in[m'], \mu\in[m]$, which are
{\small
\begin{align}
	& \FpX^T \FpV_k^{1} \big[ \FpH_1,  \FpH_2,\hdots, \FpH_m, \Fpr_1\FpH_1,\Fpr_2\FpH_1 ,\hdots, \Fpr_{m'}\FpH_1, {\bf Z}_1 \big], \\
	& \FpX^T \FpV_k^{2} \big[ \FpH_1,  \FpH_2,\hdots,  \FpH_m, \Fpr_1\FpH_2,\Fpr_2\FpH_2 ,\hdots, \Fpr_{m'}\FpH_2, {\bf Z}_2\big],\\
	& ~~~~~~ \vdots \notag \\
	& \FpX^T \FpV_k^{m} \big[ \FpH_1, \FpH_2,\hdots, \FpH_m, \Fpr_1\FpH_m,\Fpr_2\FpH_m ,\hdots, \Fpr_{m'}\FpH_m, {\bf Z}_m \big].
\end{align}
}Since $\big[ \FpH_1, \FpH_2,\hdots, \FpH_m, \Fpr_1\FpH_\mu,\Fpr_2\FpH_\mu ,\hdots, \Fpr_{m'}\FpH_\mu, {\bf Z}_\mu \big]$ is invertible (has full rank) a.a.s. for $\mu\in[m]$, User $k$ is able to compute its desired computation, $\FpX^T\FpV_k = \FpX^T\big[\FpV_k^1,\FpV_k^2,\hdots,\FpV_k^m\big]$ a.a.s.

For $q=p^n$, the cost of broadcasting each $p$-ary symbol is $1/n$ in $q$-ary units. Thus, the broadcast cost of this scheme is,
\begin{align} \label{eq:deltan_general}
	\Delta_n = \frac{md\overline{\eta}+Km\big(n-(m+m')\eta\big)}{n}.
\end{align}

By \eqref{eq:def_N_general}, we have that
\begin{align}
	\eta = N^{Kd(m+m')} \leq \frac{n-\sqrt{n}}{m+m'} \leq (N+1)^{Kd(m+m')} = \overline{\eta}
\end{align}
which implies that
\begin{align}
	\lim_{n\to \infty}\frac{\eta}{n} &= \lim_{n\to \infty}\frac{N^{Kd(m+m')}}{n} \notag\\
	&\leq \lim_{n\to \infty}\frac{n-\sqrt{n}}{n} \times \frac{1}{m+m'} \notag\\
	&= \frac{1}{m+m'}.
\end{align}
On the other hand,
\begin{align}
	\lim_{n\to \infty}\frac{\eta}{n} &=\lim_{n\to \infty} \frac{(N+1)^{Kd(m+m')}/(1+1/N)^{Kd(m+m')}}{n} \\
	&\geq \lim_{N\to \infty} \frac{1}{(1+1/N)^{Kd(m+m')}}\times\lim_{n\to \infty}\frac{n-\sqrt{n}}{n}\notag\\
	&~~~~\times \frac{1}{m+m'} \\
	& = \frac{1}{m+m'}.
\end{align}
Thus, we have that 
\begin{align} \label{eq:limeta_general}
	\lim_{n\to \infty}\frac{\eta}{n} = \frac{1}{m+m'}
\end{align}
which also implies that
\begin{align} \label{eq:limetabar_general}
	\lim_{n\to \infty}\frac{\overline{\eta}}{n} = \lim_{n\to \infty}  \frac{\eta}{n} \times\lim_{N\to \infty}(1+1/N)^{Kd(m+m')} = \frac{1}{m+m'}.
\end{align}
Combining \eqref{eq:deltan_general} with \eqref{eq:limeta_general} and \eqref{eq:limetabar_general} we have
\begin{align}
	\lim_{n\to \infty}\Delta_n = md\times \frac{1}{m+m'} + 0 = \frac{md}{m+m'}
\end{align}
since $K,m,m',d$ are independent of $n$. Thus, for any $\varepsilon >0$, $\exists n_0>0$ such that $\Delta_n \leq \frac{md}{m+m'}+\varepsilon$ for all $n\geq n_0$. Recall that the broadcast cost $\Delta_n$ is achievable if $E_n$ holds, i.e., $\Delta^*(\Lambda_n)\leq \Delta_n\leq \frac{md}{m+m'}+\epsilon$ if $n\geq n_0$ and $E_n$ holds. Now let us show that $\frac{md}{m+m'}+\epsilon$ is achievable a.a.s., by evaluating the limit in \eqref{eq:limaasach} as follows,
\begin{align}
&\lim_{n\rightarrow\infty}\mathsf{Pr}\Big(\Delta^*(\Lambda_n)\leq \frac{md}{m+m'}+\epsilon\Big)\notag\\
&\geq \lim_{n\rightarrow\infty}\mathsf{Pr}\Bigg(\Big(\Delta^*(\Lambda_n)\leq \frac{md}{m+m'}+\epsilon\Big)\land E_n\Bigg)\\
&=\lim_{n\rightarrow\infty}\mathsf{Pr}(E_n)\mathsf{Pr}\Big(\Delta^*(\Lambda_n)\leq \frac{md}{m+m'}+\epsilon~\Big|~ E_n\Big)\\
&=1
\end{align}
which implies that $\lim_{n\rightarrow\infty}\mathsf{Pr}\Big(\Delta^*(\Lambda_n)\leq \frac{md}{m+m'}+\epsilon\Big)=1$.
Since this is true for all $\varepsilon>0$, according to \eqref{eq:infdeltau} we have $\Delta_u^*\leq \inf\{\frac{md}{m+m'}+\epsilon\mid \epsilon>0\} = \frac{md}{m+m'}$. $\hfill \qed$

\section{Proof of Lemma \ref{lem:full_rank_general}} \label{app:proof_full_rank}
By Lemma 1.1.3(v) \cite{tao2012topics}, it suffices to prove that $\forall \mu \in [m]$,
\begin{align}
	&\rk\big( \big[ \FpH_1, \FpH_2,\hdots, \FpH_m, \Fpr_1\FpH_{\mu}, \Fpr_2\FpH_{\mu},\hdots, \Fpr_{m'}\FpH_{\mu} \big] \big) \notag\\
	&\aaseq (m+m')\eta.
\end{align}
Due to symmetry, without loss of generality, we will show the proof for $\mu=1$, which is
\begin{align}
	&\rk\big( \big[ \FpH_1, \FpH_2,\hdots,  \FpH_m, \Fpr_1\FpH_{1}, \Fpr_2\FpH_{1},\hdots, \Fpr_{m'}\FpH_{1} \big] \big) \notag\\
	&\aaseq (m+m')\eta.
\end{align}
Note that $\big[ \FpH_1, \FpH_2,\hdots, \FpH_m, \Fpr_1\FpH_{1}, \Fpr_2\FpH_{1},\hdots, \Fpr_{m'}\FpH_{1} \big]$ has full column rank $(m+m')\eta$ if and only if for all ${\bf c}_{i} = [c_i^1,c_i^2,\hdots, c_i^{\eta}]^T\in {\Bbb F}_p^{\eta \times 1},i\in[m]$, and ${\bf c}_i' = [c_i'^1,c_i'^2,\hdots, c_i'^{\eta}]^T\in {\Bbb F}_p^{\eta \times 1},i\in[m']$ such that $\overline{{\bf c}}^T = \big[{\bf c}_1^T, {\bf c}_2^T,\hdots, {\bf c}_m^T,{\bf c}'^T_1,{\bf c}'^T_2,\hdots, {\bf c}'^T_{m'}\big] \not= {\bf 0}^{1\times (m+m')\eta}$, 
\begin{align} \label{eq:gotozerobegin_general}
	{\bf 0}^{n\times 1} &\not= [ \FpH_1,  \FpH_2, \hdots,  \FpH_m, \Fpr_1\FpH_1,\Fpr_2\FpH_1,\hdots, \Fpr_{m'}\FpH_1]
	\overline{{\bf c}}\\
	& = \FpH_1{\bf c}_1+  \FpH_2{\bf c}_2 +\hdots+  \FpH_m{\bf c}_m  + \Fpr_1\FpH_1{\bf c}'_1 + \Fpr_2\FpH_1{\bf c}'_2 + \hdots + \Fpr_{m'}\FpH_1{\bf c}'_{m'} \\
	& = \sum_{j=1}^{\eta}c_1^j \Fph_1^{j}{\bf 1} +  \hdots + \sum_{j=1}^{\eta}c_m^j  \Fph_m^{j}{\bf 1}  +   \sum_{j=1}^{\eta}c'^j \Fpr_1\Fph_1^{j}{\bf 1} +\hdots + \sum_{j=1}^{\eta}c'^j \Fpr_{m'}\Fph_1^{j}{\bf 1} \\
	& = \Bigg( \sum_{j=1}^{\eta}c_1^j  \Fph_1^{j} +  \hdots + \sum_{j=1}^{\eta}c_m^j  \Fph_m^{j}  +   \sum_{j=1}^{\eta}c'^j \Fpr_1\Fph_1^{j} +\hdots + \sum_{j=1}^{\eta}c'^j \Fpr_{m'}\Fph_1^{j}\Bigg) {\bf 1} \\
	& \triangleq \Fpf_{\overline{\bf c}} {\bf 1}
\end{align}
where $\Fpf_{\overline{\bf c}}$ is an $n\times n$ matrix in ${\Bbb F}_p$, which has a scalar representation in ${\Bbb F}_q$ as,
\begin{align} \label{eq:f_c_general}
	&\Fqf_{\overline{\bf c}} = \sum_{j=1}^{\eta}c_1^j \Fqh_1^{j} +  \hdots + \sum_{j=1}^{\eta}c_m^j  \Fqh_m^{j} +   \sum_{j=1}^{\eta}c_1'^j \Fqr_1\Fqh_1^{j} +\hdots + \sum_{j=1}^{\eta}c_{m'}'^j \Fqr_{m'}\Fqh_1^{j} ~~ \in {\Bbb F}_q.
\end{align}
Note that since ${\Bbb F}_p$ is a sub-field of ${\Bbb F}_q$, the elements of $\overline{\bf c}$ in ${\Bbb F}_p$ are also in ${\Bbb F}_q$.
Thus, $\Fpf_{\overline{\bf c}} \in {\Bbb F}_p^{n\times 1}$ can be equivalently represented in ${\Bbb F}_q$ as the product of $\Fqf_{\overline{\bf c}}$ with the scalar representation in ${\Bbb F}_q$, of ${\bf 1}$ (the all $1$ vector in ${\Bbb F}_p$). Since the ${\Bbb F}_q$ representation of ${\bf 1}$ is not $0$, we obtain that
\begin{align}
	\Fpf_{\overline{\bf c}} {\bf 1} \not= {\bf 0}^{n\times 1} \Longleftrightarrow \Fqf_{\overline{\bf c}} \not= 0.
\end{align}
Therefore, $[ \FpH_1,  \FpH_2, \hdots,  \FpH_m, \Fpr_1\FpH_1,\Fpr_2\FpH_1,\hdots, \Fpr_{m'}\FpH_1]\in {\Bbb F}_p^{n\times (m+m')\eta}$ has full column rank if and only if,
\begin{align} \label{eq:nonzero_geneal}
	P \triangleq \prod_{\overline{\bf c} \in {\Bbb F}_p^{(m+m')\eta \times 1} \backslash \{{\bf 0} \}} \Fqf_{\overline{\bf c}} \not =0. 
\end{align}
The condition of \eqref{eq:nonzero_geneal}, which is equivalent to the event $E_n$, says that a uniformly random evaluation of the function $P(\mathcal{V})$ produces a non-zero value. We will show that this is true a.a.s. in $n$.
\begin{enumerate}
	\item Case I: At least one of $\{{\bf c}_1,{\bf c}_2,\hdots, {\bf c}_m\}$ is not ${\bf 0}^{\eta \times 1}$, then set $r_1=r_2=\hdots=r_{m'} =0$,  which implies $\Fqf_{\overline{\bf c}} = \sum_{j=1}^{\eta}c_1^j  \Fqh_1^{j}+\hdots+ \sum_{j=1}^{\eta}c_m^j  \Fqh_m^{j}$, and that  $\Fqt_{ki}^{\mu,j'} = \Fqv'^{j'}_{ki}, \forall \mu\in[m], i\in[d],j'\in[m']$ by \eqref{eq:def_vpvt}. Meanwhile, $\Fqh_{\mu}^1,\Fqh_{\mu}^2,\hdots, \Fqh_{\mu}^{\eta}$ are different monomials in the elements of $\Fqv_{ki}^j$, $\Fqv_{ki}'^{j'}$ and $\Fqs_\mu$. Moreover, since any $\Fqh_{\mu}^j$ has the factor $\Fqs_{\mu}$ but does not have the factor $\Fqs_{\mu'}$ if $\mu'\not=\mu$, it follows that $ \Fqh_1^1, \Fqh_1^2,\hdots,  \Fqh_1^{\eta},  \Fqh_2^1, \Fqh_2^2,\hdots,  \Fqh_2^{\eta},\hdots,  \Fqh_m^1, \Fqh_m^2,\hdots,  \Fqh_m^{\eta}$ ($m\eta$ in total) are different monomials. Since different monomials are linearly independent, we have that $[\Fqf_{\overline{\bf c}}]$ is a non-zero polynomial.
	\item Case II: ${\bf c}_1 = {\bf c}_2 = \hdots = {\bf c}_m = {\bf 0}$ and thus at least one of $\{ {\bf c}'_1,{\bf c}'_2,\hdots, {\bf c}'_{m'}\}$ is not ${\bf 0}^{n\times 1}$. For this case, we have $\Fqf_{\overline{\bf c}} = \sum_{j=1}^{\eta}c_1'^j \Fqr_1\Fqh_1^{j}+\hdots+ \sum_{j=1}^{\eta}c_{m'}'^j \Fqr_{m'}\Fqh_1^{j}$. From the discussion in the previous case, we know that $\Fqh_1^1,\Fqh_1^2,\hdots, \Fqh_1^\eta$ are non-zero polynomials, and none of them has a factor in $\{\Fqr_1,\Fqr_2,\hdots, \Fqr_{m'}\}$ (because otherwise letting $\Fqr_1=\Fqr_2=\hdots=\Fqr_{m'}=0$ would evaluate that polynomial to $0$, which is a contradiction). Thus, we obtain that $\Fqr_1\Fqh_1^1,\Fqr_1\Fqh_1^2,\hdots,\Fqr_1\Fqh_1^\eta, \Fqr_2\Fqh_1^1,\Fqr_2\Fqh_1^2,\hdots,\Fqr_2\Fqh_1^\eta,$ $\hdots, \Fqr_{m'}\Fqh_1^1,\Fqr_{m'}\Fqh_1^2,\hdots,\Fqr_{m'}\Fqh_1^\eta$ are linearly independent. It then follows that $[\Fqf_{\overline{\bf c}}]$ is a non-zero polynomial.
\end{enumerate}
Thus, $[\Fqf_{\overline{\bf c}}]$ is a non-zero polynomial. Since $h_i^j$ has degree not more than $(N-1)Kd(2m'+m)+1$, $[\Fqf_{\overline{\bf c}}]$ has degree not more than $(N-1)Kd(2m'+m)+2$. Therefore, $[P] \in {\Bbb F}_p[\mathcal{V}]$ is a non-zero polynomial with degree not more than $(p^{(m+m')\eta}-1)[(N-1)Kd(2m'+m)+2]$. By Schwartz-Zippel Lemma, when all the variables $\mathcal{V}$ are assigned i.i.d. uniformly chosen values in ${\Bbb F}_q$, 
\begin{align}
	&{\sf Pr}\big( P\not=0 \big) \notag\\
	&\geq 1-\frac{(p^{(m+m')\eta}-1)[(N-1)Kd(2m'+m)+2]}{q} \\
	&= 1-\frac{(p^{(m+m')\eta}-1)[(N-1)Kd(2m'+m)+2]}{p^n} \\
	& \geq 1- Kd(2m'+m)\frac{N}{p^{n-(m+m')\eta}} \\
	& \to 1
\end{align}
as $n\to \infty$ if $\lim_{n\to \infty}\frac{N}{p^{n-(m+m')\eta}} = 0$.

\section{Proof of Theorem \ref{thm:GLCBC23}} \label{app:proof_generic_K23}
For compact notation, let us define,
\begin{align}
\gamma_1 &\triangleq \big(\min\{3(m+m')-d, m+m',d\}\big)^+ \label{eq:def_gamma_1} \\
\gamma_2 &\triangleq \big( \min\{2(m+m')-d,m+m',d\} \big)^+\label{eq:def_gamma_2} \\
\gamma_3 &\triangleq \big( \min\{3(m+m')-2d,m+m',d\} \big)^+\label{eq:def_gamma_3}
\end{align}
and define $E_n\triangleq{\bf C1}\land{\bf C2}\land\cdots\land{\bf C6}$ as the event that the following conditions hold. We will show that $E_n$ holds a.a.s. The values of $\Delta_g$ for $K=1,2,3$ then follow by evaluating the capacity expression from  \cite{Yao_Jafar_3LCBC} by applying conditions ${\bf C1}$ to ${\bf C6}$ for the symmetric LCBC with $K\leq 3$.

\begin{enumerate}[wide, labelindent=0em ,labelwidth=!, leftmargin =3em, style = sameline, label={\bf C\arabic*}]
	\item $\rk(\FqV_k') = \min\{m',d\}, \hspace{0.3cm}  \forall k\in[1:K];  $ \label{con:Vkp}
	\item $\rk(\FqU_k) = \min\{m+m',d\}, \hspace{0.3cm}  \forall k\in[1:K];$ \label{con:Uk}
	\item $\rk([\FqV_i',\FqU_{ij}]) = \min\{ m'+ \gamma_2,m+m',d\}, \hspace{0.3cm} \forall i\not=j, i,j\in [1:K], ~K\geq 2; $ \label{con:VipUij}
	\item $\rk([\FqV_k' , \FqU_{123}]) = \min\{m'+\gamma_3, m+m',d\}, \hspace{0.3cm} \forall k\in[1:3],~~ K= 3; $\label{con:VkpU123}
	\item $\rk([\FqV_i', \FqU_{ij}, \FqU_{ik}]) = \min\{ m'+2\gamma_2,m+m',d\}, \hspace{0.1cm}$ for distinct $ i,j,k\in [1:3],~~K=3;$ \label{con:VipUijUik}
	\item $\rk([\FqV_i',\FqU_{i(j,k)}]) = \min\{m'+\gamma_1,m+m',d\}, \hspace{0.1cm}$ for distinct $ i,j,k\in [1:3],~~K=3,$ \label{con:VipUi(jk)}
\end{enumerate}
where
\begin{align}
	& \FqU_{ij} \triangleq \FqU_{i} \cap \FqU_{j}, ~~~~~~~~\forall  i,j\in [1:K], i\not=j,\\
	& \FqU_{123} \triangleq \FqU_{1} \cap \FqU_{2} \cap \FqU_{3},~~~~~~ \mbox{if $K=3$}.
\end{align}
By Lemma 1.1.3(v) \cite{tao2012topics}, we then show that $E_n$ holds a.a.s. by showing that each of the conditions ${\bf C1}$ to ${\bf C6}$ holds a.a.s.
\subsection{Conditions $\ref{con:Vkp}, \ref{con:Uk}, \ref{con:VipUij}$ and $\ref{con:VipUi(jk)}$}
In Lemma \ref{lem:full_rank_matrix}, let $M'=[~]$, $M=\FqV_k', \FqU_k$, respectively. We obtain that Conditions \ref{con:Vkp} and \ref{con:Uk} hold a.a.s.  Then let $M = [\FqU_j,\FqV_i']$, $M = [\FqU_i,\FqU_j]$, respectively. We obtain that
\begin{align}
	&\rk([\FqU_j,\FqV_i']) \aaseq \min\{m+2m',d\}, \\
	&\rk([\FqU_i,\FqU_j]) \aaseq \min\{2(m+m'),d\}.
\end{align}
Since
\begin{align}
	& \rk([\FqU_i, \FqU_j])-\rk([\FqU_j, \FqV_i']) \notag \\
	&= \rk(\FqU_i) + \rk(\FqU_j) - \rk(\FqU_{ij}) - [\rk(\FqU_j)+\rk(\FqV_i')-\rk(\FqU_j \cap \FqV_i')] \\
	&= \rk(\FqU_i) + \rk(\FqU_j) - \rk(\FqU_{ij})  - [\rk(\FqU_j)+\rk(\FqV_i')-\rk(\FqU_j\cap \FqU_i \cap \FqV_i')] \\
	&= \rk(\FqU_i) - [\rk(\FqU_{ij}) + \rk(\FqV_i')-\rk(\FqU_{ij} \cap \FqV_i')] \\
	& = \rk(\FqU_i) - \rk([\FqV_i',\FqU_{ij}]),
\end{align}
it follows that
{\small
\begin{align}
	&\rk([\FqV_i',\FqU_{ij}]) \notag\\
	& \aaseq \min\{m+m',d\}-\min\{2(m+m'),d\}+\min\{m+2m',d\}\\
	& = \min\{m'+\gamma_2,m+m',d\}
\end{align}
}which proves the result for \ref{con:VipUij}.
Next, let $M = [\FqU_j,\FqU_k,\FqV_i']$, $M = [\FqU_i, \FqU_j,\FqU_k]$, respectively. We obtain that
\begin{align}
	&\rk([\FqU_j,\FqU_k,\FqV_i']) \aaseq \min\{2m+3m',d\},\\
	&\rk([\FqU_i,\FqU_j,\FqU_k]) \aaseq \min\{3(m+m'),d\}.
\end{align}
Since
\begin{align}
	& \rk([\FqU_i, \FqU_j,\FqU_k])-\rk([\FqU_j, \FqU_k, \FqV_i']) \notag \\
	&= \rk([\FqU_j,\FqU_k]) + \rk(\FqU_i) - \rk(\FqU_{i(j,k)}) - [\rk([\FqU_j,\FqU_k])+\rk(\FqV_i')-\rk([\FqU_j,\FqU_k] \cap \FqV_i')] \\
	&= \rk([\FqU_j,\FqU_k]) + \rk(\FqU_i) - \rk(\FqU_{i(j,k)}) -  [\rk([\FqU_j,\FqU_k])+\rk(\FqV_i') -\rk([\FqU_j,\FqU_k] \cap \FqU_i \cap  \FqV_i')] \\
	&= \rk(\FqU_i) - [\rk(\FqU_{i(j,k)}) + \rk(\FqV_i')-\rk(\FqU_{i(j,k)} \cap \FqV_i')] \\
	& = \rk(\FqU_i) - \rk([\FqV_i',\FqU_{i(j,k)}]),
\end{align}
it follows that
\begin{align}
	&\rk([\FqV_i',\FqU_{i(j,k)}]) \notag\\
	& \aaseq \min\{m+m',d\}-\min\{3(m+m'),d\}\notag\\
	&~~~~~~~~~~~~~~~~~~~~~~~~+\min\{2m+3m',d\}\\
	& = \min\{m'+\gamma_1,m+m',d\}
\end{align}
which proves the result for \ref{con:VipUi(jk)}.

\subsection{Condition $\ref{con:VkpU123}$}
To see that \ref{con:VkpU123} holds a.a.s., we need the following lemma.
\begin{lemma} \label{lem:full rank 2}
Let $A\in \mathbb{F}_{p^n}^{d\times \mu'}$, $B,C \in \mathbb{F}_{p^n}^{d\times \mu}$, such that $\mu'\leq \mu$. Denote
\begin{align}
	M = \begin{bmatrix}
		A & B & {\bf 0} \\
		A & {\bf 0} & C
	\end{bmatrix}.
\end{align}
If the elements of $A,B,C$ are chosen i.i.d uniform, then $M$ has full rank $\min\{2d, \mu'+2\mu\}$ a.a.s.
\end{lemma}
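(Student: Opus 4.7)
\textbf{Proof plan for Lemma \ref{lem:full rank 2}.}
The plan is to reduce $\rk(M)$ to a triple subspace intersection dimension and then iterate Lemma \ref{lem:full_rank_matrix}. Subtracting the first block row from the second preserves rank and gives
\begin{align*}
\rk(M) = \rk\begin{bmatrix} A & B & {\bf 0} \\ {\bf 0} & -B & C \end{bmatrix}.
\end{align*}
Viewing $\mathbb{F}_{p^n}^{2d} = \mathbb{F}_{p^n}^d \oplus \mathbb{F}_{p^n}^d$, the columns of this reduced matrix span $V_A = \{(x,0) : x\in\langle A\rangle\}$, $V_B = \{(y,-y) : y\in\langle B\rangle\}$, and $V_C = \{(0,z) : z\in\langle C\rangle\}$, of dimensions $\rk(A), \rk(B), \rk(C)$. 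A brief check shows $V_A \cap V_B = \{0\}$ and that the map $y\mapsto (0,-y)$ identifies $\langle A\rangle \cap \langle B\rangle \cap \langle C\rangle$ with $(V_A+V_B)\cap V_C$, giving the master identity
\begin{align*}
\rk(M) = \rk(A) + \rk(B) + \rk(C) - \dim\big(\langle A\rangle \cap \langle B\rangle \cap \langle C\rangle\big).
\end{align*}

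Next I would evaluate each term a.a.s.\ by iterating Lemma \ref{lem:full_rank_matrix}. Taking $M' = [~]$ yields $\rk(A) \aaseq \min\{\mu',d\}$ and $\rk(B) \aaseq \rk(C) \aaseq \min\{\mu,d\}$. Passing to a maximal full-column-rank sub-block of $A$ if $\mu' > d$, and applying the lemma with $M' = A$, $M = B$, gives $\rk([A,B]) \aaseq \min\{\rk(A)+\rk(B),d\}$, hence by Grassmann's formula $\dim(\langle A\rangle \cap \langle B\rangle) \aaseq \max\{0,\rk(A)+\rk(B)-d\}$. Letting $D$ denote a basis matrix for this intersection and applying the lemma once more with $M'=D$, $M=C$ --- valid because $C$ is independent of $(A,B)$ --- produces $\dim(\langle A\rangle \cap \langle B\rangle \cap \langle C\rangle) \aaseq \max\{0,\dim\langle D\rangle + \rk(C) - d\}$, which collapses to $\max\{0,\rk(A)+2\rk(B)-2d\}$.

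Plugging into the master identity gives $\rk(M) \aaseq \min\{\rk(A)+2\rk(B),\,2d\}$, which equals $\min\{\mu'+2\mu,\,2d\}$ in every case --- the hypothesis $\mu'\leq\mu$ rules out the only combination, $\mu' > d \geq \mu$, in which $\min\{\mu',d\}+2\min\{\mu,d\}$ could fail to equal $\min\{\mu'+2\mu,\,2d\}$. The step I expect to be the main obstacle is the iterated use of Lemma \ref{lem:full_rank_matrix}, because the intersection basis $D$ is a measurable function of $A$ and $B$ rather than a fixed matrix. The standard resolution is to first condition on the a.a.s.\ events that $\rk(A)$ and $\dim(\langle A\rangle\cap\langle B\rangle)$ take their generic values; conditional on these, $D$ is fixed and the lemma applies to $(D,C)$ thanks to $C$'s independence from $(A,B)$. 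Since each conditioning event holds with probability tending to one, so does their conjunction, yielding the unconditional a.a.s.\ conclusion.
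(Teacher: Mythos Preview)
Your proof is correct and takes a genuinely different route from the paper's. The paper proceeds by direct case analysis: for each of the three regimes $2d\ge\mu'+2\mu$, $2\mu\le 2d<\mu'+2\mu$, and $d<\mu$, it exhibits an explicit realization of $(A,B,C)$ for which $M$ attains the target rank, and then invokes Schwartz--Zippel on the resulting determinant polynomial. Your argument instead derives the exact identity $\rk(M)=\rk(A)+\rk(B)+\rk(C)-\dim(\langle A\rangle\cap\langle B\rangle\cap\langle C\rangle)$ and reduces everything to repeated applications of Lemma~\ref{lem:full_rank_matrix}. Amusingly, this is precisely the Tian identity the paper cites \emph{immediately after} the lemma when applying it to compute $\rk(\FqU_{123})$, so your approach unifies the lemma and its application into a single mechanism.

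What your approach buys is economy: no new explicit constructions, and the lemma becomes essentially a corollary of Lemma~\ref{lem:full_rank_matrix}. What the paper's approach buys is self-containment: it avoids the conditioning subtleties you correctly flag in your final paragraph. Your handling of that point is sound---the Schwartz--Zippel bound $1-d/p^n$ in the proof of Lemma~\ref{lem:full_rank_matrix} is uniform over all full-column-rank $M'$, so conditioning on the a.a.s.\ events that fix $\rk(A)$ and then $D$ is legitimate, and the independence of $C$ from $(A,B)$ closes the argument. One small clarification you might add: when you write $\rk([A,B])\aaseq\min\{\rk(A)+\rk(B),d\}$, Lemma~\ref{lem:full_rank_matrix} literally gives $\min\{\rk(A)+\mu,d\}$; the two coincide because $\min\{\mu,d\}=\rk(B)$ a.a.s.\ and $\min\{d,a+\mu\}=\min\{d,a+\min\{\mu,d\}\}$ in all cases, but making this explicit would tighten the write-up.
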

\begin{proof}
Consider the following cases.
\begin{enumerate}
	\item If $2d \geq \mu'+2\mu$, which implies $\mu'\leq \mu \leq d$ and thus $\mu'+\mu\leq 2d$. Let $P_1 = \det([M,Z])$, where $Z\in {\Bbb F}_q^{2d\times (2d-(\mu'+2\mu))}$, be a polynomial in the elements of $A,B,C,Z$. To verify that it is not the zero polynomial, consider the following realizations of $A,B,C,Z$ for which $P_1\neq 0$. Let $A = {\bf I}^{d\times d}_{[1:\mu']}$, $B = C = {\bf I}^{d\times d}_{[\mu'+1:\mu]}$. Let $v_a\in {\Bbb F}_q^{\mu' \times 1}$, $v_b,v_c\in {\Bbb F}_q^{\mu \times 1}$. Then $M[v_a,v_b,v_c]^T=\mathbf{0}\implies ([A,B][v_a,v_b]^T=\mathbf{0}) \land ([A,C][v_a,v_c]^T=\mathbf{0}) \implies v_a=v_b=v_c = \mathbf{0}$. Therefore, $M$ has independent columns. Let $Z=({\bf I}^{d\times d}|M) \in {\Bbb F}_q^{2d \times [2d-(\mu'+2\mu)]}$, so that $[M,Z]$ has full rank, which yields a non-zero evaluation for $P_1$. Now, since $P_1$ is not the zero-polynomial, if the elements of $A, B, C, Z$ are chosen i.i.d uniform, then by Schwartz-Zippel Lemma, we obtain that as $n \rightarrow\infty$, the evaluation of $P_1$ is almost surely non-zero, which implies that $\rk(M) \aaseq \mu'+2\mu$.
	\item If $2\mu\leq 2d < \mu'+2\mu$, then we have $2\mu-d>d-\mu'\geq d-\mu \geq 0$. Let $P_2 = \det([M^T,Z^T]^T)$, where $Z\in {\Bbb F}_q^{(\mu'+2\mu-2d)\times (\mu'+2\mu)}$. Let $I_1 = {\bf I}^{d\times d}_{[1:2\mu-d]}$, $I_2 = {\bf I}^{d\times d}_{[2\mu-d+1:\mu]}$, $I_3 = {\bf I}^{d\times d}_{[\mu+1:d]}$, $B = [I_1,I_2], C = [I_1,I_3], A_0 = [I_2,I_3]$. Then we have $\langle A_0 \cap B \cap C \rangle  = \{\mathbf{0}\}$, which implies that the following matrix has full rank.
		\begin{align}
			M_0^{2d\times 2d} = \begin{bmatrix}
				A_0 & B & {\bf 0}\\
				A_0 & {\bf 0} & C
			\end{bmatrix}.
		\end{align}
		To see this, let $v_a\in {\Bbb F}_q^{(2d-2\mu) \times 1}$, $v_b,v_c\in {\Bbb F}_q^{\mu \times 1}$. Then $M_0v = M_0[v_a,v_b,v_c]^T = 0\implies A_0v_a=-Bv_b = -Cv_c \in \langle A_0 \cap B \cap C \rangle$. Since $A_0,B,C$ have only trivial intersection, the only solution for $v$ is $0$. Letting $A = [{\bf 0}^{d\times (\mu'+2\mu-2d)}, A_0]$, we obtain that $M$ has $2d$ linearly independent rows. Let $Z^T = ({\bf I}^{d\times d}|M^T)\in {\Bbb F}_q^{(\mu'+2\mu-2d)\times (\mu'+2\mu)}$, which is constituted by $(\mu'+2\mu-2d)$ rows of ${\bf I}^{(\mu'+2\mu)\times (\mu'+2\mu)}$, so that $\big[M^T,Z^T\big]^T$ has full rank. Therefore, $P_2$ is not the zero polynomial. By Schwartz-Zippel Lemma, we obtain that for i.i.d. uniform $A,B,C,Z$, as $n\rightarrow\infty$, $P_2$ will evaluate to a non-zero value almost surely, which implies that $\rk(M) \aaseq 2d$.
				\item If $d<\mu$, then by Lemma \ref{lem:full_rank_matrix}, we have that $\rk(B)=\rk(C) = d \implies \rk(M)\geq 2d$ holds asymptotically almost surely. Since $M$ has $2d$ rows, we conclude that $\rk(M) \aaseq 2d$.
\end{enumerate}
\end{proof}

In Lemma \ref{lem:full rank 2} let
\begin{align}
	M = \begin{bmatrix}
		\FqU_1 & \FqU_2 & {\bf 0} \\
		\FqU_1 & {\bf 0} & \FqU_3
	\end{bmatrix}.
\end{align}
We obtain that $\rk(M) \aaseq \min\{3(m+m'),2d\}$. It then follows from  \cite{tian2002dimension} that,
\begin{align}
	&\rk(\FqU_{123}) = \rk(\FqU_1 \cap \FqU_2 \cap \FqU_3) \\
	&=\rk(\FqU_1) + \rk(\FqU_2) + \rk(\FqU_3) - \rk(M) \\
	&\aaseq 3\min\{(m+m'),d\} - \min\{3(m+m'),2d\}\\
	&= \gamma_3,
\end{align}
Then applying Lemma \ref{lem:full rank 2} to
\begin{align}
	M = \begin{bmatrix}
		\FqV_1' & \FqU_2 & {\bf 0} \\
		\FqV_1' & {\bf 0} & \FqU_3
	\end{bmatrix},
\end{align}
we obtain that $\rk(M') \aaseq \min\{2m+3m',2d\}$. By \cite{tian2002dimension},
\begin{align}
	&\rk(\FqV_1' \cap \FqU_{123}) = \rk(\FqV_1' \cap \FqU_2 \cap \FqU_3) \\
	&=\rk(\FqV_1') + \rk(\FqU_2) + \rk(\FqU_3) - \rk(M')
\end{align}
Therefore,
\begin{align}
	&\rk([\FqV_1' , \FqU_{123}])\\
	&= \rk(\FqV_1') + \rk(\FqU_{123}) - \rk(\FqV_1' \cap \FqU_{123})\\
	&= \rk(\FqU_{123}) - \rk(\FqU_2) - \rk(\FqU_3) + \rk(M')\\
	&\aaseq \gamma_3 - 2\min\{m+m',d\} + \min\{2m+3m',2d\}\\
	& = \min\{m'+\gamma_3,m+m',d\}
\end{align}

\subsection{Condition $\ref{con:VipUijUik}$}

Finally let us prove for \ref{con:VipUijUik}. In Lemma \ref{lem:full_rank_matrix}, let $M = [\FqU_i,\FqU_j]$. We have
\begin{align} \label{eq:rkUiUj}
	\rk([\FqU_i,\FqU_j]) \aaseq \min\{2(m+m'),d\},
\end{align}
By the result for \ref{con:Uk} and (\ref{eq:rkUiUj}), we have that for distinct $i,j\in[1:3]$,
\begin{align} \label{eq:rkUij}
	&\rk(\FqU_{ij}) = \rk(\FqU_{i})+ \rk(\FqU_{j}) - \rk([\FqU_{i}, \FqU_{j}]) \\
	& \aaseq 2\min\{m+m',d\}-\min\{2(m+m'),d\}\\
	& = \big( \min\{2(m+m')-d,m+m',d\} \big)^+ \\
	& = \gamma_2
\end{align}
To prove that \ref{con:VipUijUik} holds asymptotically almost surely, due to symmetry, it suffices to prove for $i=1,j=2,k=3$. Let us consider the following cases. 

If $\gamma_2=0$, then we  almost surely have $\rk([\FqV_{1}' , \FqU_{12} , \FqU_{13}]) = \rk(\FqV_{1}')=\min\{m',d\} = \min\{m'+2\gamma_2,m+m',d\}$, as desired.
Otherwise, let us consider two sub-cases.

\begin{enumerate}
	\item $\gamma_2 = \min\{m+m',d\} > 0$. By \ref{con:Uk},
\begin{align}
	\rk([\FqV_1',\FqU_{12},\FqU_{13}]) \leq \rk(\FqU_1) \aaseq \min\{m+m',d\}
\end{align}
On the other hand, by \ref{con:VipUij}, 
\begin{align}
	\rk([\FqV_1',\FqU_{12},\FqU_{13}]) &\geq \rk([\FqV_1',\FqU_{12}]) \notag\\
	&\aaseq \min\{m+\gamma_2,m+m',d\}\\
	& = \min\{m+m',d\}
\end{align}
This implies that 
\begin{align}
	\rk([\FqV_1',\FqU_{12},\FqU_{13}]) &\aaseq \min\{m+m',d\} \notag\\
	&= \min\{m'+2\gamma_2,m+m',d\} 
\end{align}
as desired.
\item $\gamma_2 = 2(m+m')-d > 0$. This implies that $m+m'\leq d < 2(m+m')$.
Denote $A = \FqU_1$, $B = \FqU_2$, $C = \FqU_3$. Then $\FqV_1' = A_{[1:m']}$. Denote $a=d-(m+m')$, and let
\begin{align}
	D& = [A,{\bf 0}^{d\times a}][A,B_{[1:a]}]^* B_{[a+1:m+m']}, \\
	E& = [A,{\bf 0}^{d\times a}][A,C_{[1:a]}]^* C_{[a+1:m+m']}.
\end{align}
Recall that $M^*$ denotes the adjoint matrix of the square matrix $M$.
We claim that $\langle D \rangle \subset \langle \FqU_{12} \rangle$ and $\langle E \rangle \subset \langle \FqU_{13} \rangle$. It is obvious that $\langle D \rangle \subset \langle\FqU_{1}\rangle$ and $\langle E \rangle \subset \langle\FqU_{1}\rangle$. To see that $\langle D \rangle \subset \langle\FqU_{2}\rangle$ and $\langle E \rangle \subset \langle\FqU_{3}\rangle$, note that
\begin{align}
	&D+[{\bf 0}^{d\times (m+m')},B_{[1:a]}][A,B_{[1:a]}]^* B_{[a+1:m+m']}\\
	&=[A,B_{[1:a]}][A,B_{[1:a]}]^* B_{[a+1:m+m']}\\
	&=\det([A,B_{[1:a]}]) B_{[a+1:m+m']},
\end{align}
where we used the fact that for any square matrix $M$, the product of $M$ and its adjoint $M^*$ is equal to the product of the determinant of $M$ and the identity matrix. Thus, every column of $D$ is a linear combination of the columns of $B=\FqU_2$, which implies that $\langle D\rangle \subset \langle\FqU_{2}\rangle$. Similarly, 
\begin{align}
	&E+[{\bf 0}^{d\times (m+m')},C_{[1:a]}][A,C_{[1:a]}]^* C_{[a+1:m+m']}\\
	&=\det([A,C_{[1:a]}]) C_{[a+1:m+m']},
\end{align}
which implies that $\langle E \rangle \subset \langle\FqU_{3}\rangle$. 

Let us now show that $\rk([A_{[1:m']},D,E]) \geq \min\{m'+2\gamma_2, m+m'\}$ holds asymptotically almost surely. Denote $b = \min\{m,\gamma_2\}$ and $c = \min\{m,2\gamma_2\}$. Let $Z \in {\Bbb F}_q^{d\times (d-m'-c)}$.
The following determinant is a polynomial in the elements of $(A,B,C,Z)$. 
\begin{align}
	P = \det ([A_{[1:m']},D_{[1:b]},E_{[1:(c-b)]}, Z]).
\end{align}
To prove that this is  not the zero polynomial, let
\begin{align}
	I_1& = {\bf I}^{d\times d}_{[1:m']}, &&I_2 = {\bf I}^{d\times d}_{[m'+1:m'+b]}, \notag \\
	I_3& = {\bf I}^{d\times d}_{[m'+b+1:m'+c]}, &&I_4 = {\bf I}^{d\times d}_{[m'+c+1:m'+m]}, \notag\\
	I_5& = {\bf I}^{d\times d}_{[m+m'+1:d]},
\end{align}
and then consider the following evaluation,
\begin{align}
&A = [I_1,I_2,I_3,I_4] \implies A' = I_1, \\
&B = [I_5, I_2, {\bf 0}^{d\times (\gamma_2-b)}],\\
&C = [I_5, I_3, {\bf 0}^{d\times (\gamma_2-c+b)}],\\
&Z = [I_4,I_5].
\end{align}
Note that $[A,B_{[1:a]}]=[A,C_{[1:a]}] = [I_1,I_2,I_3,I_4,I_5] ={\bf I}^{d\times d}$. Thus,
\begin{align}
	D&=[A,\mathbf{0}]B_{[a+1:m+m']} \\
	&= [I_1,I_2,I_3,I_4,\mathbf{0}^{d\times a}][I_2,{\bf 0}^{d\times (\gamma_2-b)}]\\
	&= [I_2,{\bf 0}^{d\times (\gamma_2-b)}],
\end{align}
which implies that $D_{[1:b]} = I_2.$ Similarly, 
\begin{align}
	E&= [A,\mathbf{0}]C_{[a+1:m+m']} \\
	&= [I_1,I_2,I_3,I_4,\mathbf{0}][I_3, {\bf 0}^{d\times (\gamma_2-c+b)}]\\
	&= [I_3,{\bf 0}^{d\times (\gamma_2-c+b)}],
\end{align}
which implies that $E_{[1:(c-b)]} = I_3$.

Therefore $P=\det([I_1,I_2,I_3,I_4,I_5])=\det({\bf I}^{d\times d})=1\not=0$. Since $P$ is not the zero polynomial, by Schwartz-Zippel Lemma, we obtain that $\rk([A',D,E]) \geq m'+b+(c-b) = m'+c = \min\{m'+2\gamma_2,m+m'\}$  holds asymptotically almost surely. This implies that $\rk([\FqV_1' , \FqU_{12} , \FqU_{13}]) \aasgeq \min\{m'+2\gamma_2,m+m'\}$. Since $\rk(\FqV_1') \aaseq m'$, $\rk(\FqU_{12}) = \rk(\FqU_{13}) \aaseq \gamma_2$, from the result for \ref{con:Vkp} and (\ref{eq:rkUij}) and note that $\langle [\FqV_1' , \FqU_{12} , \FqU_{13}] \rangle \subset \langle \FqU_1 \rangle$, we conclude that $\rk([\FqV_1' , \FqU_{12}, \FqU_{13}]) \aaseq \min\{m'+2\gamma_2,m+m'\}=  \min\{m'+2\gamma_2,m+m',d\}$.
\end{enumerate}
Therefore, ${\bf Ci}, {\bf i}\in [6]$ holds asymptotically almost surely. We conclude that $E_n$ holds asymptotically almost surely. The proof is completed by evaluating (15) of \cite{Yao_Jafar_3LCBC} with conditions ${\bf C1}$ to ${\bf C6}$ to get $\Delta^*_g$ for the symmetric GLCBC with $K\leq 3$.

\section{Proof of Lemma \ref{lem:full_rank_matrix}}\label{sec:proof:lem:full_rank_matrix}
	If $d \geq \mu'+\mu$, let $Z\in {\Bbb F}_{p^n}^{d\times (d-\mu'-\mu)}$. Then $P = \det([M',M,Z])$ is a non-zero polynomial in the elements of $M$ and $Z$. To see this, let $[M,Z] = {\bf I}^{d\times d}|M'$, which will then yield that $\det ([M',M,Z]) \not=0$ since $\langle [M', ({\bf I}^{d\times d}|M')] \rangle = \langle {\bf I}^{d\times d} \rangle$. By Schwartz-Zippel Lemma, if the elements of $M$ and $Z$ are chosen i.i.d uniform, $\mbox{Pr}(P\not=0) \geq 1-\frac{\mbox{degree of }P}{p^n}\geq 1-\frac{d}{p^n}\to 1$ as $n\rightarrow\infty$, which implies that the probability that $[M',M]$ has full rank $\mu'+\mu$ goes to $1$ as $n\rightarrow\infty$. If $d<\mu'+\mu$, denote by $M_1$ the first $d-\mu'$ columns of $M$. It suffices to show that $\rk([M',M_1]) \aaseq d$. Note that $P = \det([M',M_1])$ is a non-zero polynomial in the elements of $M_1$, and thus $M$. To see this, let $M_1 = {\bf I}^{d\times d}|M'$, which will then similarly yield that $\det ([M',M_1]) \not=0$. By Schwartz-Zippel Lemma, if the elements of $M_1$ are chosen i.i.d uniform, $\mbox{Pr}(P\not=0) \geq 1-\frac{\mbox{degree of }P}{p^n}\geq 1-\frac{d}{p^n}\to 1$ as $n\rightarrow\infty$, which implies that $\rk([M',M_1]) \aaseq d$ as desired.$\hfill\square$

\bibliographystyle{IEEEtran}
\bibliography{../../KLCBC.bib}

\begin{thebibliography}{10}
\providecommand{\url}[1]{#1}
\csname url@samestyle\endcsname
\providecommand{\newblock}{\relax}
\providecommand{\bibinfo}[2]{#2}
\providecommand{\BIBentrySTDinterwordspacing}{\spaceskip=0pt\relax}
\providecommand{\BIBentryALTinterwordstretchfactor}{4}
\providecommand{\BIBentryALTinterwordspacing}{\spaceskip=\fontdimen2\font plus
\BIBentryALTinterwordstretchfactor\fontdimen3\font minus \fontdimen4\font\relax}
\providecommand{\BIBforeignlanguage}[2]{{%
\expandafter\ifx\csname l@#1\endcsname\relax
\typeout{** WARNING: IEEEtran.bst: No hyphenation pattern has been}%
\typeout{** loaded for the language `#1'. Using the pattern for}%
\typeout{** the default language instead.}%
\else
\language=\csname l@#1\endcsname
\fi
#2}}
\providecommand{\BIBdecl}{\relax}
\BIBdecl

\bibitem{Samsung_6G}
\BIBentryALTinterwordspacing
S.~Research, ``The next hyper --- connected experience for all,'' \emph{White Paper}, July 2020. [Online]. Available: \url{{https://cdn.codeground.org/nsr/downloads/researchareas/6GVision.pdf}}
\BIBentrySTDinterwordspacing

\bibitem{Shannon56}
C.~Shannon, ``The zero error capacity of a noisy channel,'' \emph{IRE Trans. Inf. Theory}, vol.~2, no.~3, pp. 8--19, 1956.

\bibitem{Korner_entropy}
J.~Korner, ``Coding of an information source having ambiguous alphabet and the entropy of graphs,'' \emph{6th Prague Conf. on Inf. Theory}, pp. 411--425, 1973.

\bibitem{Witsenhausen}
H.~Witsenhausen, ``The zero-error side information problem and chromatic numbers (corresp.),'' \emph{IEEE Trans. Inf. Theory}, vol.~22, no.~5, pp. 592--593, 1976.

\bibitem{Orlitsky_Roche}
A.~Orlitsky and J.~Roche, ``Coding for computing,'' \emph{IEEE Trans. Inf. Theory}, vol.~47, no.~3, pp. 903--917, 2001.

\bibitem{Slepian_Wolf}
D.~Slepian and J.~Wolf, ``Noiseless coding of correlated information sources,'' \emph{IEEE Trans. Inf. Theory}, vol.~19, no.~4, pp. 471--480, 1973.

\bibitem{Doshi_Shah_Medard_Effros}
V.~Doshi, D.~Shah, M.~Médard, and M.~Effros, ``Functional compression through graph coloring,'' \emph{IEEE Trans. Inf. Theory}, vol.~56, no.~8, pp. 3901--3917, 2010.

\bibitem{Feizi_Medard}
S.~Feizi and M.~Médard, ``On network functional compression,'' \emph{IEEE Trans. Inf. Theory}, vol.~60, no.~9, pp. 5387--5401, 2014.

\bibitem{Choi_Pradhan}
S.~Choi and S.~Pradhan, ``Transmission of correlated messages over the broadcast channel,'' in \emph{Proc. Conf. on Inf. Sciences and Systems, The Johns Hopkins University}.\hskip 1em plus 0.5em minus 0.4em\relax IEEE, March 2005.

\bibitem{Han_Costa}
T.~Han and M.~Costa, ``Broadcast channels with arbitrarily correlated sources,'' \emph{IEEE Trans. Inf. Theory}, vol.~33, no.~5, pp. 641--650, 1987.

\bibitem{Korner_Marton_sum}
J.~Korner and K.~Marton, ``How to encode the modulo-two sum of binary sources (corresp.),'' \emph{IEEE Trans. Inf. Theory}, vol.~25, no.~2, pp. 219--221, 1979.

\bibitem{Rai_Dey}
B.~K. Rai and B.~K. Dey, ``On network coding for sum-networks,'' \emph{IEEE Trans. Inf. Theory}, vol.~58, no.~1, pp. 50--63, 2012.

\bibitem{Tripathy_Ramamoorthy}
A.~Tripathy and A.~Ramamoorthy, ``Sum-networks from incidence structures: Construction and capacity analysis,'' \emph{IEEE Trans. Inf. Theory}, vol.~64, no.~5, pp. 3461--3480, 2018.

\bibitem{Huang_Tan_Yang_Guang}
C.~Huang, Z.~Tan, S.~Yang, and X.~Guang, ``Comments on cut-set bounds on network function computation,'' \emph{IEEE Trans. Inf. Theory}, vol.~64, no.~9, pp. 6454--6459, 2018.

\bibitem{Guang_Yeung_Yang_Li}
X.~Guang, R.~W. Yeung, S.~Yang, and C.~Li, ``Improved upper bound on the network function computing capacity,'' \emph{IEEE Trans. Inf. Theory}, vol.~65, no.~6, pp. 3790--3811, Jun. 2019.

\bibitem{Nazer_Gastpar_Compute}
B.~Nazer and M.~Gastpar, ``Compute-and-forward: Harnessing interference through structured codes,'' \emph{IEEE Trans. Inf. Theory}, vol.~57, no.~10, pp. 6463--6486, 2011.

\bibitem{Federated_Learning}
\BIBentryALTinterwordspacing
S.~Kadhe, N.~Rajaraman, O.~O. Koyluoglu, and K.~Ramchandran, ``Fastsecagg: Scalable secure aggregation for privacy-preserving federated learning,'' \emph{ArXiv:2009.11248}, Sep. 2020. [Online]. Available: \url{https://arxiv.org/abs/2009.11248}
\BIBentrySTDinterwordspacing

\bibitem{Sun_Jafar_PC}
H.~{Sun} and S.~A. {Jafar}, ``The capacity of private computation,'' \emph{IEEE Trans. Inf. Theory}, vol.~65, no.~6, pp. 3880--3897, June 2019.

\bibitem{Heidarzadeh_Sprintson_Computation}
A.~Heidarzadeh and A.~Sprintson, ``Private computation with side information: The single-server case,'' in \emph{Proc. IEEE Int. Symp. Inf. Theory (ISIT)}, 2019, pp. 1657--1661.

\bibitem{Cadambe_Grover_Tutorial}
V.~Cadambe and P.~Grover, ``Codes for distributed computing: A tutorial,'' \emph{IEEE ITSOC Newsletter}, vol.~67, no.~4, pp. 3--15, December 2017.

\bibitem{Yu_Lagrange}
Q.~Yu, S.~Li, N.~Raviv, S.~M.~M. Kalan, M.~Soltanolkotabi, and S.~Avestimehr, ``Lagrange coded computing: Optimal design for resiliency, security and privacy,'' \emph{ArXiv:1806.00939}, 2018.

\bibitem{CodedComputing_Survey}
J.~S. Ng, W.~Y.~B. Lim, N.~C. Luong, Z.~Xiong, A.~Asheralieva, D.~Niyato, C.~Leung, and C.~Miao, ``A comprehensive survey on coded distributed computing: Fundamentals, challenges, and networking applications,'' \emph{IEEE Commun. Surveys and Tutorials}, vol.~23, no.~3, pp. 1800--1837, June 2021.

\bibitem{CompCommTradeoff}
S.~Li, M.~A. Maddah-Ali, Q.~Yu, and A.~S. Avestimehr, ``A fundamental tradeoff between computation and communication in distributed computing,'' \emph{IEEE Trans. Inf. Theory}, vol.~64, no.~1, pp. 109--128, Jan 2018.

\bibitem{Yu_Maddah-Ali_Avestimehr}
Q.~Yu, M.~A. Maddah-Ali, and A.~S. Avestimehr, ``Straggler mitigation in distributed matrix multiplication: Fundamental limits and optimal coding,'' \emph{IEEE Trans. Inf. Theory}, vol.~66, no.~3, pp. 1920--1933, 2020.

\bibitem{Dutta_Fahim_Haddadpour}
S.~Dutta, M.~Fahim, F.~Haddadpour, H.~Jeong, V.~Cadambe, and P.~Grover, ``On the optimal recovery threshold of coded matrix multiplication,'' \emph{IEEE Trans. Inf. Theory}, vol.~66, no.~1, pp. 278--301, 2020.

\bibitem{GPolyDot}
S.~Dutta, Z.~Bai, H.~Jeong, T.~Low, and P.~Grover, ``A unified coded deep neural network training strategy based on generalized polydot codes for matrix multiplication,'' \emph{ArXiv:1811.10751}, Nov. 2018.

\bibitem{Jia_Jafar_SDMM}
Z.~Jia and S.~A. Jafar, ``On the capacity of secure distributed batch matrix multiplication,'' \emph{IEEE Trans. Inf. Theory}, vol.~67, no.~11, pp. 7420--7437, 2021.

\bibitem{Sun_Jafar_CBC}
H.~Sun and S.~Jafar, ``On the capacity of computation broadcast,'' \emph{IEEE Trans. Inf. Theory}, vol.~66, no.~6, pp. 3417--3434, Jun. 2020.

\bibitem{Birk_Kol_SI}
Z.~Bar-Yossef, Y.~Birk, T.~S. Jayram, and T.~Kol, ``Index coding with side information,'' \emph{IEEE Trans. Inf. Theory}, vol.~57, no.~3, pp. 1479--1494, 2011.

\bibitem{Maleki_Cadambe_Jafar}
H.~Maleki, V.~Cadambe, and S.~Jafar, ``Index coding -- an interference alignment perspective,'' \emph{IEEE Trans. Inf. Theory}, vol.~60, no.~9, pp. 5402--5432, Sep. 2014.

\bibitem{Young_Han_FnT}
\BIBentryALTinterwordspacing
F.~Arbabjolfaei and Y.-H. Kim, ``On the capacity region for index coding,'' \emph{Foundations and Trends in Commun. and Inf. Theory}, vol.~14, no. 3-4, pp. 163--346, 2018. [Online]. Available: \url{http://dx.doi.org/10.1561/0100000094}
\BIBentrySTDinterwordspacing

\bibitem{Yu_Maddah_Avestimehr_exact}
Q.~Yu, M.~A. Maddah-Ali, and A.~S. Avestimehr, ``The exact rate-memory tradeoff for caching with uncoded prefetching,'' \emph{IEEE Trans. Inf. Theory}, vol.~66, no.~2, pp. 1281--1296, Feb 2018.

\bibitem{Maddah_Ali_Niesen}
M.~Maddah-Ali and U.~Niesen, ``Fundamental limits of caching,'' \emph{IEEE Trans. Inf. Theory}, vol.~60, no.~5, pp. 2856--2867, May 2014.

\bibitem{Wan_2020}
\BIBentryALTinterwordspacing
K.~Wan, H.~Sun, M.~Ji, D.~Tuninetti, and G.~Caire, ``Cache-aided general linear function retrieval,'' \emph{Entropy}, vol.~23, no.~1, 2021. [Online]. Available: \url{https://www.mdpi.com/1099-4300/23/1/25}
\BIBentrySTDinterwordspacing

\bibitem{Cuff_coordination}
T.~M.~C. Paul~Cuff, Haim H.~Permuter, ``Coordination capacity,'' \emph{IEEE Trans. Inf. Theory}, vol.~56, no.~9, pp. 4181--4206, 2010.

\bibitem{Fei_Chen_Wang_Jafar}
P.~Fei, Z.~Chen, Z.~Wang, and S.~A. Jafar, ``Communication-efficient clock synchronization,'' in \emph{Proc. IEEE Int. Conf. on Commun. (ICC)}, 2022, pp. 2592--2597.

\bibitem{Networked_VR}
\BIBentryALTinterwordspacing
J.~Ruan and D.~Xie, ``Networked {VR}: State of the art, solutions, and challenges,'' \emph{Electronics}, vol.~10, no.~2, 2021. [Online]. Available: \url{https://www.mdpi.com/2079-9292/10/2/166}
\BIBentrySTDinterwordspacing

\bibitem{Yao_Jafar_3LCBC}
\BIBentryALTinterwordspacing
Y.~Yao and S.~Jafar, ``The capacity of $3$ user linear computation broadcast,'' \emph{ArXiv:2206.10049}, June 2022. [Online]. Available: \url{https://arxiv.org/abs/2206.10049}
\BIBentrySTDinterwordspacing

\bibitem{Effros_Rouayheb_Langberg}
M.~Effros, S.~El~Rouayheb, and M.~Langberg, ``An equivalence between network coding and index coding,'' \emph{IEEE Trans. Inf. Theory}, vol.~61, no.~5, pp. 2478--2487, 2015.

\bibitem{Lubetzky_Nonlinear}
E.~Lubetzky and U.~Stav, ``Nonlinear index coding outperforming the linear optimum,'' \emph{IEEE Trans. Inf. Theory}, vol.~55, no.~8, pp. 3544--3551, 2009.

\bibitem{Rouayheb_Sprintson_Georghiades}
S.~Rouayheb, A.~Sprintson, and C.~Georghiades, ``On the index coding problem and its relation to network coding and matroid theory,'' \emph{IEEE Trans. Inf. Theory}, vol.~56, no.~7, pp. 3187--3195, Jul. 2010.

\bibitem{Sun_Jafar_nonshannon}
H.~Sun and S.~A. Jafar, ``Index coding capacity: How far can one go with only shannon inequalities?'' \emph{IEEE Trans. Inf. Theory}, vol.~61, no.~6, pp. 3041--3055, 2015.

\bibitem{Generic1}
I.~Kapovich, A.~Myasnikov, P.~Schupp, and V.~Shpilrain, ``Generic case complexity, decision problems in group theory and random walks,'' \emph{Journal of Algebra}, vol. 264, pp. 665--694, 2003.

\bibitem{Generic2}
A.~R. Meyer and M.~S. Paterson, ``With what frequency are apparently intractable problems difficult?'' \emph{M.I.T. Technical Report, MIT/LCS/TM-126}, Feb. 1979.

\bibitem{Cadambe_Jafar_int}
V.~Cadambe and S.~Jafar, ``Interference alignment and the degrees of freedom of the {$K$} user interference channel,'' \emph{IEEE Trans. Inf. Theory}, vol.~54, no.~8, pp. 3425--3441, Aug. 2008.

\bibitem{Motahari_Gharan_Khandani}
A.~Motahari, S.~Gharan, M.~Maddah-Ali, and A.~Khandani, ``Real interference alignment: Exploiting the potential of single antenna systems,'' \emph{IEEE Trans. Inf. Theory}, vol.~60, no.~8, pp. 4799--4810, August 2014.

\bibitem{Gul_Stotz_Jafar_Bolcskei_Shamai}
R.~Gül, D.~Stotz, S.~A. Jafar, H.~Bölcskei, and S.~Shamai~Shitz, ``Canonical conditions for {K/2} degrees of freedom,'' \emph{IEEE Trans. Inf. Theory}, vol.~68, no.~3, pp. 1716--1730, 2022.

\bibitem{Razaviyayn_Sanjabi_Luo}
M.~Razaviyayn, M.~Sanjabi, and Z.-Q. Luo, ``Linear transceiver design for interference alignment: Complexity and computation,'' \emph{IEEE Trans. Inf. Theory}, vol.~58, no.~5, pp. 2896--2910, 2012.

\bibitem{Wang_Gou_Jafar_subspace}
C.~Wang, T.~Gou, and S.~A. Jafar, ``Subspace alignment chains and the degrees of freedom of the three-user {MIMO} interference channel,'' \emph{IEEE Trans. Inf. Theory}, vol.~60, no.~5, pp. 2432--2479, 2014.

\bibitem{Genie_chains}
C.~Wang, H.~Sun, and S.~A. Jafar, ``Genie chains: Exploring outer bounds on the degrees of freedom of {MIMO} interference networks,'' \emph{IEEE Trans. Inf. Theory}, vol.~62, no.~10, pp. 5573--5602, 2016.

\bibitem{Tao_FS}
T.~Tao, ``Sumset and inverse sumset theory for shannon entropy,'' \emph{Combinatorics, Probability and Computing}, vol.~19, no.~4, pp. 603--639, 2010.

\bibitem{Kontoyiannis_Madiman}
M.~M. I.~Kontoyiannis, ``Sumset and inverse sumset inequalities for differential entropy and mutual information,'' \emph{IEEE Trans. Inf. Theory}, vol.~60, no.~8, pp. 4503--4514, Aug. 2014.

\bibitem{Chan_extremal}
Y.-C. Chan, J.~Wang, and S.~A. Jafar, ``Toward an extremal network theory -- robust {GDoF} gain of transmitter cooperation over {TIN},'' \emph{IEEE Trans. Inf. Theory}, vol.~66, no.~6, pp. 3827--3845, 2020.

\bibitem{tao2012topics}
T.~Tao, \emph{Topics in random matrix theory}.\hskip 1em plus 0.5em minus 0.4em\relax American Mathematical Soc., 2012, vol. 132.

\bibitem{Generic}
R.~Gilman, A.~G. Miasnikov, A.~Myasnikov, and A.~Ushakov, ``Report on generic case complexity,'' \emph{Herald of Omsk University, Special Issue}, pp. 103--110, 2007.

\bibitem{Sun_Gou_Jafar}
H.~Sun, T.~Gou, and S.~Jafar, ``Degrees of freedom of {MIMO X} networks: Spatial scale invariance, one-sided decomposability and linear feasibility,'' \emph{IEEE Trans. Inf. Theory}, vol.~59, no.~12, pp. "8377--8385", Dec. 2013.

\bibitem{Cadambe_Jafar_Maleki_Ramchandran_Suh}
V.~R. Cadambe, S.~A. Jafar, H.~Maleki, K.~Ramchandran, and C.~Suh, ``Asymptotic interference alignment for optimal repair of {MDS} codes in distributed storage,'' \emph{IEEE Trans. Inf. Theory}, vol.~59, no.~5, pp. 2974--2987, 2013.

\bibitem{PBNA}
C.~Meng, A.~K. Das, A.~Ramakrishnan, S.~Jafar, A.~Markopoulou, S.~Vishwanath \emph{et~al.}, ``Precoding-based network alignment for three unicast sessions,'' \emph{IEEE Trans. Inf. Theory}, vol.~61, no.~1, pp. 426--451, 2015.

\bibitem{Yao_Jafar_ISIT22}
Y.~Yao and S.~A. Jafar, ``Capacity of 3-user linear computation broadcast over {Fq} with 1{D} demand and side-information,'' in \emph{Proc. IEEE Int. Symp. Inf. Theory (ISIT)}, 2022, pp. 1151--1156.

\bibitem{Price_Precision}
J.~Wang, Z.~Jia, and S.~Jafar, ``Price of precision in coded distributed matrix multiplication: A dimensional analysis,'' in \emph{Proc. IEEE Information Theory Workshop (ITW)}, October 2021.

\bibitem{tian2002dimension}
Y.~Tian, ``The dimension of intersection of $ k $ subspaces,'' \emph{Missouri Journal of Mathematical Sciences}, vol.~14, no.~2, pp. 92--95, 2002.

\end{thebibliography}

\end{document}